\documentclass{article}

% if you need to pass options to natbib, use, e.g.:
%     \PassOptionsToPackage{numbers, compress}{natbib}
% before loading neurips_2020

% ready for submission
% \usepackage{neurips_2020}

% to compile a preprint version, e.g., for submission to arXiv, add add the
% [preprint] option:
%     \usepackage[preprint]{neurips_2020}

% to compile a camera-ready version, add the [final] option, e.g.:
%     \usepackage[final]{neurips_2020}

% to avoid loading the natbib package, add option nonatbib:
\usepackage[nonatbib]{neurips_2020}

\usepackage[utf8]{inputenc} % allow utf-8 input
\usepackage[T1]{fontenc}    % use 8-bit T1 fonts
\usepackage{hyperref}       % hyperlinks
\usepackage{url}            % simple URL typesetting
\usepackage{booktabs}       % professional-quality tables
\usepackage{amsfonts}       % blackboard math symbols
\usepackage{nicefrac}       % compact symbols for 1/2, etc.
\usepackage{microtype}      % microtypography
\usepackage{packages}
\usepackage[
backend=biber,
style=numeric,
citestyle=numeric
]{biblatex}

\addbibresource{references.bib}
\usepackage{commands}
\usepackage{mymacros}

\title{Min-max Optimization Beyond Fixed Choice Sets: An Economic Perspective}

% The \author macro works with any number of authors. There are two commands
% used to separate the names and addresses of multiple authors: \And and \AND.
%
% Using \And between authors leaves it to LaTeX to determine where to break the
% lines. Using \AND forces a line break at that point. So, if LaTeX puts 3 of 4
% authors names on the first line, and the last on the second line, try using
% \AND instead of \And before the third author name.

\author{%
  Denizalp Goktas\thanks{Use footnote for providing further information
    about author (webpage, alternative address)---\emph{not} for acknowledging
    funding agencies.} \\
  Department of Computer Science\\
  Brown University\\
  Providence, RI 02912 \\
  \texttt{denizalp\_goktas@brown.edu} \\
  % examples of more authors
   \And
  % Coauthor \\
  Amy Greenwald
  % Affiliation \\
  Department of Computer Science\\
  Brown University\\
  % Address \\
  Providence, RI 02912 \\
   \texttt{amy\_greenwald@brown.edu} \\
}

\begin{document}

\maketitle
\input{todo}
\begin{abstract}
    Min-max optimization problems (i.e., min-max games) have been attracting a great deal of attention because of their applicability to a wide range of machine learning problems.
    Although significant progress has been made recently, the literature to date has focused on games with independent action sets; little is known about solving games with dependent action sets, which can be interpreted as min-max Stackelberg games, i.e., sequential two-player zero-sum games. The canonical solution concept for min-max Stackelberg games is the Stackelberg equilibrium, whose existence we establish when the objective function is continuous and the constraints satisfy appropriate convexity conditions. %\amy{shouldn't we say why we re-establish this fact. what is the advantage of our new argument?}\deni{I don't think we re-establish it. I think we simply establish it. But you had suggested that someone should have had established it so I did not go in details. Imo, it should be ``establish''. }
    We then introduce two first-order methods that compute Stackelberg equilibria in a large class of convex-concave min-max Stackelberg games, and show that our methods converge in polynomial time. 
    Min-max Stackelberg games were first studied by Wald, under the posthumous name of Wald's maximin model, a variant of which is the main paradigm used in robust optimization, which means that our methods can likewise be used to solve many robust convex optimization problems.
    %\amy{what happened to the concave qualifier in Stackelberg games? why is that not needed here?}\deni{In robust optimization, the variable of interest is the decision variable, and the convexity assumption is w.r.t. the decision variable, i.e., that literature does not qualify the convexity properties of the ``robustness parameters''. However, when we say a large class of convex robust optimization problems, that includes the fact that we assume concavity in the robust optimization parameters!}
    We observe that the computation of competitive equilibria in homothetic Fisher markets also comprises a min-max Stackelberg game. Further, we demonstrate the efficacy and efficiency of our algorithms in practice by computing competitive equilibria in homothetic Fisher markets with varying utility structures. Our experiments suggest potential ways to extend our theoretical results, by demonstrating how different smoothness properties can affect the convergence rate of our algorithms.
\end{abstract}

\section{Introduction}
\label{sec:intro}

%Equilibrium computation in 
Min-max optimization problems have attracted a great deal of attention recently because of their applicability to a wide range of machine learning problems.
Examples of settings in which min-max optimization problems arise include, but are not limited to, reinforcement learning \cite{dai2018rl}, generative adversarial networks 
%(GANs) 
\cite{goodfellow2020generative,sanjabi2018convergence}, fairness in machine learning \cite{dai2019kernel, edwards2016censoring, madras2018learning, sattigeri2018fairness, xu2018fairgan}, adversarial learning \cite{sinha2020certifying}, generative adversarial imitation learning \cite{cai2019global, hamedani2018iteration}, and statistical learning (e.g., learning parameters of exponential families) \cite{dai2019kernel}.
%\samy{, and distributed non-convex optimization \cite{lu2019block}}{.} \amy{i propose that we drop that last application, since all the others are ML applications, and NeurIPS is an ML conference.}\deni{I have no issue dropping it but as a note this was included in a previosu NeurIPS paper.}
These applications often require solving a \mydef{constrained min-max optimization problem} 
(with independent constraint sets), i.e., $\min_{\outer \in \outerset} \max_{\inner \in \innerset} \obj(\outer, \inner)$.

A \mydef{convex-concave} constrained min-max optimization problem is one in which $\obj$ is convex in $\outer$ and concave in $\inner$.
In the special case of convex-concave objective functions, the seminal minimax theorem holds: i.e.,
$\min_{\outer \in \outerset} \max_{\inner \in \innerset} \obj(\outer, \inner) = \max_{\inner \in \innerset} \min_{\outer \in \outerset} \obj(\outer, \inner)$ \cite{neumann1928theorie}.
This theorem guarantees the existence of a \mydef{saddle point}, i.e., a point $(\outer^*, \inner^*) \in \outerset \times \innerset$ s.t.\ for all $\outer \in \outerset$ and $\inner \in \innerset$, $\obj(\outer^*, \inner) \leq \obj(\outer^*, \inner^*) \leq \obj(\outer, \inner^*)$.

As a saddle point is simultaneously a minimum of $\obj$ in the $\outer$-direction and a maximum of $\obj$ in the $\inner$-direction, we can interpret such optimization problems as simultaneous-move zero-sum games between an $\outer$- and $\inner$-player with respective action sets $\outerset$, $\innerset$, and payoff functions $- \obj$, $\obj$, in which case $\inner^*$ (resp.\ $\outer^*$) can be interpreted as a best-response of the $\outer$- (resp.\ $\inner$-) player to their opponent's action $\outer^*$ (resp.\ $\inner^*)$.
As such, the optimization problem is called \mydef{a convex-concave min-max (simultaneous-move) game},
% \amy{should we delete this footnote?} \footnote{Such a game is referred to as convex-concave, since the objective is convex-concave.} 
and any saddle point is also called a minimax point or a \mydef{Nash equilibrium}.

In this paper, we show that a competitive equilibrium in a Fisher market, the canonical solution of a well-studied market model in algorithmic game theory \cite{AGT-book}, can be understood as a solution to a convex-concave min-max optimization problem, albeit with \emph{dependent\/} constraint sets.
Formally, we define a \mydef{constrained min-max optimization problem with dependent constraint sets} as an optimization problem of the following form: $\min_{\outer \in \outerset} \max_{\inner \in \coupledset(\outer)} \obj(\outer, \inner)$, where the \mydef{objective} $\obj: \outerset \times \innerset \to \R$ is continuous, the \mydef{constraint set} $\outerset \subset \R^\outerdim$ is non-empty and compact, and the \mydef{constraint correspondence} $\coupledset: \outerset \rightrightarrows \innerset \subset \R^\innerdim$ is non-empty, compact-valued, and continuous (i.e., upper and lower hemicontinuous). 
As is standard in the literature (see, for instance, \citet{facchinei2009generalized}), we represent the constraint correspondence $\coupledset(\outer) \doteq \left\{ \inner \in \innerset \mid \constr (\outer, \inner) \ge \zeros \right\}$ by \mydef{coupling constraints} $\constr (\outer, \inner) \doteq \left(\constr[1](\outer, \inner), \hdots, \constr[\numconstrs](\outer, \inner) \right)^T$ with $\constr[\numconstr]: \R^\outerdim \times \R^\innerdim \to \R$, for all $\numconstr \in [\numconstrs]$, and often write the optimization problem as $\min_{\outer \in \outerset} \max_{\inner \in \innerset: \constr(\outer, \inner) \geq \zeros} \obj(\outer, \inner)$.

Although it is known that solutions to convex-concave min-max optimization problems with independent constraints exist when the objective $\obj$ is continuous and the constraint sets are non-empty and compact, existence of a solution in the dependent setting further requires the constraint correspondence to be continuous, which does not directly follow from the coupling constraints being continuous, but rather requires additional constraint qualifications set on the coupling constraints (see \Cref{stackleberg-existence} and \Cref{sec:assum_and_exist}).

Perhaps more importantly, while in the independent setting the minimax theorem \cite{neumann1928theorie} is guaranteed to hold when the objective is convex-concave and the constraint sets are convex, i.e., $\min_{\outer \in \outerset} \max_{\inner \in \innerset} \obj(\outer, \inner) =  \max_{\inner \in \innerset} \min_{\outer \in \outerset} \obj(\outer, \inner)$,
% \amy{need a new segue:}
% \samy{Certain other desirable properties of standard convex-concave min-max optimization problems with independent constraint sets, however, do not carry over to the more general dependent setting. First and foremost,}{???}
a minimax theorem does not hold in the dependent setting, i.e., $\min_{\outer \in \outerset} \max_{\inner \in \innerset: \constr(\outer, \inner) \geq \zeros} \obj(\outer, \inner) \neq  \max_{\inner \in \innerset} \min_{\outer \in \outerset: \constr(\outer, \inner) \geq \zeros} \obj(\outer, \inner)$, even when $\constr[\numconstr]$ is affine, for all $\numconstr \in [\numconstrs]$:

%\amy{the min-max problem w/ dependent constraints always has a solution (a Stackelberg eqm), BUT: w/o a min max thm, it does not make sense to interpret this solution as a GNE.}

\if 0
\amy{huh? why do you need a minimax theorem to interpret something as a pseudo-game? i know GNE exist in pseudo-games. but is that via a minimax theorem?}\deni{GNE exist without a minimax theorem holding, however GNE are not Stackelberg eqa.}\amy{agreed, but there has been no mention so far of Stackelberg eqa} \deni{If a minimax theorem holds, \samy{not only GNE exist but they also}{GNE} are Stackelberg eqa.}\amy{ok} \deni{So if a minimax theorem does not hold we cannot interpret the solutions of the min-max problem as GNE, since Stackelberg eqa. are not GNE. Appendix A provides intuition on this.}
\fi

%\amy{so are we saying that pseudo-games are not simultaneous move, because they do not have a minimax theorem? this seems odd. GNE is still a solution of what i still think of as a simultaneous-move pseudo-game.} \deni{No, what we are saying is that a minimax theorem does not hold, so the dependent min-max optimization cannot be interpreted as a simultaneous move-game. This sentence is not about pseudo-games, it is about dependent min-max optimization problems.}\amy{but in our current mental model, dependent min-max opt'n problems are pseudo-games! more to the point, general-sum normal-form games do not have a minimax theorem, and we still interpret them as simultaneous-move games.} \amy{i think i am getting your point, but leaving my comments in anyway, just for you to confirm that your point as written is the one you want to make!}
%Since normal-form games do not allow for dependencies between the players' \sdeni{strategy}{action} sets, normal-form games, and hence Nash equilibria, are an \sdeni{incomplete}{inadequate} model of such an optimization problem. Instead, one can consider pseudo-games, a generalization of normal-form games in which players' actions depend on the actions of other players, for which the corresponding solution concept is generalized Nash equilibrium.
% \samy{We refer the reader to \Cref{sec-app:GNE} for details on the connection between generalized Nash and Stackelberg equilibrium.}{}\amy{i don't think you need this sentence, and there has been no mention of Stackelberg yet, which makes it a bit confusing.}

\begin{example}
%[The min-max theorem fails in dependent \sdeni{strategy}{action} sets regimes]
\label{min-max-fail-ex}
Consider the following constrained min-max optimization problem with dependent constraint sets:
% \begin{align*}
%   \min_{\outer[ ] \in [0, 10]} \max_{\inner[ ] \in [0, 10] : \inner[ ] \in [\outer[ ], \outer[ ] + 1 ]} \outer[ ] + \inner[ ]
% \end{align*}
$\min_{\outer[ ] \in [-1, 1] } \max_{\inner[ ] \in [-1, 1] : \outer[ ] + \inner[ ] \leq 0} \outer[ ]^2 + \inner[ ] + 1$.
%The Stackelberg equilibrium of this game is 
The optimum is $\outer[ ]^{*} = \nicefrac{1}{2}, \inner[ ]^{*} = - \nicefrac{1}{2}$, with value $\nicefrac{3}{4}$.
Now, consider the same problem, with the order of the $\min$ and the $\max$ reversed:
$\max_{\inner[ ] \in [-1, 1]} \min_{\outer[ ] \in [-1, 1] : \outer[ ] + \inner[ ] \leq 0} \outer[ ]^2 + \inner[ ] + 1$.
%The Stackelberg equilibrium of this game is 
The optimum is now $\outer[ ]^{*} = -1, \inner[ ]^{*} = 1$, with value $3$.
\end{example}

Without a minimax theorem, a constrained min-max optimization problem in the dependent setting cannot be interpreted as as a \emph{simultaneous-move\/} (pseudo-)game, nor its solutions as (generalized) Nash equilibria:%
\footnote{Technically speaking, settings in which the players' actions are collectively constrained are not games but pseudo-games; see~\Cref{sec-app:GNE}.}
Instead, they are more appropriately viewed as sequential zero-sum, i.e., \mydef{min-max Stackelberg}, 
% \amy{not sure if we should be using zero-sum or min-max here? need to proofread more carefully to figure this out.}\deni{It should be zero-sum here. Zero-sum is the property, while min-max is the name of the game.} \amy{you contradict yourself. we are naming the game here.}
games where the $\outer$-player (or the leader) chooses $\outer \in \outerset$ before the $\inner$-player (or the follower) responds with their choice of $\hat{\inner} (\outer) \in \innerset$ s.t.\ $\constr(\outer, \hat{\inner} (\outer)) \geq \zeros$.
The relevant equilibrium concept is then a Stackelberg equilibrium \cite{stackelberg1934marktform},
%\footnote{\sdeni{}{In constrained convex-concave min-max optimization problems with independent \sdeni{feasible set}{constraint set}s, the notions of Nash and Stackelberg equilibria coincide. In the dependent setting, however, GNE and Stackelberg equilibria coincide, \emph{only when a minimax theorem holds}.}}
in which the $\outer$-player optimizes their choice assuming the $\inner$-player will best-respond: i.e., optimize their choice in turn.
We thus refer to constrained min-max optimization problems with dependent constraint sets as \mydef{min-max Stackelberg games}.

For such games, we define the \mydef{value function}%
\footnote{Note that this use of the term value function comes from economics, and is distinct from its use in reinforcement learning.}
$\val: \outerset \to \R$ as $\val(\outer) = \max_{\inner \in \innerset : \constr(\outer, \inner) \geq \zeros} \obj(\outer, \inner)$.
This function represents the $\outer$-player's loss, assuming the $\inner$-player chooses a feasible best-response, so it is the function the $\outer$-player seeks to minimize.
The $\inner$-player seeks only to maximize the objective function, given the action of the $\outer$-player, subject to the coupling constraints.

% \deni{IMPORTANT: Get back to definition of convex-concave}

% \amy{i'm not sure if what i added below helps. maybe you can rework a bit more?}\deni{Made the following small edit which I think makes everything clear now?}

In min-max simultaneous-move games, the assumption that the objective function is convex-concave implies that the function that the $\outer$-player (resp. $\inner$-player) seeks to optimize is convex (concave) in its action. 
% \samy{}{; the $\inner$-player's objective is concave by assumption, while the $\outer$-player's objective is convex, but as it is a loss function intended to be minimized, its negation is concave}.
Correspondingly, a \mydef{convex-concave min-max Stackelberg game} is more appropriately defined as one where $\val (\outer)$ is convex in $\outer$ and $\obj (\outer, \inner)$ is concave in $\inner$, for all $\outer \in \outerset$.%
\footnote{We present sufficient conditions for a min-max Stackelberg game to be convex-concave in \Cref{sec:assum_and_exist}.}
%\amy{presumably, this would also imply the same concavity in their own actions, from their own point of view.} \deni{concave in their own actions = $\val (\outer)$ is convex in $\outer$, and $\obj (\outer, \inner)$ is concave in $\inner$, for all $\outer \in \outerset$}
The first-order necessary and sufficient conditions for a tuple $(\outer^*, \inner^*) \in \outerset \times \innerset$ to be a Stackelberg equilibrium in a convex-concave min-max Stackelberg game are given by the KKT stationarity conditions for the two players' optimization problems, namely $\min_{\outer \in \outerset} \val (\outer)$ for the $\outer$-player, and $\max_{\inner \in \innerset : \constr (\outer^*, \inner) \geq \zeros} \obj (\outer^*, \inner)$ for the $\inner$-player.
% \samy{\emph{The algorithms developed in this paper are first-order methods, with the flavor of gradient descent/ascent, but which follow the gradients of the players' value functions.}}{}

In the independent constraint set---hereafter, action set---setting, Danskin's theorem~\cite{danskin1966thm} states that $\grad[\outer] \val(\outer) = \grad[\outer] \obj(\outer, \inner^*(\outer))$, where $\inner^*(\outer) \in \argmax_{\inner \in \innerset} \obj(\outer, \inner)$.
In other words, when there is no dependence among the players' action sets, the gradient of the value function coincides with that of the objective function.
\if 0
the direction of steepest descent of the players' value functions coincide with the direction of steepest descent of the objective function.\deni{It is the gradients that do not coincide in the dependent setting, so it is more meaningful to say the direction of steepest descent I think.}
\fi
The first-order necessary and sufficient conditions for a tuple $(\outer^*, \inner^*)$ to be an (interior) saddle point are for it to be a stationary point of $\obj$ (i.e., $\grad[\outer] \obj(\outer^*, \inner^*) = \grad[\inner] \obj(\outer^*, \inner^*) = 0$).
It is therefore logical for players in the independent/simultaneous-move setting to follow the gradient of the objective function.
In the dependent/sequential setting, however, the direction of steepest descent (resp.\ ascent) for the outer (resp.\ inner) player is the gradient of their \emph{value\/} function.
\if 0
direction of steepest descent ascent of the \samy{outer player's}{} value function coincides with the direction of steepest descent for the objective function in $\outer$.
%The reason first-order methods based on Danskin's theorem fail to converge to a Stackelberg equilibrium in min-max games with variable choice sets is that, unlike in the fixed choice setting, 
In the dependent \sdeni{strategy}{action} setting, however, the gradient of the objective function is \emph{not\/} the direction of steepest descent for the \sdeni{outer player}{$\outer$-player}.
Rather, the direction of steepest descent for the \sdeni{outer player}{$\outer$-player} is the gradient of the value function $\val(\outer)$.
\fi

\if 0
\amy{here is what i think you should say: 1. define the value function for the \sdeni{outer player}{$\outer$-player}. 2. define the value function for the \sdeni{inner player}{$\inner$-player}. 3. define a Stackelberg equilibrium as a stationary point of the players' value functions, and point to a theorem/proposition (in the appendix?) that proves such stationary points exist. 4. argue that therefore the right intuition for a first-order method in our setting is for both players to follow the gradient of their value function.}\deni{I removed the stationarity conditions for Stackelberg equilibria because they were not entirely correct. In particular, those stationarity conditions were for solutions that were in the relative interior of the constraint sets, so more generally those stationarity conditions do not hold and we have to go through the KKT conditions to obtain stationarity conditions for possible boundary solutions. That was more confusing, so I decided to remove it and instead keep the gradient of value function + danskin's thm + direction of steepest descent business. I also think this makes things clearer and more concise.}

\amy{and then, perhaps in a new paragraph: 
1. in all two-player games, the value function for the \sdeni{inner player}{$\inner$-player} is simply the objective function. 2. in simultaneous-move games, this is also the case for the \sdeni{outer player}{$\outer$-player}.
3. that is why it makes sense for both players to follow the gradient of the objective function in the simultaneous-move case.
4. but this is definitively not the case for the \sdeni{outer player}{$\outer$-player} in sequential games, which is why both players following the gradient of the objective function in sequential games does not work.}
\fi

%%% FIRST-ORDER CONDITIONS
%\amy{KKT:}
%The first-order necessary and sufficient conditions for a tuple $(\outer^*, \inner^*)$ to be a saddle point is for it to be a stationary point of $\obj$, i.e., $\grad[\outer] \obj(\outer^*, \inner^*) = \grad[\inner] \obj(\outer^*, \inner^*) = 0$.
%i.e., the gradient of the objective function $\obj$ is equal to 0
%\amy{the gradient of the objective function!!! vs. SE, where it is the gradient of the value function is equal to 0 in the $x$ direction.}
%\amy{b/c both players are optimizing the same function. this is not true in the seq'l setting!}
%\deni{Yes that is the point!!! A Nash eqm is a special case of a Stackelberg eqm so it must be that a Stackelberg eqm also satisfies it}

\begin{example}
\label{grad-fail-ex}
Consider $\min_{x \in [-1,1]} \max_{y \in [-1,1] : x + y \leq 0} x^2 + y + 1$,
% Consider, once again, the problem posed in Example~\ref{min-max-fail-ex},
% \if 0
% Consider the following min-max game with dependent \sdeni{strategy}{action} sets:
% \begin{align}
% \label{grad-fail-ex-opt}
%     \min_{\outer[ ] \in [-1, 1] } \max_{\inner[ ] \in [-1, 1] : \outer[ ] + \inner[ ] \leq 0} \outer[ ]^2 + \inner[ ] + 1
% \end{align}
% \fi
and recall \citeauthor{jin2020local}'s gradient descent with max-oracle algorithm \cite{jin2020local}:
$\outer^{(t+1)} = \outer^{(t)} - \learnrate[ ] \grad[\outer] \obj \left( \outer^{t}, \inner^{*} (\outer^{(t)}) \right)$,
where $\inner^{*} (\outer^{(t)}) \in \argmax_{\inner \in \innerset : \constr(\outer^{(t)}, \inner) \geq 0} \obj(\outer^{(t)}, \inner)$ for $\eta > 0$.
Applied to this sample problem, with $\eta = 1$, this algorithm yields the following update rule:
$\outer[ ]^{(t+1)} = \outer[ ]^{(t)} - 2\outer[ ]^{(t)} = - \outer[ ]^{(t)}$.
Thus, letting $\outer[ ]^{(0)}$ equal any feasible $\outer[ ]$,
%(e.g., $\nicefrac{1}{8}$), 
the output cycles between $\outer[ ]$ and $-\outer[ ]$, so that the average of the iterates converges to $\outer[ ]^{*} = 0$ (with $\inner[ ]^{*} = 0$), which is not a Stackelberg equilibrium, as the Stackelberg equilibrium of this game is $\outer[ ]^{*} = \nicefrac{1}{2}, \inner[ ]^{*} = - \nicefrac{1}{2}$.
%\amy{would you call the thing that it does converge to (0,0) a Nash eqm, or a minimax point?}\deni{$(0,0)$ is not a Nash eqm. $(0,1)$ is, so actually $(0,0)$ is meaningless!}\deni{$y=0$ is the output of the max oracle when considering the dependent constraints. If we did not consider the dependent constraints then we would get $(0,1)$, which is a Nash eqm of the game without the dependent constraints?}

% \sdeni{It is, however, a minimax equilibrium of this same game without the constraint $\outer[ ] + \inner[ ] \leq 0$.}{}\deni{This is not correct}

Now consider an algorithm that updates based not on gradient of the objective function, but 
%%% SPACE
%rather 
of the value function, namely
$\outer^{(t+1)} = \outer^{(t)} - \learnrate[ ][ ] \grad[\outer] \val \left( \outer^{t} \right)$.
The value function is $\val(\outer[ ]) = \max_{\inner[ ] \in [-1, 1] : \outer[ ] + \inner[ ] \leq 0} \outer[ ]^2 + \inner[ ] + 1 = \outer[ ]^2 - \outer[ ] + 1$, with gradient $\val^{\prime}(\outer[ ]) =  2 \outer[ ] - 1$.
Thus, when $\eta = 1$, this algorithm yields the following update rule:
$\outer[ ]^{(t+1)} = \outer[ ]^{(t)} - \val^{\prime} (\outer[ ]^{(t)}) = \outer[ ]^{(t)} - 2 \outer[ ]^{(t)} + 1 = - \outer[ ]^{(t)} + 1$.
If we run this algorithm from initial point $\outer[ ]^{(0)} = \nicefrac{1}{8}$,
    we get 
    %$\outer[ ]^{(0)} = \nicefrac{1}{8},$
    $\outer[ ]^{(1)} = - \nicefrac{1}{8} + 1 = \nicefrac{7}{8}$, $\outer[ ]^{(2)} = - \nicefrac{7}{8} + 1 = \nicefrac{1}{8}$, and so on.
    %$\outer[ ]^{(3)} = - \nicefrac{1}{8} + 1 = \nicefrac{7}{8}$.
    The average of the iterates $\nicefrac{1}{8}, \nicefrac{7}{8}, \nicefrac{1}{8}, \hdots$ converges to $\outer[ ]^{*} = \nicefrac{1}{2}\left( \nicefrac{1}{8} + \nicefrac{7}{8}\right) =  \nicefrac{1}{2}$, and correspondingly $\inner[ ]^{*} = - \nicefrac{1}{2}$, which is indeed the Stackelberg equilibrium.
\end{example}
%\deni{Just as a note, I realized that non-convergence of the algorithm is not important because their method does not converge in last iterates, rather, the average of the iterates converges. So the point is that, the other algos converge but rather they converge to the wrong point.}
%This latter quantity is well-understood in economics, and is characterized by a series of theorems colloquially known as \mydef{envelope theorems} \cite{afriat1971envelope, milgrom2002envelope}.
%\amy{TODO: try to better integrate this mention of envelope theorems with the mention after the next example.}

\if 0
\deni{We know that any algorithm that relies only on the gradient of the objective function will not converge to a correct solution of min-max games with variables choice set because a stationary point of $f$ is not Stackelberg equilibrium of the min-max game with variable choice sets. More generally, we know that \textbf{all} algorithms for independent \sdeni{strategy}{action} set min-max games will not work for variable choice set min-max games because their fixed points do not correspond to the stackelberg equilibria of min-max games with variable choice sets.}
\amy{i don't understand this yet!} \deni{Did we discuss this yet?}
\amy{we might have, but i don't see a clear explanation here of why the fixed points of Danskin-style algos are not Stackelberg equilibria. i know this is true in the example, but how do you know it is true in general?} \amy{i guess the thing that is weird about it is that it ``converges'' (empirically). it is weird for something to converge, but to the wrong thing. i guess it is just that it is solving the wrong problem. and maybe that is all you are trying to say. that it can't get it right if it does not account for the changes in the constraints that result from different choices of $x$.} \amy{BUT: is there a theorem that say that this method *definitely* converges no matter what? b/c then if so, i think it is intuitive to explain that it gets the wrong answer since it solves the wrong problem. !!!}\deni{If I am understanding what you are asking correctly, no the results for these algorithms do no prove last iterate convergence, they do however prove average iterate/best iterate convergence. What you said in your comment is exactly right, the algorithms developed are simply not made to solve variable choice set games because when you apply those algos to variable choice problems, it essentially ignores the variable choice set constraints which define the value function of the \sdeni{outer player}{$\outer$-player}.}
\fi

\if 0
\begin{align}
    \outer^{(0)} &= \nicefrac{1}{8}\\
    \outer^{(1)} &= \nicefrac{1}{8} -2\left(\nicefrac{1}{8} \right) = - \nicefrac{1}{8}\\
    \outer^{(2)} &= -\nicefrac{1}{8} -2\left( - \nicefrac{1}{8} \right) = \nicefrac{1}{8}\\
    &\vdots
\end{align}
\fi

\if 0
\sdeni{}{As a result, to the best of our knowledge all algorithms designed for min-max games with independent \sdeni{strategy}{action} set sets will fail to converge to a solution of a min-max game with variable choice sets because the fixed points of such algorithms do not correspond to Stackelberg equilibria of the min-max game with variable choice sets. This is because while in the independent \sdeni{strategy}{action} set setting, as the outer player changes its \samy{action}{\sdeni{strategy}{action}}, the value function for the inner maximization changes only because the objective function changes; in the variable choice setting, the value function's value also changes because the \sdeni{inner player}{$\inner$-player}'s space of feasible actions\amy{???} changes. Intuitively, this means that the step calculated by the algorithm on each iteration must incorporate the constraints of the \sdeni{inner player}{$\inner$-player} or otherwise we risk converging to a stationary point that does not account for the varying choice sets.}
\fi

\if 0
\begin{example}[A method based on the gradient of the value function]\label{accomod-constr}
\if 0
    Once again, consider the min-max game with variable choice sets given in \Cref{grad-fail-ex}. Notice that the gradient calculated by the algorithm is $2\outer[ ]$. The fixed point of this algorithm is then $\widehat{\outer[ ]} = 0$, which is not part of the Stackelberg equilibrium. 
\fi

Consider once again the game in \Cref{grad-fail-ex}.
In this game, the value function is $\val(\outer[ ]) = \max_{\inner[ ] \in [-1, 1] : \outer[ ] + \inner[ ] \leq 0} \outer[ ]^2 + \inner[ ] + 1 = \outer[ ]^2 - \outer[ ] + 1$.
Moreover, the gradient of this value function $\val^{\prime}(\outer[ ]) =  2 \outer[ ] - 1$.
Now consider an algorithm that updates based on \emph{the gradient of the value function}, 
%instead of of the objective function, 
so that $\outer[ ]^{(t)} = \outer[ ]^{(t)} - \val^{\prime}(\outer[ ]^{(t-1}) = \outer[ ]^{(t)} - 2 \outer[ ]^{(t-1)} - 1 = - \outer[ ]^{(t-1)} - 1$.
If we run this algorithm from initial point $\outer[ ]^{(0)} = \nicefrac{1}{8}$,
    we get 
    %$\outer[ ]^{(0)} = \nicefrac{1}{8},$
    $\outer[ ]^{(1)} = - \nicefrac{1}{8} + 1 = \nicefrac{7}{8}$, $\outer[ ]^{(2)} = - \nicefrac{7}{8} + 1 = \nicefrac{1}{8}$, and so on.
    %$\outer[ ]^{(3)} = - \nicefrac{1}{8} + 1 = \nicefrac{7}{8}$.
    The average of the iterates $\nicefrac{1}{8}, \nicefrac{7}{8}, \nicefrac{1}{8}, \hdots$ converges to $\outer[ ]^{*} = \nicefrac{1}{2}\left( \nicefrac{1}{8} + \nicefrac{7}{8}\right) =  \nicefrac{1}{2}$, and correspondingly $\inner[ ]^{*} = - \nicefrac{1}{2}$, which is indeed the Stackelberg equilibrium.
\end{example}
\fi 

\if 0
Min-max problems with variable choice sets seem to have gathered little attention, in particular when it comes to their computation
% \cite{dai2020optimality}
, and seem to have been lumped into the same category as min-max games with independent \sdeni{strategy}{action} set sets even though a majority of results for min-max games with independent \sdeni{strategy}{action} set sets do not apply to min-max games with variable choice sets. For instance, in the variable choice set regime, the min-max theorem does not hold in general, e.g.,
% \Cref{min-max-fail-ex}
, first order methods that converge to a solution in the independent \sdeni{strategy}{action} set regime fail to converge, e.g.,
% \Cref{grad-fail-ex}
, and Danskin's theorem does not apply to the inner maximization problem
% \cite{danskin1966thm}
. In particular, it is not even clear if a solution to a min-max game with variable choice sets can be computed in polynomial time even if $\obj$ a is convex-concave and the $\outerset, \innerset$ are compact, convex valued since the minimax theorem does not hold, which seems to often imply NP-hardness of the problem
% \cite{manyem2010duality}. 
\fi

\subsection{Related Work}
%\paragraph{Related Work}

%\amy{do we want to cite previous existence results? we might already, but more explicitly, to contrast them with our new results.}\deni{It feels like the discussion that I added to the contribution section is enough?}

Our model of min-max Stackelberg games seems to have first been studied by \citeauthor{wald1945maximin}, under the posthumous name of Wald's maximin model~\cite{wald1945maximin}.
A variant of Wald's maximin model is the main paradigm used in robust optimization, a fundamental framework in operations research for which many methods have been proposed \cite{ben2015oracle, ho2018online, postek2021firstorder}. 
%but these algorithms do not solve for a Stackelberg equilibrium, transform the problem into a feasibility problem, and
\citet{shizimu1981stackelberg,shimizu1980minmax} proposed the first algorithm to solve min-max Stackelberg games
%\amy{Wald's maximin problem? were they solving RO?} \deni{no min-max Stackelberg games actually}
via a relaxation to a constrained optimization problem with infinitely many constraints, which nonetheless seems to perform well in practice.
%its run time in the worst case is superexponential, as it involves solving an optimization problem with infinitely many constraints in the worst case.
More recently, \citet{segundo2012evolutionary} proposed an evolutionary algorithm for these games, but they provided no guarantees.
As pointed out by \citeauthor{postek2021firstorder}, all prior methods either require oracles and are stochastic in nature \cite{ben2015oracle}, or rely on a binary search for the optimal value, which can be computationally complex \cite{ho2018online}.
The algorithms we propose in this paper circumvent the aforementioned issues and can be used to solve a large class of robust convex optimization problems in a simple and efficient manner.

%To the best of our knowledge, algorithms designed to solve min-max Stackelberg games are few, and are not accompanied by polynomial-time computation guarantees \sdeni{}{\cite{sniedovich2008wald}}.
%\amy{i just commented out a sentence and a citation \cite{sniedovich2008wald}; should the citation go somewhere else?}\deni{I now think we should remove it because even though it is a reference that that says that there exists no methods to solve Wald's maximin problem, it is from 2008 so maybe not old enough, and it is not published anywhere of importance (or rather I am not sure it is peer-reviewed).}

Extensive-form games in which players' action sets can depend on other players' actions have been studied by \citet{davis2019solving} assuming payoffs are bilinear, and by \citet{farina2019regret} for another specific class of convex-concave payoffs.
\citet{fabiani2021local} and  \citet{kebriaei2017discrete} study more general settings than ours, namely general-sum Stackelberg games with more than two players.
Both sets of authors derive convergence guarantees
%%% LOCAL SE is undefined
%to local Stackelberg equilibria 
assuming specific payoff structures, but their algorithms do not converge in polynomial time.

%%% SAVE
% \samy{\citeauthor{fabiani2021local} \cite{fabiani2021local} provide an algorithm to solve for a local Stackelberg equilibrium of non-zero-sum Stackelberg games with more than two players. The authors assume a specific payoff structure to derive convergence results for their algorithms. As opposed to our setting, the authors are not focused on zero-sum games, and as a result, global Stackelberg equilibria cannot be found efficiently, which is why they consider local Stackelberg equilibria, for which nevertheless, they do not provide polynomial time computation guarantees. \citeauthor{kebriaei2017discrete} \cite{kebriaei2017discrete} study a dynamic non-zero-sum Stackelberg game with one leader, one disturbance, and $n$ players, in which the followers’ strategies can also affect each others’ payoffs. Once again this is a more general model than ours, although similar to \citeauthor{fabiani2021local} \cite{fabiani2021local}, the authors assume a specific payoff structure to derive convergence results for their algorithms, but do not provide polynomial-time guarantees.}{}

Min-max Stackelberg games naturally model various economic settings. 
They are related to abstract economies, first studied by \citet{arrow-debreu}; however, the solution concept par excellence for abstract economies is generalized Nash equilibrium \cite{facchinei2007generalized, facchinei2010generalized}, which, like Stackelberg, is a weaker solution concept than Nash, but which makes the arguably unreasonable assumption that the players move simultaneously and nonetheless satisfy the constraint dependencies on their actions imposed by one another's moves.
See \Cref{sec-app:GNE} for a more detailed discussion of generalized Nash equilibria versus Stackelberg equilibria. 

%Without using our precise language,
\citet{duetting2019optimal} study optimal auction design problems.
They propose a neural network architecture called RegretNet that represents optimal auctions, and train their networks using a variant of Algorithm~\ref{ngd-thms}.
Optimal auction design problems can be seen as min-max Stackelberg games;
however, as their objectives are non-convex-concave in general, our guarantees do not apply.
%\amy{i think we might want to say here, or in future work, that the success of their work suggests that their may be cases of benign non-convexity lurking in the space of optimal auction design!}\deni{I am not sure that will be clear by stating that here because we have no proof that our algorithm would perform well for benign non-convex problems and we haven't even defined benign non-convex problems.}\amy{so please add something to the conclusion then!}

%Hence, optimal auction design can be seen as a (non-convex-concave) min-max Stackelberg game.
In this paper, we observe that solving for the competitive equilibrium of a Fisher market can also be seen as solving a convex-concave min-max Stackelberg game.
The study of the computation of competitive equilibria in Fisher markets was initiated by \citet{devanur2002market}, who provided a polynomial-time method for the case of linear utilities.
\citet{jain2005market} subsequently showed that a large class of Fisher markets could be solved in polynomial-time using interior point methods.
Recently, \citet{gao2020polygm} studied an alternative family of first-order methods for solving Fisher markets (only; not min-max Stackelberg games more generally), assuming linear, quasilinear, and Leontief utilities, as such methods can be more efficient when markets are large.
%\amy{how do our first-order methods differ from theirs? a reviewer might be thinking: did they not already solve the problem studied in this paper? hence, my addition.} \deni{I think that it is clear now, do you think it needs more detail?} \amy{yes, it is clear now, but i wanted you to read these comments.}

See \Cref{sec-app:related} for a detailed discussion of recent progress on solving min-max Stackelberg games,
%assuming independent \sdeni{strategy}{action} sets,
both in the convex-concave case and the non-convex-concave case.

\if 0
As far as we are aware, no first-order methods that converge to a solution have been proposed for min-max games with variable choice sets. We note that \citeauthor{daskalakis2020complexity} seem to mention that when $\outers, \inners$ are polytope valued and $\obj$ is convex-concave, then there exists a first order method which only has access to the gradient of $\obj$ that converges to a solution in polynomial time, however, since \Cref{grad-fail-ex} seems to suggest otherwise, it is not clear if this statement is supposed to apply to the variable choice set regime.
\fi

\section{Preliminaries}
\label{sec:prelim}

\paragraph{Notation} 
%\textbf{Formal Definitions}
We use Roman uppercase letters to denote sets (e.g., $X$),
% calligraphic uppercase letters to denote correspondences (e.g., $\calX$),\amy{do we still have any correspondences? we might not need this.}
bold uppercase letters to denote matrices (e.g., $\allocation$), bold lowercase letters to denote vectors (e.g., $\price$), and Roman lowercase letters to denote scalar quantities, (e.g., $c$).
We denote the $i$th row vector of a matrix (e.g., $\allocation$) by the corresponding bold lowercase letter with subscript $i$ (e.g., $\allocation[\buyer])$.
Similarly, we denote the $j$th entry of a vector (e.g., $\price$ or $\allocation[\buyer]$) by the corresponding Roman lowercase letter with subscript $j$ (e.g., $\price[\good]$ or $\allocation[\buyer][\good]$).
We denote the set of integers $\left\{1, \hdots, n\right\}$ by $[n]$, the set of natural numbers by $\N$, the set of real numbers by $\R$, the set of non-negative real numbers by $\R_+$. We denote by $\simplex[n] = \{\x \in \R_+^n \mid \sum_{i = 1}^n x_i = 1\}$. We denote norms by $\| \cdot \|$, and unless otherwise noted we assume that all norms are Euclidean, i.e., $\| \cdot \| = \| \cdot \|_2$.
We denote by $\project[\innerset]$ the Euclidean projection operator onto the set $\innerset \subset \R^{\innerdim}$: i.e., $\project[\innerset](\inner) = \argmin_{\z \in \innerset} \left\| \inner - \z \right\|_2$.
Given two vectors $\x, \y \in \R^n$, we write $\x \geq \y$ or $\x > \y$ to mean component-wise $\ge$ or $>$, respectively. 
%means for all $i \in [n]$, $x_i \geq y_i$, and $\x > \y$ means for all $i \in [n]$, $x_i > y_i$.
The linear composition of two sets $\outerset, \innerset \subset \R^d$ for any constant $a, b \in \R$, is given by their Minkowski sum and product, i.e., $a\outerset + b\innerset \doteq \{a\outer + b\inner \mid \outer \in \outerset, \inner \in \innerset \}$.

\paragraph{Mathematical Definitions}

We now define several mathematical concepts that are used in our convergence proofs.
Let $\left\| \cdot \right\|: \calX \to \R_+$ be any norm.%
\footnote{Throughout this paper, unless otherwise noted, we assume $\left\| \cdot \right\|$ is the Euclidean norm, i.e., $\left\| \cdot \right\| = \left\| \cdot \right\|_2$.}
Given $A \subset \R^\outerdim$, the function $\obj: A \to \R$ is said to be $\lipschitz[\obj]$-\mydef{Lipschitz-continuous} w.r.t.\ norm $\left\| \cdot \right\|$ iff $\forall \outer_1, \outer_2 \in X, \left\| \obj(\outer_1) - \obj(\outer_2) \right\| \leq \lipschitz[\obj] \left\| \outer_1 - \outer_2 \right\|$.
If the gradient of $\obj$, $\grad \obj$, is $\lipschitz[\grad \obj]$-Lipschitz-continuous, we refer to $\obj$ as $\lipschitz[\grad \obj]$-\mydef{Lipschitz-smooth}.
A function $\obj: A \to \R$ is $\mu$-\mydef{strongly convex} if $\obj(\outer_1) \geq \obj(\outer_2) + \left< \grad[\outer] \obj(\outer_2), \outer_1 - \outer_2 \right> + \nicefrac{\mu}{2} \left\| \outer_1 - \outer_1 \right\|^2$, and $\mu$-\mydef{strongly concave} if $-\obj$ is $\mu$-strongly convex.

\subsection{Min-Max Stackelberg Games}

\mydef{A min-max Stackelberg game}, denoted $(\outerset, \innerset, \obj, \constr)$, is a two-player, zero-sum game, where one player, who we call the $\outer$-player 
%\sdeni{(resp.\ the $\inner$-player), is trying to minimize their loss (resp.\ maximize their gain), defined by an \mydef{objective function} $\obj: X \times Y \rightarrow \mathbb{R}$, by choosing an action from an \mydef{action set} $\outerset \subset \R^\outerdim$ (resp. $\innerset \subset \R^\innerdim$) s.t.\ $\constr(\outer, \inner) \geq 0$, where $\constr(\outer, \inner) = \left(\constr[1] (\outer, \inner), \hdots, \constr[\numconstrs](\outer, \inner) \right)^T$ is the coupling constraint with $\constr[\numconstr]: \R^\outerdim \times \R^\innerdim \to \R$, for all $\numconstr \in [\numconstrs]$.}
first commits to an action $\outer \in \outerset$ from its action space $\outerset \subset \R^\outerdim$, after which a second player called the $\inner$-player, takes an action from its action set $\coupledset(\outer) \subset \innerset$ given by an action correspondence $\coupledset(\outer) \doteq  \left\{ \inner \in \innerset \mid \constr (\outer, \inner) \geq \zeros \right\}$ defined by \mydef{coupling constraints} $\constr (\outer, \inner) = \left(\constr[1](\outer, \inner), \hdots, \constr[\numconstrs](\outer, \inner) \right)^T$ with $\constr[\numconstr]: \R^\outerdim \times \R^\innerdim \to \R$, for all $\numconstr \in [\numconstrs]$.
%%% ACTION vs. \sdeni{strategy}{action} debate
%\samy{An action profile $(\outer, \inner (\outer)) \in \outerset \times \innerset$ is said to be \mydef{feasible} iff for all $\numconstr \in [\numconstrs]$, $\constr[\numconstr] (\outer, \inner (\outer)) \geq 0$.}{}
An action profile $(\outer, \inner) \in \outerset \times \innerset$ is said to be \mydef{feasible} iff $\constr (\outer, \inner) \ge \bm{0}$.
% \sdeni{, i.e., $\constr[\numconstr] (\outer, \inner) \geq 0$, for all $\numconstr \in [\numconstrs]$}{}
The function $\obj$ maps a pair of feasible strategies taken by the players $(\outer, \inner) \in \outerset \times \innerset$ to a real value (i.e., a payoff), which represents the loss (resp.\ the gain) of the $\outer$-player (resp.\ $\inner$-player).

We define the \mydef{value function} $\val: \outerset \to \R$ as $\val(\outer) = \max_{\inner \in \innerset: \constr(\outer, \inner) \geq \zeros}$.
A min-max Stackelberg game is said to be \mydef{convex-concave} if the objective function $\obj$ is concave in $\inner$ for all $\outer \in \outerset$, and $\val$ is convex in $\outer$. 
Note that while a min-max game with a convex-concave objective function $\obj$ is a convex-concave min-max Stackelberg game, the converse is not true.%
%it is not the case that any convex-concave min-max game, i.e., convex-concave min-max Stackelberg game with $\constr(\outer, \inner) \geq \zeros$ for all $(\outer, \inner) \in \outerset \times \innerset$, is a min-max game with convex-concave objective.}%
\footnote{For example, the non-convex-concave min-max optimization problem: $\min_{\outer[ ] \in [-1, 1]} \max_{\inner[ ] \in [-1, 1]} \outer[ ]^3 \inner[ ] - \frac{1}{2} \outer[ ]^2 \inner[ ]^2$ has convex value function $\val(\outer[ ]) = \frac{1}{2} \outer[ ]^4$, because $\inner[ ]^*(\outer[ ]) \in \argmax_{\inner[ ] \in [-1, 1]} \outer[ ]^3 \inner[ ] - \frac{1}{2} \outer[ ]^2 \inner[ ]^2 = \{\outer[ ]\}$.
The results in this paper imply that solving for Stackelberg equilibria in a class of non-convex-concave min-max games (with independent constraints) is tractable.}

The relevant solution concept for Stackelberg games is the \mydef{Stackelberg equilibrium}:

\begin{definition}[Stackelberg Equilibrium]
Consider the min-max Stackelberg game $(\outerset, \innerset, \obj, \constr)$. An action profile $\left(\outer^{*}, \inner^{*} \right) \in \outerset \times \innerset$ such that $\constr(\outer^{*}, \inner^{*}) \geq \zeros$ is a $(\varepsilon, \delta)$-Stackelberg equilibrium if
\begin{align*}
\max_{\inner \in \innerset : \constr(\outer^{*}, \inner) \geq 0} \obj \left( \outer^{*}, \inner \right) - \delta \leq \obj \left( \outer^{*}, \inner^{*} \right) \leq \min_{\outer \in \outerset} \max_{\inner \in \innerset : \constr(\outer, \inner) \geq 0} \obj \left( \outer, \inner \right) + \varepsilon.
\end{align*}
\end{definition}

Intuitively, a $(\varepsilon, \delta)$-Stackelberg equilibrium is a point at which the $\outer$-player's (resp.\ $\inner$-player's) payoff is no more than $\varepsilon$ (resp.\ $\delta$) away from its optimum.
A $(0,0)$-Stackelberg equilibrium is simply called a \mydef{Stackelberg equilibrium}.

\section{First-Order Methods via an Envelope Theorem}
\label{sec:envelope}

The envelope theorems, popular tools in mathematical economics, allow for explicit formulas for the gradient of the value function in min-max games, even when the action sets are dependent.
\citet{afriat1971envelope} appears to have been the first make use of the Lagrangian to differentiate the value function, though his conclusion was later obtained under weaker assumptions by \citet{milgrom2002envelope}.
%
% \sdeni{}{Let $h_k: \R_+^\numconstrs \times \outerset \times \innerset \to \R$ be defined as h_k}

\citet{milgrom2002envelope}'s
envelope theorem
%\footnote{A formal statement of their envelope theorem can be found in \Cref{sec-app:envelope}, \Cref{envelope}.} %
provides an explicit formula for the gradient of the value function. 
When action sets are independent, this function is guaranteed to be differentiable under mild assumptions \cite{nouiehed2019solving}.
When action sets are dependent, however, it is not necessarily differentiable, as seen in \Cref{no-danskins-ex}.
As a remedy, we present a subdifferential envelope theorem for non-differentiable but convex value functions.

\begin{theorem}[Subdifferential Envelope Theorem]
\label{envelope-sd}
Consider the value function $\val(\outer) = \max_{\inner \in \innerset: \constr(\outer, \inner) \geq 0} \obj(\outer, \inner) $.
Under \Cref{main-assum},
% 1.~$\val$ is convex; 
% 2.~$\obj, \constr[1], \ldots, \constr[\numconstrs]$ are continuously differentiable and concave in $\inner$ for all $\outer \in \outerset$;
% 3.~ $\forall \outer \in \outerset, \exists \widehat{\inner} \in \innerset$ s.t.\ $g_{\numconstr}(\outer, \widehat{\inner}) > 0$, for all $\numconstr \in [\numconstrs]$.}
% Let $\innerset^*(\outer) = \argmax_{\inner \in \innerset:  \constr(\outer, \inner) \geq \zeros} \obj(\outer, \inner)$.
% 
at any point $\widehat{\outer} \in \outerset$, $\subdiff[\outer] \val(\widehat{\outer}) =$
\begin{align}
\label{envelope-deriv-sd}
         \mathrm{conv} \left( \bigcup_{\inner^*(\widehat{\outer}) \in \innerset^*(\widehat{\outer})}      \bigcup_{\langmult[\numconstr]^*(\widehat{\outer}, \inner^*(\widehat{\outer})) \in \langmults[\numconstr]^*(\widehat{\outer}, \inner^*(\widehat{\outer}))} \left\{ \grad[\outer] \obj\left( \widehat{\outer}, \inner^{*}(\widehat{\outer})\right) + \sum_{\numconstr = 1}^{\numconstrs} \langmult[\numconstr]^*(\widehat{\outer}, \inner^*(\widehat{\outer})) \grad[\outer] \constr[k]\left(\widehat{\outer}, \inner^{*}(\widehat{\outer})\right) \right\}
    \right) \enspace ,
\end{align}
\noindent
where $\mathrm{conv}$ is the convex hull operator and $\langmult^*(\widehat{\outer}, \inner^*(\widehat{\outer})) = \left(\langmult[1]^*(\widehat{\outer}, \inner^*(\widehat{\outer})), \ldots, \langmult[\numconstrs]^*(\widehat{\outer}, \inner^*(\widehat{\outer})) \right)^T \in \langmults^*(\widehat{\outer}, \inner^*(\widehat{\outer}))$ are the optimal KKT multipliers associated with $\inner^{*}(\widehat{\outer}) \in \innerset^*(\widehat{\outer})$.
\end{theorem}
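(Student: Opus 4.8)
The plan is to reduce the constrained inner maximization to an unconstrained (in $\inner$) Lagrangian saddle problem, to which a Milgrom--Segal-style envelope argument applies, and then to convexify over the sets of optimal primal and dual solutions. First I would invoke Part~1 of \Cref{main-assum} (Slater) together with concavity of $\obj, \constr[1], \ldots, \constr[\numconstrs]$ in $\inner$ (Part~3a) to obtain strong duality for the inner program at every $\outer$: writing $\mathcal{L}(\outer, \inner, \langmult) := \obj(\outer, \inner) + \sum_{\numconstr = 1}^{\numconstrs} \langmult[\numconstr] \constr[\numconstr](\outer, \inner)$, we get $\val(\outer) = \max_{\inner \in \innerset} \min_{\langmult \geq \zeros} \mathcal{L}(\outer, \inner, \langmult) = \min_{\langmult \geq \zeros} \max_{\inner \in \innerset} \mathcal{L}(\outer, \inner, \langmult)$, and for each optimal $\inner^*(\outer) \in \innerset^*(\outer)$ the multiplier set $\langmults^*(\outer, \inner^*(\outer))$ is nonempty. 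Part~2 (continuity of the $\outer$-gradients) and compactness of $\innerset$ then give that the optimal-solution and optimal-multiplier correspondences are compact-valued and upper semicontinuous, which is what licenses attainment of the maxima below. A point worth stressing is that $\obj$ and $\constr$ are \emph{not} assumed convex in $\outer$ (only $\val$ is, via Part~3b), so the usual jointly-convex programming sensitivity theory does not apply directly; this is exactly why the Milgrom--Segal envelope machinery, which requires no convexity in the parameter, is the right tool.

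Given this reformulation, the candidate subgradients are precisely the partial gradients $\grad[\outer] \mathcal{L}(\widehat{\outer}, \inner^*, \langmult^*) = \grad[\outer] \obj(\widehat{\outer}, \inner^*) + \sum_{\numconstr} \langmult[\numconstr]^* \grad[\outer] \constr[\numconstr](\widehat{\outer}, \inner^*)$ evaluated at optimal pairs $(\inner^*, \langmult^*)$, which is the generating set inside $\mathrm{conv}(\cdot)$ in \eqref{envelope-deriv-sd}. Since $\val$ is convex, $\subdiff[\outer] \val(\widehat{\outer})$ is nonempty, compact, and convex, hence determined by its support function $\sigma(w) = \max_{\xi \in \subdiff[\outer] \val(\widehat{\outer})} \langle \xi, w \rangle = \val'(\widehat{\outer}; w)$, the one-sided directional derivative. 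I would therefore prove the theorem by showing that $\val'(\widehat{\outer}; w)$ equals the support function of the right-hand side of \eqref{envelope-deriv-sd}; since the support function of a convex hull equals the supremum of the linear form over the generating set, and two nonempty compact convex sets with equal support functions coincide, this yields the claimed identity.

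The analytic heart is the computation $\val'(\widehat{\outer}; w) = \max_{\inner^* \in \innerset^*(\widehat{\outer})} \max_{\langmult^* \in \langmults^*(\widehat{\outer}, \inner^*)} \langle \grad[\outer] \mathcal{L}(\widehat{\outer}, \inner^*, \langmult^*), w \rangle$. For ``$\leq$'' I would fix a dual-optimal $\langmult^*$ of $\widehat{\outer}$ and use the weak-duality upper bound $\val(\widehat{\outer} + \tau w) \leq \max_{\inner \in \innerset} \mathcal{L}(\widehat{\outer} + \tau w, \inner, \langmult^*)$, whose directional derivative at $\tau = 0$ is controlled by Danskin's theorem applied to the now-\emph{fixed} compact set $\innerset$ (legitimate because Part~2 makes $\mathcal{L}$ continuously differentiable in $\outer$), with the envelope property---stationarity of $\mathcal{L}$ in $\inner$ at $\inner^*$ and complementary slackness---collapsing the bound to the stated linear form. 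For ``$\geq$'' I would substitute a fixed optimal pair of $\widehat{\outer}$ into $\mathcal{L}$ along the ray $\widehat{\outer} + \tau w$ to obtain a primal lower bound and expand to first order, then use monotonicity of the difference quotients of the convex function $\val$ to upgrade the limit to the directional derivative.

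I expect the main obstacle to be exactly the passage from a single-direction, one-sided envelope formula to the full subdifferential: one must show the generating set of \eqref{envelope-deriv-sd}, taken over \emph{both} $\innerset^*(\widehat{\outer})$ and, for each optimizer, its multiplier set $\langmults^*(\widehat{\outer}, \inner^*)$, is compact, so that its convex hull is the genuine subdifferential rather than a dense subset. This is where compactness of $\innerset$, continuity of the $\outer$-gradients (Part~2), upper semicontinuity of the multiplier map under Slater, and attainment of the maxima in the support-function computation are all simultaneously needed. The delicate point is that the maximizing pair $(\inner^*, \langmult^*)$ depends on the direction $w$, so no single primal--dual pair serves for all directions; it is precisely the convex-hull operation that reconciles these direction-dependent selections into a single set independent of $w$.
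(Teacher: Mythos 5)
Your proposal takes a genuinely different route from the paper: the paper freezes the optimal primal--dual pairs at $\widehat{\outer}$, views the Lagrangians $\outer \mapsto \obj(\outer,\inner^*) + \sum_{\numconstr=1}^{\numconstrs}\langmult[\numconstr]^*\constr[\numconstr](\outer,\inner^*)$ as a family of functions differentiable in $\outer$ (Part 2 of \Cref{main-assum}), and applies the convex-analytic rule for the subdifferential of a pointwise maximum, $\subdiff[\outer]\max_\alpha f_\alpha = \mathrm{conv}\bigl(\bigcup_{\alpha \text{ active}}\{\grad[\outer]f_\alpha\}\bigr)$, never arguing direction by direction; you instead try to identify the support function of $\subdiff[\outer]\val(\widehat{\outer})$, i.e.\ the directional derivative $\val'(\widehat{\outer};w)$, with the support function of the right-hand side of \Cref{envelope-deriv-sd} via a two-sided Danskin-type sandwich. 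Both halves of your sandwich have genuine gaps. For the ``$\geq$'' half, your claimed primal lower bound $\val(\widehat{\outer}+\tau w)\geq \mathcal{L}(\widehat{\outer}+\tau w,\inner^*,\langmult^*)$ is false in general: when $\inner^*(\widehat{\outer})$ remains feasible at $\widehat{\outer}+\tau w$ you only control $\val(\widehat{\outer}+\tau w)\geq\obj(\widehat{\outer}+\tau w,\inner^*)$, whereas $\mathcal{L}(\widehat{\outer}+\tau w,\inner^*,\langmult^*)=\obj(\widehat{\outer}+\tau w,\inner^*)+\sum_{\numconstr}\langmult[\numconstr]^*\constr[\numconstr](\widehat{\outer}+\tau w,\inner^*)$ sits \emph{above} that quantity (the multiplier terms are nonnegative at feasible points), and when $\inner^*(\widehat{\outer})$ becomes infeasible even $\obj(\cdot,\inner^*)$ fails to lower-bound $\val$. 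Complementary slackness zeroes the terms $\langmult[\numconstr]^*\constr[\numconstr]$ only at $\tau=0$; their first-order variation $\tau\sum_{\numconstr}\langmult[\numconstr]^*\langle\grad[\outer]\constr[\numconstr],w\rangle$ is exactly the quantity at stake, so it cannot be discarded, and ``expand to first order'' does not go through.

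For the ``$\leq$'' half, the majorization $\val\leq\phi$ with $\phi(\outer)=\max_{\inner\in\innerset}\mathcal{L}(\outer,\inner,\langmult^*)$ and equality at $\widehat{\outer}$ is correct, but Danskin's theorem evaluates $\phi'(\widehat{\outer};w)$ as a maximum over \emph{all} maximizers of $\mathcal{L}(\widehat{\outer},\cdot,\langmult^*)$ over $\innerset$, and this set can strictly contain $\innerset^*(\widehat{\outer})$: Lagrangian maximizers need not be feasible, hence need not be primal optimal (already in a linear program the Lagrangian at the optimal multiplier is constant in $\inner$, so \emph{every} point of $\innerset$ maximizes it). Stationarity and complementary slackness at $\inner^*$ say nothing about these extraneous maximizers, so the upper bound you obtain can strictly exceed the support function of the set in \Cref{envelope-deriv-sd}, and your two bounds need not meet. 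This is not cosmetic: the sharp directional-derivative formula in the sensitivity literature (Gol'shtein, Gauvin--Dubeau, Bonnans--Shapiro) has the max--min form $\sup_{\inner^*\in\innerset^*(\widehat{\outer})}\inf_{\langmult^*\in\langmults^*}\langle\grad[\outer]\mathcal{L},w\rangle$, and closing the gap between it and the max--max form implicit in \Cref{envelope-deriv-sd} is precisely the difficulty your direction-by-direction argument cannot resolve with the tools you list; the compactness and upper-semicontinuity issues you flag as the ``main obstacle'' are comparatively minor.
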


The envelope theorem states that the gradient of a differentiable value function is the gradient of the Lagrangian evaluated at the optimal solution.
Generalizing this fact,
our subdifferential envelope theorem states that every subgradient of the value function, $\val(\outer) = \max_{\inner \in \innerset: g(\outer, \inner) \geq 0} \obj(\outer, \inner)$ is a convex combination of the values of the gradient of the Lagrangian evaluated at the optimal solutions $(\inner^*(\outer), \langmult^*(\outer, \inner^*(\outer))) \in \innerset^*(\outer) \times \langmults^*(\outer, \inner^*(\outer))$.

With our envelope theorem in hand, we are now ready to present two gradient-descent/ascent-type algorithms for min-max Stackelberg games, which follow the gradient of the value function.%
%\footnote{We include pseudocode for \Cref{ngd} in \Cref{sec-app:algos}.}

Our first algorithm, \mydef{max-oracle gradient-descent}, following \citet{jin2020local}, assumes access to a max-oracle,
which given $\outer \in \outerset$, returns a $\delta$-best-response for the $\inner$ player.
That is, for all $\outer \in \outerset$, the max-oracle returns $\widehat{\inner} \in \innerset$ s.t.\ $\constr(\outer, \widehat{\inner}) \geq 0$ and $f(\outer, \widehat{\inner}) \geq \max_{\inner \in \innerset : \constr(\outer, \inner) \geq 0} f(\outer, \inner) - \delta$.
It then runs (sub)gradient descent on the outer player's value function,
using \Cref{envelope-sd} to compute the requisite subgradients.
Inspired by the multi-step gradient-descent algorithm of \citet{nouiehed2019solving} and \citeauthor{goodfellow2020generative}'s algorithm to train generative adversarial networks \cite{goodfellow2020generative}, our second algorithm, \mydef{nested gradient-descent/ascent} (\Cref{ngd}), computes both $\outer^{*}$ and $\inner^{*}$ explicitly, without oracle access.
We simply replace the max-oracle in our max-oracle gradient-descent algorithm by a projected gradient-ascent procedure, which again computes a $\delta$-best-response for the $\inner$ player.

Once $\widehat{\inner}$ is found at iteration $t$, one can compute
%\samy{the}{} %%% the use of ``the'' WAS INCORRECT !!! just something to be careful about.
optimal KKT multipliers $\langmult[1]^*(\outer^{(\outeriter)}, \widehat{\inner}(\outer^{(\outeriter)}))$, $\ldots$, $\langmult[\numconstrs]^*(\outer^{(\outeriter)}, \widehat{\inner}(\outer^{(\outeriter)}))$ for the 
%problem $\max_{\inner \in \innerset : \constr(\outer, \inner) \ge 0} f(\outer, \inner)$
outer player's value function, either via a system of linear equations using the complementary slackness conditions and the value of the objective function at the optimal, namely $(\outer^{(t)}, \widehat{\inner} (\outer^{(t)}))$, or by running gradient descent on the Lagrangian for the dual variables.
Additionally, most algorithms solving convex programs will return $\langmult^*(\outer^{(\outeriter)}, \widehat{\inner}(\outer^{(\outeriter)})) = (\langmult[1]^*(\outer^{(\outeriter)}, \widehat{\inner}(\outer^{(\outeriter)})), \ldots, \langmult[\numconstrs]^*(\outer^{(\outeriter)}, \widehat{\inner}(\outer^{(\outeriter)})))$ together with the optimal $\widehat{\inner}(\outer^{(\outeriter)}))$ without incurring any additional computational expense.
As a result, we assume that the optimal KKT multipliers $\langmult^*(\outer^{(\outeriter)}, \widehat{\inner}(\outer^{(\outeriter)}))$ associated with a solution $\widehat{\inner}(\outer^{(\outeriter)}) \in \argmax_{\inner \in \innerset:  \constr(\outer^{(\outeriter)}, \inner) \geq \zeros} \obj(\outer^{(\outeriter)}, \inner)$ can be computed in constant time. 

Having explained our two procedures, our next task is to derive their convergence rates.
It turns out that under very mild assumptions, i.e., when \Cref{main-assum} holds, the outer player's value function is Lipschitz continuous in $\outer$.
More precisely, under \Cref{main-assum} the value function is $\lipschitz[\val]$-Lipschitz continuous, where $\lipschitz[\val] = \max_{(\widehat{\outer}, \widehat{\inner}) \in \outerset \times \innerset} \left\| \grad[\outer] \obj\left( \widehat{\outer}, \inner^{*}(\widehat{\outer})\right) + \sum_{\numconstr = 1}^{\numconstrs} \langmult[\numconstr]^*(\widehat{\outer}, \inner^*(\widehat{\outer})) \grad[\outer] \constr[k]\left(\widehat{\outer}, \inner^{*}(\widehat{\outer})\right)\right\|$.%
\footnote{This max norm is well-defined since $\grad[\outer] \obj, \grad[\outer] \constr[1], \hdots, \grad[\outer] \constr[\numconstrs]$ are continuous, the constraint set is non-empty and compact, and by Slater's condition, optimal KKT multipliers are guaranteed to exist. By \Cref{envelope-sd} the norm of all subgradients of the value function are bounded by $\lipschitz[\val]$, implying that $\val$ is $\lipschitz[\val]$-Lipschitz continuous. Additionally, under Slater's condition various upper bounds on the KKT multipliers are known (e.g., \citet{nedic2009approximate} or Chapter VII, Theorem 2.3.3, \citet{urruty1993convex}), which simplify the computation of $\lipschitz[\val]$ (since exact values are not needed).}
% We note that Lipschitz-continuity implies that the value function is differentiable almost everywhere.\amy{why are we noting this? why do we care?}
The Lipschitz-continuity of the value function in turn suggests that an $(\varepsilon, \varepsilon)$-Stackelberg equilibrium should be computable in $O(\varepsilon^{-2})$ 
%\amy{can we write $\nicefrac{1}{\epsilon^2}$ instead? or do you have a strong preference for $\epsilon^{-2}$?} \deni{If you would like to sure, but then it will need to be changed throughout the document?} \amy{sounds time consuming!} 
iterations by our max-oracle gradient descent algorithm (\Cref{mogd}), since our method is a subgradient method. 
% \deni{Add something about the error term.}
%, which are known to have a complexity of $O(\varepsilon^{-2})$ on Lipschitz-continuous functions.

\begin{algorithm}[t!]
% \scalebox{0.85}{
% \begin{minipage}{1\columnwidth}
\caption{Max-Oracle Gradient Descent}
\label{mogd}
\textbf{Inputs:} $\outerset, \innerset, \obj, \constr, \learnrate, \outeriters, \outer^{(0)}$ \\ 
\textbf{Output:} $(\outer^{*}, \inner^{*})$
\begin{algorithmic}[1]
\For{$\outeriter = 1, \hdots, \outeriters + 1$}
    \State Find $\inner^{(\outeriter-1)} \in \innerset$ s.t.\ $\obj (\outer^{(\outeriter-1)}, \inner^{(\outeriter-1)}) \geq \val(\outer^{(\outeriter-1)})
    % \max_{\inner \in \innerset: \constr (\outer^{(\outeriter-1)}, \inner) \geq \zeros} \obj (\outer^{(\outeriter-1)}, \inner)
    - \delta$ \& $\constr (\outer^{(\outeriter-1)}, \inner^{(\outeriter-1)}) \geq \zeros$
    % \State Set $\inner^{(\outeriter-1)} = \widehat{\inner}$
    \State Set $\langmult^{(\outeriter-1)} \gets \langmult^*(\outer^{(\outeriter-1)}, \inner^{(\outeriter-1)})$
    \State Set $\outer^{(\outeriter)} \gets \project[\outerset] \left( \outer^{(\outeriter-1)} - \learnrate[\outeriter] \left[\grad[\outer] \obj (\outer^{(\outeriter-1)}, \inner^{(\outeriter-1)}) + \sum_{\numconstr = 1}^\numconstrs \langmult[\numconstr]^{(\outeriter-1)} \grad[\outer]  \constr[\numconstr](\outer^{(\outeriter-1)}, \inner^{(\outeriter-1)}) \right] \right)$
\EndFor
% \State Find $\inner^{(\outeriters)} \in \innerset$ s.t.\ $\obj (\outer^{(\outeriters)}, \inner^{(\outeriters)}) \geq \val(\outer^{(\outeriters)})
% % \max_{\inner \in \innerset: \constr (\outer^{(\outeriters)}, \inner) \geq \zeros} \obj (\outer^{(\outeriters)}, \inner)
% - \delta$ \& $\constr (\outer^{(\outeriters)}, \inner^{(\outeriters)}) \geq \zeros$
% \State $\inner^{(\outeriters)} = \widehat{\inner}$
\State \Return $(\outer^{(\outeriter)}, \inner^{(\outeriter)})_{\outeriter = 1}^\outeriters$
\end{algorithmic}
% \end{minipage}
% }
\end{algorithm}

\begin{theorem}
\label{l-mogd}
Consider a min-max Stackelberg game $(\outerset, \innerset, f, \constr)$ and suppose that \Cref{main-assum} holds. 
% Let $\lipschitz[\val] = \max_{(\widehat{\outer}, \widehat{\inner}) \in \outerset \times \innerset} \left\| \grad[\outer] \obj\left( \widehat{\outer}, \inner^{*}(\widehat{\outer})\right) + \sum_{\numconstr = 1}^{\numconstrs} \langmult[\numconstr]^*(\widehat{\outer}, \inner^*(\widehat{\outer})) \grad[\outer] \constr[k]\left(\widehat{\outer}, \inner^{*}(\widehat{\outer})\right)\right\|$, and $c = \max_{\x, \x^\prime \in \outerset} \left\| \x - \x^\prime \right\|$.
%
Suppose that \Cref{mogd} is run with step sizes $\{\learnrate[\outeriter]\}_{\outeriter} \subset \R_+$ s.t.\ $\sum_{k = 1}^{\infty}\learnrate[k] = \infty$ and $\sum_{k = 1}^{\infty}\learnrate[k][2] \leq \infty$, and outputs $(\outer^{(\outeriter)}, \inner^{(\outeriter)})_{\outeriter = 1}^\outeriters$. For any $\outeriter \in \N_{++}$, define the best iterate $(\bestiter[x][t], \bestiter[y][t]) \in \argmin_{(\outer^{(k)}, \inner^{(k)}) : k \in [t]} \obj(\outer^{(k)}, \inner^{(k)})$, the approximate subgradient $\subgrad (\outeriter) = \grad[\outer] \obj (\outer^{(\outeriter)}, \inner^{(\outeriter)}) + \sum_{\numconstr = 1}^\numconstrs \langmult[\numconstr]^*(\outer^{(\outeriter)}, \inner^{(\outeriter)}) \grad[\outer]  \constr[\numconstr](\outer^{(\outeriter)}, \inner^{(\outeriter)})$, and the gradient approximation error $\overline{\err[\outeriters]} = \frac{\sum_{\outeriter = 1}^{\outeriters} \learnrate[\outeriter] \left\|\subgrad (\outeriter) - \grad \val(\outer^{(\outeriter)})\right\| \left\|\outer^{(\outeriter)} - \outer^{*} \right\|}{\sum_{\outeriter = 1}^{\outeriters} \learnrate[\outeriter]}$.

Then, it holds that $\lim_{\outeriters \to \infty} f(\bestiter[x][\outeriters], \bestiter[x][\outeriters]) = f(\outer^{*}, \inner^{*}) + \overline{\err[\outeriters]}$ where for any $\varepsilon, \delta > 0$, let $(\outer^{*}, \inner^{*})$ be a $(\varepsilon, \delta)$-Stackelberg equilibrium.

Furthermore, for $\varepsilon \in (0,1)$, there exists an iteration $\outeriters \in O(\varepsilon^{-2})$, such that  for all $\outeriters^{*} \leq \outeriters$, $(\bestiter[x][\outeriters^{*}], \bestiter[y][\outeriters^{*}])$ is an $(\varepsilon + \overline{\err[\outeriters]}, \delta)$-Stackelberg equilibrium.
\end{theorem}

Note that the action profile \Cref{mogd} converges to depends on how well the output of the max-oracle allows us to approximate the gradient of the value function, which is captured by the term $\overline{\err[\outeriters]}$.
\sdeni{}{Closer to the Stackelberg equilibrium action $\outer^*$, i.e., for smaller $\left\|\outer^{(\outeriter)} - \outer^{*} \right\|$, the gradient approximation error, i.e., $\left\|\subgrad (\outeriter) - \grad \val(\outer^{(\outeriter)})\right\|$ matters less.
Note also that $\overline{\err[\outeriters]}$ can be further bounded as a function of the accuracy of the max-oracle $\delta$ if we assume in addition that $\obj$ and $\constr$ are Lipschitz-smooth and $\obj$ is bilipschitz, i.e., for some $\lipschitz[\obj] > 0$, $\forall \outer_1, \outer_2 \in X, \nicefrac{1}{\lipschitz[\obj]} \left\| \outer_1 - \outer_2 \right\| \leq \left\| \obj(\outer_1) - \obj(\outer_2) \right\| \leq \lipschitz[\obj] \left\| \outer_1 - \outer_2 \right\|$.%
\footnote{We note that Lipchitz-smoothness is a standard assumption in the optimization literature \cite{boyd2004convex}, and that bilipschitz continuity natural, as it implies that in addition to the gradient of the objective being bounded from above, i.e., Lipschitz-continuity, the norm of the gradient of the objective is also bounded away from zero, meaning that bilipschitz continuity captures all objectives whose solution occurs at a boundary of the constraints, i.e., the constraints are not vacuous.}
In particular, by the Lipschitz-smoothness of $\obj$ and $\constr$, the Lagrangian $\calL(\inner, \langmult; \outer) \doteq \obj(\outer, \inner) + \sum_{\numconstr = 1}^\numconstrs \langmult[\numconstr] \constr[\numconstr](\outer, \inner)$ is Lipschitz-smooth, in which case $\left\|\subgrad (\outeriter) - \grad \val(\outer^{(\outeriter)})\right\| \leq \lipschitz[\grad \calL]\|\inner^{(\outeriter)} - \inner^{*}(\outer)\|$, where $\lipschitz[\grad \calL] \in \R_+$ is the Lipschitz-smoothness coefficient of $\calL$.
Additionally, by the definition of the max-oracle and the bilipschitzness of $\obj$, it holds that $\delta \geq \obj(\outer^{(\outeriter)}, \inner^{*}(\outer^{(\outeriter)})) - \obj(\outer^{(\outeriter)}, \inner^{(\outeriter)}) \geq \nicefrac{1}{\lipschitz[\obj]}\|\inner^{(\outeriter)} - \inner^{*}(\outer)\|$.
Combining these two bounds yields $\left\|\subgrad (\outeriter) - \grad \val(\outer^{(\outeriter)})\right\| \leq \lipschitz[\grad \calL] \lipschitz[\obj] \delta$.
Finally, setting $c \doteq \max_{\x, \x^\prime \in \outerset} \left\| \x - \x^\prime \right\|$, we conclude that $c\lipschitz[\grad \calL] \lipschitz[\obj] \delta \geq |\overline{\err[\outeriters]}|$.}

\begin{algorithm}[htbp]
\caption{Nested Gradient Descent}
\label{ngd}
\textbf{Inputs:} $\outerset, \innerset, \obj, \constr, {\learnrate}^\outer, {\learnrate}^\inner, \outeriters_\outer, \outeriters_\inner,  \outer^{(0)}, \inner^{(0)}$ \\ 
\textbf{Output:} $\outer^{*}, \inner^{*}$
\begin{algorithmic}[1]
\For{$\outeriter = 1, \hdots, \outeriters_\outer + 1$}
    % \State $\inner^{(\outeriter)} = \inner^{(\outeriter-1)}$
    \State  $\inner^{(\outeriter-1)} = \inner^{(0)}$
    \For{$s = 1, \hdots, \outeriters_\inner$}
        \State $\inner^{(\outeriter-1)} = \project[{\{\inner \in \innerset: \constr (\outer^{(\outeriter-1)}, \inner) \geq \zeros\}}] \left( \inner^{(\outeriter-1)} + \learnrate[s][\inner] \left[ \grad[\inner] \obj (\outer^{(\outeriter-1)}, \inner^{(\outeriter-1)}) \right] \right)$
    \EndFor
    \State Set $\langmult^{(\outeriter-1)} = \langmult^*(\outer^{(\outeriter-1)}, \inner^{(\outeriter-1)})$
    \State Set $\outer^{(\outeriter)} = \project[\outerset] \left( \outer^{(\outeriter-1)} - \learnrate[\outeriter][\outer] \left[\grad[\outer] \obj (\outer^{(\outeriter-1)}, \inner^{(\outeriter-1)}) + \sum_{\numconstr = 1}^\numconstrs \langmult[\numconstr]^{(\outeriter-1)} \grad[\outer]  \constr[\numconstr](\outer^{(\outeriter-1)}, \inner^{(\outeriter-1)}) \right] \right)$
\EndFor
% \State $\inner^{(\outeriters)} = \inner^{(0)}$
% \For{$s = 1, \hdots, \outeriters_\inner$}
%         \State $\inner^{(\outeriters)} = \project[{\{\inner \in \innerset: \constr (\outer^{(\outeriters)},\inner) \geq \zeros \}}] \left( \inner^{(\outeriters)} + \learnrate[s][\inner] \grad[\inner] \obj (\outer^{(\outeriters)}, \inner^{(\outeriters)})] \right)$
% \EndFor
\State \Return $(\outer^{(\outeriter)}, \inner^{(\outeriter)})_{\outeriter = 1}^{\outeriters_\outer}$
\end{algorithmic}
\end{algorithm}

As is expected, the $O(\varepsilon^{-2})$ iteration complexity can be improved to $O({\varepsilon}^{-1})$, if additionally, $\val$ is strongly convex in $\outer$.
(See \Cref{sec-app:algos}, \Cref{ls-mogd}). Combining the convergence results for our max-oracle gradient descent algorithm with convergence results for gradient descent \cite{boyd2004convex}, we obtain the following convergence rates for the nested gradient descent-ascent algorithm (\Cref{ngd}).
We include the formal proof and statement for the case when \Cref{main-assum} holds and $\obj$ is Lipschitz-smooth in \Cref{sec-app:algos} (\Cref{l-ngd}).
The other results follow similarly.
\begin{theorem}
\label{ngd-thms}
% \amy{this theorem needs a label, so you can use the same number in the appendix}
Consider a min-max Stackelberg game, $(\outerset, \innerset, \obj, \constr)$ and suppose
%further 
that \Cref{main-assum} holds.
Then, under standard assumptions on the step sizes, the iteration complexities given 
%in \Cref{tab:conv-rates}
below hold for the computation of a $(\varepsilon, \varepsilon)$-Stackelberg equilibrium:
%
% \vspace{-3.33mm}
%\begin{table}[H]
%\caption{\label{tab:conv-rates}Summary of the iteration complexity of \Cref{ngd} under different assumptions on $\obj$.}
\renewcommand*\arraystretch{1.33}
\begin{center}
\begin{tabular}{|l||l|l|}\hline
 & $\obj$ is $\lipschitz[\grad \obj]$-smooth & $\obj$ is $\lipschitz[\grad \obj]$-smooth \\ & & + $\obj$ strongly concave in $\inner$  \\ \hline \hline
\Cref{main-assum} & $O(\varepsilon^{-3})$ &  $O\left(\varepsilon^{-2} \log(\varepsilon^{-1}) \right)$ \\\hline
\Cref{main-assum} + $\val$ strongly convex in $\outer$ & $O(\varepsilon^{-2})$ & $O(\varepsilon^{-1} \log(\varepsilon^{-1}))$ \\ \hline
\end{tabular}
\end{center}
\renewcommand*\arraystretch{1}
%\end{table}
\end{theorem}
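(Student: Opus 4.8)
The plan is to view \Cref{ngd} as an instantiation of the max-oracle gradient-descent scheme (\Cref{mogd}) in which the abstract $\delta$-max-oracle is realized by a finite number of projected gradient-ascent steps on the inner player's problem $\max_{\inner \in \innerset : \constr(\outer, \inner) \geq 0} \obj(\outer, \inner)$. Under \Cref{main-assum} this inner problem is a concave maximization over the set $\{\inner \in \innerset : \constr(\outer, \inner) \geq 0\}$, which is convex (each $\constr[\numconstr]$ is concave in $\inner$), compact, and nonempty (by Slater's condition), so projected gradient ascent is well-defined and its standard guarantees apply. I would therefore factor the total iteration complexity into the product of an outer count $N_\outer$ and a per-outer-step inner count $N_\inner$, and bound each separately, setting the oracle precision to $\delta = \varepsilon$ throughout so that the output is an $(\varepsilon, \varepsilon)$-Stackelberg equilibrium.

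For the outer count I would invoke the already-established convergence of the max-oracle scheme: \Cref{l-mogd} gives $N_\outer \in O(\varepsilon^{-2})$ when $\obj$ is merely convex in $\outer$, while its strongly-convex refinement \Cref{ls-mogd} gives $N_\outer \in O(\varepsilon^{-1})$ when $\obj$ is $\mu_\outer$-strongly convex in $\outer$. This step is essentially free once the inner routine is certified to return a feasible $\delta$-maximizer, since the envelope subgradient of \Cref{envelope-sd}, evaluated at the returned pair $(\inner, \langmult)$, is exactly the (approximate) subgradient those theorems consume; note that the outer rates require only that $\val$ is $\lipschitz[\val]$-Lipschitz (not smooth), which \Cref{main-assum} already supplies.

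For the inner count I would apply the textbook rates for projected gradient ascent on a $\lipschitz[\grad \obj]$-smooth objective over a convex set, using a constant step $1/\lipschitz[\grad \obj]$. When $\obj$ is only concave in $\inner$, the function-value suboptimality decays as $O(1/N_\inner)$, so attaining a $\delta = \varepsilon$ maximizer needs $N_\inner \in O(\varepsilon^{-1})$; when $\obj$ is additionally $\mu_\inner$-strongly concave in $\inner$, convergence is linear and $N_\inner \in O(\log(\varepsilon^{-1}))$ suffices. Multiplying $N_\outer \cdot N_\inner$ then yields the four table entries $O(\varepsilon^{-3})$, $O(\varepsilon^{-2}\log(\varepsilon^{-1}))$, $O(\varepsilon^{-2})$, and $O(\varepsilon^{-1}\log(\varepsilon^{-1}))$, respectively; this is precisely where the smoothness hypothesis is used, since a nonsmooth inner objective would force the slower $O(\varepsilon^{-2})$ subgradient rate.

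The main obstacle is not this bookkeeping but justifying that a $\delta$-accurate inner solution induces a sufficiently accurate \emph{outer} subgradient. The envelope formula (\Cref{envelope-sd}) is stated at the exact optimizer $\inner^*(\outer)$ with exact multipliers $\langmult^*$, whereas the nested routine supplies an approximate $\inner$ and the KKT multipliers computed at that point; I would rely on the continuity of $\grad[\outer]\obj$ and of $\grad[\outer]\constr[\numconstr]$ (guaranteed by \Cref{main-assum}) together with the Slater-based boundedness of the optimal KKT multipliers to show the resulting subgradient error is controlled, so that it is absorbed into the $(\varepsilon,\delta)$ guarantee of \Cref{l-mogd}. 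A secondary technical point is that each inner solve projects onto the $\outer$-dependent feasible set, so I would confirm that Slater's condition keeps this set uniformly nonempty across $\outerset$ and that the relevant smoothness and diameter constants are uniform in the outer iterate, ensuring the per-step inner rates hold with constants independent of $\outer$.
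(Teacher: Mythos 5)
Your proposal follows essentially the same route as the paper's own proof: the paper establishes \Cref{l-ngd} by combining the outer-loop guarantee of \Cref{l-mogd} (and its strongly convex refinement \Cref{ls-mogd}) with the textbook projected gradient-ascent rates of $O(\varepsilon^{-1})$ for smooth concave and $O(\log(\varepsilon^{-1}))$ for smooth strongly concave inner objectives, then multiplies the two counts to fill in the table, exactly as you do. If anything, your treatment is slightly more careful than the paper's, since you explicitly flag the need to justify that a $\delta$-accurate inner solution yields a sufficiently accurate envelope subgradient — a point the paper's proof passes over silently by feeding the approximate maximizer directly into the $\delta$-oracle interface of \Cref{l-mogd}.
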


% \begin{remark}
% \sdeni{}{The value function is guaranteed to be strongly convex under \Cref{assum:alter_1} and \ref{assum:alter_2}, if one in addition assumes that $\obj(\outer, \inner)$ is $\mu_\outer$-strongly-convex for all $\inner \in \innerset$.}
% \amy{we already added this remark. 3.7. is it better here or there? i edited that one, btw.} \deni{I added this so that we can make a decision, but I think I am leaning towards keeping the other one.}
% \end{remark}

Since the value function in the convex-concave dependent setting is not guaranteed to be differentiable (see \Cref{no-danskins-ex}), we cannot ensure that the objective function is Lipschitz-smooth in general.
Thus, unlike previous results for the independent setting that required this latter assumption to achieve faster convergence (e.g., \cite{nouiehed2019solving}), in our analysis of \Cref{mogd}, we assume only that the objective function is continuously differentiable, which leads to a more widely applicable, albeit slower, convergence rate.
Note, however, that we assume Lipschitz-smoothness in our analysis of \Cref{ngd}, as it allows for faster convergence to the $\inner$-player's optimal strategy, but this assumption could also be done away with again, at the cost of a slower convergence rate.

\section{An Economic Application: Fisher Markets}
\label{sec:fisher}

The Fisher market model, attributed to Irving Fisher \cite{brainard2000compute}, has received a great deal of attention recently, in particular by computer scientists, as its applications to fair division and mechanism design have proven useful for the design of automated markets in many online marketplaces.
In this section, we argue that 
%the computation of 
a competitive equilibrium in Fisher markets can be understood 
%the computation of 
a Stackelberg equilibrium of a convex-concave min-max Stackelberg game.
We then apply our first-order methods to compute these equilibria in various Fisher markets.

%\subsection{Model}

A \mydef{Fisher market} consists of $\numbuyers$ buyers and $\numgoods$ divisible goods \cite{brainard2000compute}. Each buyer $\buyer \in \buyers$ has a budget $\budget[\buyer] \in \mathbb{R}_{+}$, \sdeni{}{a \mydef{consumption set} $\consumptions[\buyer] \subset \R^\numgoods_+$,} and a \mydef{utility function} $\util[\buyer]: \mathbb{R}_{+}^{\numgoods} \to \mathbb{R}_+$. \sdeni{}{We also define the \mydef{space of joint consumption}, i.e., $\consumptions \doteq \bigtimes_{\buyer \in \buyers} \consumptions[\buyer] \subset \R^{\numbuyers \times \numgoods}_+$.} 
% \footnote{ \sdeni{WLOG, we assume that all budgets are strictly positive; otherwise, we could remove a buyer with a zero budget without affecting the equilibrium.}{}} 
As is standard in the literature, we assume that there is one divisible unit of each good 
%available 
in the market \cite{AGT-book}. An instance of a Fisher market is given by a tuple $(\numbuyers, \numgoods, \consumptions, \util, \budget)$, where $\util = \left\{\util[1], \hdots, \util[\numbuyers] \right\}$ is a set of utility functions, one per buyer, and $\budget \in \R_{+}^{\numbuyers}$ is the vector of buyer budgets, for which, without loss of generality, we assume $\sum_{\buyer \in \buyers} \budget[\buyer] = 1$.
We abbreviate a Fisher market as $(\consumptions, \util, \budget)$ when $\numbuyers$ and $\numgoods$ are clear from context. 

A function $f: \R^m \to \R$ is said to be \mydef{homogeneous of degree $k$} if $\forall \allocation[ ] \in \R^m, \lambda > 0, f(\lambda \allocation[ ]) = \lambda^k f(\allocation[ ])$.
%Unless otherwise indicated, a homogeneous function is assumed to be homogeneous of degree 1.
A Fisher market $(\consumptions, \util, \budget)$ is said to be \mydef{homothetic} if, for all buyers $\buyer \in \buyers$, $\util[\buyer]$ is a continuous 
%\amy{continuous is part of the def'n of homothetic? i just didn't remember that.}\deni{Yes, because otherwise Marshallian Demand is not guaranteed to exist.}
and homogeneous of degree 1, i.e., for all $\lambda \in \R_+$ $\util[\buyer](\lambda \allocation[\buyer]) = \lambda \util[\buyer](\allocation[\buyer])$.

Goods are assigned \mydef{prices} $\price = \left(\price[1], \hdots, \price[\numgoods] \right)^T \in \mathbb{R}_+^{\numgoods}$. An \mydef{allocation} $\allocation = \left(\allocation[1], \hdots, \allocation[\numbuyers] \right)^T \in \R_+^{\numbuyers \times \numgoods}$ is a map from goods to buyers, represented as a matrix, s.t. $\allocation[\buyer][\good] \ge 0$ denotes the amount of good $\good \in \goods$ allocated to buyer $\buyer \in \buyers$. A tuple $(\price^*, \allocation^*)$ is said to be a \mydef{competitive (or Walrasian) equilibrium} of Fisher market $(\consumptions, \util, \budget)$ if 1.~buyers are utility maximizing, constrained by their budget, i.e., $\forall \buyer \in \buyers, \allocation[\buyer]^* \in \argmax_{\allocation[ ] \in \consumptions[\buyer]: \allocation[ ] \cdot \price^* \leq \budget[\buyer]} \util[\buyer](\allocation[ ])$;
and 2.~the market clears, i.e., $\forall \good \in \goods,  \price[\good]^* > 0 \Rightarrow \sum_{\buyer \in \buyers} \allocation[\buyer][\good]^* = 1$ and $\price[\good]^* = 0 \Rightarrow\sum_{\buyer \in \buyers} \allocation[\buyer][\good]^* \leq 1$.

We now formulate the problem of computing a competitive equilibrium $(\price^*, \allocation^*)$ of a Fisher market $(\consumptions, \util, \budget)$, where $\util$ is a set of continuous, concave, and homogeneous utility functions, as a convex-concave min-max Stackelberg game,
a perspective which
%%% SPACE
has not been taken before. 
Fisher markets can by solved via the Eisenberg-Gale convex program~\cite{eisenberg1959consensus}.
Recently, \citet{cole2019balancing} derived a convex program, which differs from the dual of the Eisenberg-Gale program by a constant factor \cite{goktas2020cchwine}, namely:
\begin{align}
    \min_{\price \in \simplex[\numgoods]} \sum_{\good \in \goods} \price[\good] + \sum_{\buyer \in \buyers} \budget[\buyer] \log \left( \max_{\allocation[\buyer] \in \consumptions[\buyer] : \allocation[\buyer] \cdot \price \leq \budget[\buyer]} \util[\buyer](\allocation[\buyer]) \right) \enspace .
    \label{cole-tao}
\end{align}

\noindent

Rearranging, we obtain the following convex-concave min-max Stackelberg game:
\begin{align}
    \min_{\price \in \simplex[\numgoods]} \max_{\allocation \in \consumptions :  \allocation \price \leq \budget} \sum_{\good \in \goods} \price[\good] + \sum_{\buyer \in \buyers}  \budget[\buyer] \log \left(  \util[\buyer](\allocation[\buyer]) \right) \enspace .
    \label{fisher-program}
\end{align}

This min-max game is played by a fictitious (Walrasian) auctioneer and a set of buyers, who effectively play as a team.
The objective function in this game is then the sum of the auctioneer's welfare (i.e., the sum of the prices) and the Nash social welfare of buyers (i.e., the second summation).
As the buyer's action set 
%in the inner optimization problem 
is dependent on the price vector $\price$ selected by the auctioneer, we cannot use existing first-order methods to solve this problem.
However, we can use Algorithms \ref{mogd} and \ref{ngd}.
%(\Cref{sec-app:algos}).
%to solve for a competitive equilibrium.

Starting from \Cref{fisher-program}, define the auctioneer's value function $\val(\price) = \max_{\allocation \in \consumptions[\buyer]: \allocation \price \leq \budget} \sum_{\good \in \goods} \price[\good] + \sum_{\buyer \in \buyers}  \budget[\buyer] \log \left( \util[\buyer] (\allocation[\buyer]) \right)$, and buyer $\buyer$'s demand set  $\outerset_{\buyer}^*(\price, \budget) = \argmax_{\allocation[\buyer] \in \consumptions[\buyer]: \allocation[\buyer] \price \leq \budget} \util[\buyer] (\allocation[\buyer])$.
\Cref{envelope-sd} then provides the relevant subgradients so that we can run Algorithms~\ref{mogd} and \ref{ngd}, namely
$\subdiff[\price] \val(\price) = \ones[\numgoods] - \sum_{\buyer \in \buyers}  \outerset_{\buyer}^*(\price, \budget)$ and $\grad[{\allocation[\buyer]}]  \left( \sum_{\good \in \goods} \price[\good] + \sum_{\buyer \in \buyers}  \budget[\buyer] \log \left(  \util[\buyer](\allocation[\buyer])\right)\right) = \frac{\budget[\buyer]}{\util[\buyer](\allocation[\buyer])} \grad[{\allocation[\buyer]}] \util[\buyer](\allocation[\buyer])$,
using the Minkowski sum to add set-valued quantities, where $\ones[\numgoods]$ is the vector of ones of size $\numgoods$.%
\footnote{We include detailed descriptions of the algorithms applied to Fisher markets in \Cref{sec-app:fisher}.}

\citet{fisher-tatonnement} observed that solving the dual of the Eisenberg-Gale program (Equation~\ref{cole-tao}) via (sub)gradient descent \cite{devanur2002market} is equivalent to solving for a competitive equilibrium in a Fisher market using an auction-like economic price adjustment process named \emph{t\^atonnement} that was first proposed by L\'eon Walras in the 1870s \cite{walras}.
The \emph{t\^atonnement} process increases the prices of goods that are overdemanded and decreases the prices of goods that are underdemanded. 
% Given a Fisher market $(\util, \budget)$, the total demand of all buyers at prices $\price$ is defined as $\outerset^*(\price, \budget) = \argmax_{\allocation \geq \zeros: \allocation \price \leq \budget} \util[\buyer](\allocation[\buyer])$. \amy{$\allocation \ge \zeros$?}
Mathematically, the (vanilla) \emph{t\^atonnement} process \cite{arrow-hurwicz, walras} is defined as
% \footnote{Note that \citeauthor{fisher-tatonnement} \cite{fisher-tatonnement} consider a more general t\^atonnement rule, but for the purposes of this exposition we consider the ``vanilla'' t\^atonnement rule.}
$\price(t) = \max \left\{ \price(t-1) + \learnrate[t] \left(  \sum_{\buyer \in \buyers} \outer_{\buyer}^*(\price(t), \budget) - 1\right), 0 \right\}$ for $\price(0) \in \R^\numgoods_{+}$, 
where $\outer_{\buyer}^*(\price(t), \budget[\buyer]) \in  \outerset_{\buyer}^*(\price(t), \budget[\buyer]) = \argmax_{\allocation[\buyer] \geq \zeros : \allocation[\buyer] \price(t) \leq \budget[\buyer]} \util[\buyer](\allocation[\buyer]) $ is the demand set of buyer $\buyer$.
The max-oracle algorithm applied to \Cref{fisher-program} 
%(\Cref{mogdfm})
is then equivalent to a \emph{t\^atonnemement} process where the buyers report a $\delta$-utility maximizing demand.
Further, we have the following corollary of \Cref{l-mogd}.
%  \amy{insert references to the theorems of which this is a corollary. is it both 3.3 and 3.4?}
%  \amy{you cannot have a corollary of an algorithm!}
%
\begin{corollary}
Let $(\consumptions, \util, \budget)$ be a Fisher market with equilibrium price vector $\price^{*}$, where $\util$ is a set of continuous, concave, homogeneous, and continuously differentiable utility functions\sdeni{}{, and the joint consumption space $\consumptions$ is bounded away from $\zeros$}. Consider the t\^atonnement process
~\cite{goktas2020cchwine}.%
% \footnote{The \emph{entropic t\^atonnement} (see \citet{goktas2020cchwine}) process ensures that prices remain bounded away from zero, so that excess demand is likewise bounded and \Cref{main-assum} is satisfied.}
% \amy{algorithm 3 has not yet been referenced in the text. better not to talk about algorithm 3 here, but to instead move this corollary to AFTER the discussion of tatonnement.} 
Assume that the step sizes $\learnrate[\outeriter]$ satisfy the usual conditions: $ \sum_{k = 1}^{\outeriters}\learnrate[k][2] \leq \infty$ and $\sum_{k = 1}^{\outeriters}\learnrate[k] = \infty$. If $\bestiter[p][t] \in \argmin_{\price^{(k)}: k \in [t]} \val(\price^{(k)})$, then $\lim_{k \to \infty} \val(\bestiter[p][k]) = \val(\price^{*})$.
%\amy{since $\bestiter[p][t]$ is chosen from a set, how can you guarantee convergence?} \deni{Convergence is guaranteed because even though the minimizers might not be unique, the minimum is unique, so the limit makes sense.}
Additionally, t\^atonnement converges to an $\varepsilon$-competitive equilibrium in $O(\varepsilon^{-2})$ iterations.
\end{corollary}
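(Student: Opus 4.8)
The plan is to show that the entropic t\^atonnement process is exactly \Cref{mogd} applied to the min-max Stackelberg game \Cref{fisher-program}, so that the corollary follows from \Cref{l-mogd}. The argument has three stages: verify \Cref{main-assum} for \Cref{fisher-program}; identify the t\^atonnement update with the max-oracle gradient-descent update; and translate the resulting Stackelberg guarantee into the stated competitive-equilibrium guarantee.

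First I would check \Cref{main-assum}. Writing the inner constraints as $\constr[\buyer](\price, \allocation) = \budget[\buyer] - \allocation[\buyer] \cdot \price \geq 0$, Slater's condition is met by $\allocation = \zeros$, since $\budget[\buyer] > 0$ gives $\budget[\buyer] - \zeros \cdot \price > 0$ for every $\price$. The objective gradient in $\price$ is $\grad[\price] \obj = \ones[\numgoods]$ and each constraint gradient is $\grad[\price] \constr[\buyer] = -\allocation[\buyer]$, both continuous, so Part 2 holds. The objective is concave in $\allocation$ because each $\budget[\buyer] \log \util[\buyer](\allocation[\buyer])$ is a concave, increasing transform of a concave function, and the constraints are affine in $\allocation$, giving Part 3a; Part 3b---convexity of $\val$ in $\price$---is precisely the content of the Cole program \Cref{cole-tao} being convex. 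The role of the entropic regularization is to keep prices bounded away from $\zeros$, which guarantees that each buyer's optimal utility is strictly positive (so the logarithm is finite) and that the excess demand, hence every subgradient in \Cref{envelope-sd}, is uniformly bounded, making $\lipschitz[\val]$ finite.

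Next I would identify the two dynamics. Applying \Cref{envelope-sd} to \Cref{fisher-program} yields $\subdiff[\price] \val(\price) = \ones[\numgoods] - \sum_{\buyer \in \buyers} \outerset_{\buyer}^*(\price, \budget)$, the negative of the excess demand. Hence the subgradient step of \Cref{mogd}, $\price(t) = \project[{\R_+^{\numgoods}}]\!\left(\price(t-1) - \learnrate[t]\,\subdiff[\price]\val(\price(t-1))\right)$, coincides with the t\^atonnement update $\price(t) = \max\{\price(t-1) + \learnrate[t](\sum_{\buyer}\outer_{\buyer}^* - \ones[\numgoods]), \zeros\}$, and the $\delta$-approximate max-oracle is exactly the requirement that buyers report $\delta$-utility-maximizing demands. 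With this identification, \Cref{l-mogd} gives best-iterate convergence of the objective and an $O(\varepsilon^{-2})$ iteration bound. Since the max-oracle makes $\obj(\price^{(k)}, \allocation^{(k)}) \geq \val(\price^{(k)}) - \delta$, with equality when demand is reported exactly, the best iterate of $\obj$ agrees (up to $\delta$) with the best iterate of $\val$, which converts the conclusion of \Cref{l-mogd} into $\lim_{k \to \infty} \val(\bestiter[p][k]) = \val(\price^{*})$, using that $\price^{*}$ minimizes $\val$.

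The main obstacle is the final translation from an $(\varepsilon, \delta)$-Stackelberg equilibrium of \Cref{fisher-program} to an $\varepsilon$-competitive equilibrium of $(\util, \budget)$: approximate optimality of the outer objective must be shown to imply both approximate market clearing (controlled by the price/excess-demand term $\ones[\numgoods] - \sum_{\buyer} \outerset_{\buyer}^*$) and approximate budget-feasible utility maximization for each buyer (controlled by the Nash-social-welfare term $\sum_{\buyer} \budget[\buyer] \log \util[\buyer](\allocation[\buyer])$). This step relies on the homogeneity of the utilities, which underlies the equivalence between \Cref{fisher-program} and the competitive-equilibrium conditions, and would invoke the Fisher-market correspondence between the two solution concepts; once that correspondence is established, the $O(\varepsilon^{-2})$ rate transfers verbatim from \Cref{l-mogd}.
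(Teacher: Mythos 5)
Your proposal is correct and takes essentially the same route as the paper: the paper likewise obtains the corollary by identifying the entropic t\^atonnement update with the max-oracle gradient-descent step of \Cref{mogd} applied to \Cref{fisher-program}, using the subgradient formula $\subdiff[\price] \val(\price) = \ones[\numgoods] - \sum_{\buyer \in \buyers} \outerset_{\buyer}^*(\price, \budget)$ from \Cref{envelope-sd}, invoking \Cref{l-mogd} for the best-iterate and $O(\varepsilon^{-2})$ guarantees, and relying on the entropic process (as in the footnote) to keep prices bounded away from zero so that \Cref{main-assum} holds. Your write-up is in fact more explicit than the paper's, which relegates the verification of \Cref{main-assum} to a footnote and treats the translation from an $(\varepsilon,\delta)$-Stackelberg equilibrium of \Cref{fisher-program} to an $\varepsilon$-competitive equilibrium as immediate from the known Fisher-market correspondence, exactly the step you flag.
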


If we also apply the nested gradient-descent-ascent algorithm to \Cref{fisher-program}, we arrive at an algorithm that is arguably more descriptive of market dynamics than \emph{t\^atonnement} itself, as it also includes the demand-side market dynamics of buyers optimizing their demands, potentially in a decentralized manner.
%for their optimal bundles of goods.
%\sdeni{}{Additionally, note that buyers can compute their demands in a decentralized manner and do not have to coordinate with each other since no buyers' allocation affects another one's allocation in \Cref{fisher-program}.}
%\amy{can you decentralize this computation? i.e., can it be done per-buyers? or does it have to be coordinated on behalf of the buyers somehow?}
The nested \emph{t\^atonnement} algorithm essentially describes a two-step trial-and-error (i.e., \emph{t\^atonnement}) process, where first the buyers try to discover their optimal demand by increasing their demand for goods in proportion to the marginal utility the goods provide, and then the seller/auctioneer adjusts market prices by decreasing the prices of goods that are underdemanded and increasing the prices of goods that are overdemanded.
As buyers can calculate their demands in a decentralized fashion, the nested \emph{t\^atonnement} algorithm offers a more complete picture of market dynamics then the classic \emph{t\^atonnement} process.

\subsection{Experiments}
In order to better understand, the iteration complexity of Algorithms \ref{mogd} and \ref{ngd} (\Cref{sec-app:algos}), we ran a series of experiments on Fisher markets with three different classes of utility functions.%
%, namely, linear, Cobb-Douglas, and Leontief utilities.
\footnote{Our code can be found at \coderepo.}
Each utility structure endows \Cref{fisher-program} with different smoothness properties, which allows us to compare the efficiency of the algorithms under varying conditions.

Let $\valuation[\buyer] \in \R^\numgoods$, be a vector of parameters for the utility function of buyer $\buyer \in \buyers$.
We have the following utility function classes:
Linear: $\util[\buyer](\allocation[\buyer]) = \sum_{\good \in \goods} \valuation[\buyer][\good] \allocation[\buyer][\good]$,  Cobb-Douglas:  $\util[\buyer](\allocation[\buyer]) = \prod_{\good \in \goods} \allocation[\buyer][\good]^{\valuation[\buyer][\good]}$, Leontief:  $\util[\buyer](\allocation[\buyer]) = \min_{\good \in \goods} \left\{ \frac{\allocation[\buyer][\good]}{\valuation[\buyer][\good]}\right\}$.
\Cref{fisher-program} satisfies the smoothness properties listed in Table~\ref{tab:util-prop} when $\util$ is one of these three classes.
%a set of Linear, Cobb-Douglas, and Leontief utility functions. 
Our goals are two-fold.
First, we want to understand how the empirical convergence rates of Algorithms \ref{mogd} and \ref{ngd} (which, when applied to \Cref{fisher-program} give rise to Algorithms \ref{mogd:fm} and \ref{ngd:fm} in \Cref{sec-app:fisher}, respectively) compare to their theoretical guarantees under different utility structures.
Second, we want to understand the extent to which the convergence rates of these two algorithms differ in practice. We include a more detailed description of our experimental setup in \Cref{sec-app:fisher}.
%
%\vspace{-4.44mm}
\begin{table}[H]
\begin{center}
\begin{tabular}{|l|l|l|}\hline
 & $\val$ is differentiable & \Cref{main-assum} holds  \\ \hline
Linear & $\times$ & \checkmark \\ \hline
Cobb-Douglas & \checkmark  & \checkmark \\ \hline
Leontief & \checkmark & $\times$  \\\hline
\end{tabular}
\end{center}
\caption{\label{tab:util-prop}
Smoothness properties satisfied by \Cref{fisher-program} assuming different utility functions. Note that \Cref{main-assum} does not hold for Leontief utilities, because they are not differentiable.}
\end{table}  
\Cref{fig:experiments1} describes the empirical convergence rates of Algorithms \ref{mogd} and \ref{ngd} for linear, Cobb-Douglas, and Leontief utilities.
We observe that convergence is fastest in Fisher markets with Cobb-Douglas utilities, followed by linear, and then Leontief. 
We seem to obtain a tight convergence rate of $O(\nicefrac{1}{\sqrt{T}})$ for linear utilities, which seems plausible, as the value function is not differentiable assuming linear utilities, and hence we are unlikely to achieve a better convergence rate.
On the other hand, for Cobb-Douglas utilities, both the value and the objective function are differentiable; in fact, they are both twice continuously differentiable, making them both Lipschitz-smooth.
These factors combined seem to provide a much faster convergence rate than $O(\nicefrac{1}{\sqrt{T}})$.

Fisher markets with Leontief utilities, in which the objective function is not differentiable, are the hardest markets of the three for our algorithms to solve.
Indeed, our theory does not even predict convergence.
Still, convergence is not entirely surprising,
%even though \Cref{main-assum} does not hold, 
as \citet{fisher-tatonnement} have shown that buyer demand throughout \emph{t\^atonnement} is bounded for Leontief utilities, which means that the objective function of \Cref{fisher-program} is locally Lipschitz around t\^atonnement trajectories: i.e., any subgradient computed by the algorithm will be bounded.
Overall, our theory suggests that differentiability of the value function is not essential to guarantee convergence of first-order methods in convex-concave games, while our experiments seem to suggest that differentiability of the objective function is more important than differentiability of the value function in regards to the convergence rate.

% \begin{figure}
% \centering
%     \begin{subfigure}{.50\textwidth}
%     \centering
%     \includegraphics[scale=0.4]{figures/obj_high.jpg}
%     \caption{Values of the objective function throughout \Cref{mogdfm}, and \ref{ngdfm} when starting at an initial point with high objective value}
%     \label{fig:obj_high}
% \end{subfigure}  \
% \begin{subfigure}{.45\textwidth}
%     \centering
%     \includegraphics[scale=0.4]{figures/obj_low.jpg}
%     \caption{Values of the objective function throughout \Cref{mogdfm}, and \ref{ngdfm} when starting at an initial point with a low objective value}
%     \label{fig:obj_low}
% \end{subfigure}
% \end{figure}

\begin{figure}
    \centering
    \includegraphics[width=0.675\linewidth]{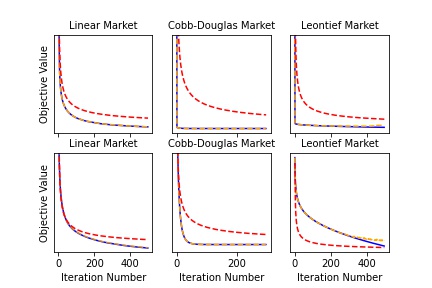}
    \caption{The first row describes the average trajectory of the value of the objective function for a randomly initialized market on each iteration of both \Cref{mogd:fm} (in {\color{blue} blue}) and \Cref{ngd:fm} (in {\color{orange} orange}) when the starting prices are high, while the second row describes the average trajectory of the objective function when starting prices are low for linear, Cobb-Douglas, and Leontief Fisher markets respectively. The dashed red line represents a convergence rate of $O(\nicefrac{1}{\sqrt{T}})$, which corresponds to an iteration complexity of $O(\nicefrac{1}{\epsilon^2})$.}
    \label{fig:experiments1}
\end{figure}
In order to investigate whether the outputs of \Cref{mogd}, which uses an exact max-oracle (i.e., 0-max-oracle) are more precise than those of \Cref{ngd}, we solved 500 randomly initialized markets with both algorithms.
We then ran a James' first-order test \cite{james1994test,stestsR} on the mean output of both algorithms to see if their difference was statistically significant.
Our calculations produced $p$-values of 
$0.69$, $0$, and $1.06 \times 10^{-18}$,
for Fisher markets with linear, Cobb-Douglas, and Leontief utilities, respectively.
At a significance level of $0.05$, these results are \emph{not\/} statistically significant for linear utilities only.
This result can be attributed to the fact that the value function is not differentiable in the linear case, which makes the nested gradient descent/ascent algorithm less precise.

\section{Conclusion}

% \deni{All this included below!}\amy{contrib: tatonnement was gradient descent; now it is gradient descent ASCENT! future work: can we do only one update in the ASCENT direction? will we get convergence? future theory + experiments. (probably not last iterate.) and what does this mean for Fisher markets? is there a new interpretation of this market dynamic? and what about Leontief? we don't yet have a proof, since they are not differentiable.}

% \amy{iiuc, by testing Leontief utilities, we are going beyond the scope of our current theory. in particular, we can perhaps conjecture from these exp'ts that our Assumption 4.1 is stronger than it needs to be. all we really need is a local Lipschitz condition, with which might be able to prove a more general envelope theorem.}

% \deni{Additionally, this new framework might allow to generalize recent large scale first order computation methods for Fisher markets equilibria by \citeauthor{gao2020polygm} to larger class of utility function. We already provide a starting point with a $O(\nicefrac{1}{\sqrt{T}})$ convergence rate.}

%%% TODO sometime !!!
%\amy{do we say in the conclusion that we solve a class of non-convex-concave problems? we should, b/c it is not enough to just say this in a footnote.}

In this paper, we study a class of constrained convex-concave min-max optimization problems with dependent constraint sets, which we call convex-concave min-max Stackelberg games.
As such games do not afford a minimax theorem in general, we focused on existence and computation of their Stackelberg equilibria.
We established the existence of Stackelberg equilibria in these games, assuming continuous objective functions and suitable convexity assumptions on the players' action sets.
We then introduced a novel subdifferential envelope theorem, which formed the core of two subgradient methods with polynomial-time iteration complexity that converge to Stackelberg equilibria.
Finally, we applied our theory to the computation of competitive equilibria in Fisher markets.
This application yielded a new variant of the classic \emph{t\^atonnement\/} process, where, in addition to the auctioneer iteratively adjusting prices, the buyers iteratively compute their demands.
A further question of interest both for Fisher market dynamics and convex-concave min-max Stackelberg games more generally is whether gradient-descent-ascent (GDA) converges in the dependent action set setting as it does in the independent action setting \cite{lin2020gradient}.
GDA dynamics for Fisher markets correspond to myopic best-response dynamics (see, for example, \citet{monderer1996potential}).
%myopic2014cary
We would expect such a result to be of computational as well as economic interest.

\printbibliography

\appendix
\section{Omitted Proofs \Cref{sec:prelim}}
\label{sec-app:prelim}

\stacklebergexistence*
\begin{proof}[Proof of \Cref{stackleberg-existence}]
By Theorem 5.9 of \citet{rockafellar2009variational}, the constraint correspondence $C(\outer) = \left\{\inner \in \innerset \mid \constr(\outer, \inner) \geq \zeros \right\}$ is continuous, i.e., upper and lower hemi-continuous (see for instance chapter 5 of \citet{rockafellar2009variational}).
Hence, by the Maximum Theorem \cite{berge1997topological}, the outer player's value function $\val(\outer) = \max_{\inner \in \innerset: \constr(\outer, \inner) \geq \zeros} \obj(\outer, \inner) = \max_{\inner \in C(\outer)} \obj(\outer, \inner)$ is continuous, and the inner player's solution correspondence $\innerset^*(\outer) = \argmax_{\inner \in \innerset: \constr(\outer, \inner) \geq \zeros} \obj(\outer, \inner)$ is non-empty, for all $\outer \in \outerset$.
Since $\val$ is continuous and $\outerset$ is compact and non-empty, by the extreme value theorem \cite{protter2012extreme}, there exists a minimizer $\outer^* \in \outerset$ of $\val$.
Hence $(\outer^*, \inner^*(\outer^*))$, with $\inner^*(\outer^*) \in \innerset^*(\outer^*)$, is a Stackelberg equilibrium of $(\outerset, \innerset, \obj, \constr)$.

Let $(\outer_1, \inner_1)$ and $(\outer_2, \inner_2)$ be two Stackelberg equilibria whose values differ.
WLOG, suppose $\obj(\outer_1, \inner_1) > \obj(\outer_2, \inner_2)$, so that $\val(\outer_1) = \obj(\outer_1, \inner_1) > \obj(\outer_2, \inner_2) = \val(\outer_2)$, where the first and last equality follow from the definition of Stackelberg equilibrium.
But then $(\outer_1, \inner_1)$ cannot be a Stackelberg equilibrium, since $\outer_1$ is not a minimizer of the outer player's value function.
Therefore, there cannot exist two Stackelberg equilibria whose values differ, i.e., the value of all Stackelberg equilibria is unique.
\end{proof}

\begin{proposition}\label{thm:convex-value-func}
Consider a min-max Stackelberg game $(\outerset, \innerset, \obj, \constr)$.
Suppose that
1.~(Slater's condition) 
$\forall \outer \in \outerset, \exists \widehat{\inner} \in \innerset$ s.t.\ $g_{\numconstr}(\outer, \widehat{\inner}) > 0$, for all $\numconstr \in [\numconstrs]$.
2.~$\grad[\outer] f, \grad[\outer] \constr[1], \ldots, \grad[\outer] \constr[\numconstrs]$ are continuous;
3$'$a.~$\obj$ is continuous and convex-concave,
%for ultimate flexibility, there are a total of $\numconstrs \innerdim$ constraints $\constr[1], \ldots, \constr[\innerdim], \constr[\innerdim + 1], \ldots, \constr[2 \innerdim], \ldots, \constr[(\numconstr - 1) \innerdim], \ldots, \constr[\numconstr \innerdim]$
3$'$b.~the constraints $\constr$ are of the form $\constr[1](\outer, \inner[1]), \constr[2](\outer, \inner[2]), \ldots, \constr[\numconstrs](\outer, \inner[\numconstrs])$%
\footnote{To simplify notation, we assume $\innerdim = \numconstrs$, but this theorem holds in general.}
are continuous in $(\outer, \inner)$, and concave in $\inner$, for all $\outer \in \outerset$;
3$'$c.~for all $\numconstr \in [\numconstrs]$, $\outer \in \outerset$, and $\inner \in \innerset$, $\frac{- \grad[{\inner[\numconstr]}]\obj(\outer, \inner)}{ \grad[{{\inner[\numconstr]}}] \constr[\numconstr](\outer, \inner[\numconstr])} \constr[\numconstr](\outer, \inner[\numconstr])$ is convex in $\outer$.
Then the value function $\val$ associated with $(\outerset, \innerset, \obj, \constr)$ is convex.
% \sdeni{}{and $\outerset$, $\innerset$ are convex}., the outer player's value function, $\val(\outer) = \max_{\inner \in \innerset : \constr(\outer, \inner) \geq \zeros} \obj(\outer, \inner)$, is continuous and convex.
\end{proposition}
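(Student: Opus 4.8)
The plan is to handle the inner maximization by Lagrangian duality and then reduce convexity of $\val$ to the pointwise statement that assumption (3$'$c) is engineered to supply. First I would use Slater's condition together with the concavity of $\obj$ and of each $\constr[\numconstr]$ in $\inner$ (assumptions 1 and 3$'$b) to conclude that the inner problem $\max_{\inner \in \innerset : \constr(\outer, \inner) \geq \zeros} \obj(\outer, \inner)$ enjoys strong duality and admits optimal KKT multipliers $\langmult^{*}(\outer) \geq \zeros$ at every $\outer \in \outerset$. Because (3$'$b) makes the $\numconstr$-th constraint depend on $\inner$ only through the scalar $\inner[\numconstr]$, the inner stationarity condition decouples coordinatewise, and I would read off the multiplier explicitly as $\langmult[\numconstr]^{*}(\outer) = \frac{-\grad[{\inner[\numconstr]}] \obj(\outer, \inner^{*}(\outer))}{\grad[{\inner[\numconstr]}] \constr[\numconstr](\outer, \inner[\numconstr]^{*}(\outer))}$: i.e.\ the optimal multiplier is exactly the ratio appearing in (3$'$c), evaluated at the optimizer $\inner^{*}(\outer)$.

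Substituting this identity back into the Lagrangian gives the representation $\val(\outer) = \obj(\outer, \inner^{*}(\outer)) + \sum_{\numconstr=1}^{\numconstrs} \frac{-\grad[{\inner[\numconstr]}] \obj(\outer, \inner^{*}(\outer))}{\grad[{\inner[\numconstr]}] \constr[\numconstr](\outer, \inner[\numconstr]^{*}(\outer))} \constr[\numconstr](\outer, \inner[\numconstr]^{*}(\outer))$, which I abbreviate as $\val(\outer) = F(\outer, \inner^{*}(\outer))$, where $F(\outer, \inner)$ denotes the same expression with $\inner^{*}(\outer)$ replaced by a free $\inner$. The purpose of this rewriting is that for \emph{each fixed} $\inner \in \innerset$, the map $\outer \mapsto F(\outer, \inner)$ is convex: its first term is convex in $\outer$ by (3$'$a), and each summand is convex in $\outer$ by precisely (3$'$c). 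Thus $\val$ is the restriction of a family of convex-in-$\outer$ functions, indexed by $\inner$, to the optimal selection $\inner^{*}(\outer)$.

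The remaining task, which I expect to be the crux, is to upgrade this pointwise-in-$\inner$ convexity into convexity of the envelope $\outer \mapsto F(\outer, \inner^{*}(\outer))$. The clean route is to establish a variational representation of the form $\val(\outer) = \max_{\inner \in \innerset} F(\outer, \inner)$, since a pointwise maximum of convex functions is convex; equivalently, it would suffice to show that $\inner^{*}(\outer)$ attains the maximum of $F(\outer, \cdot)$ over $\innerset$. Here a delicate point arises: a first-order computation shows that $\inner^{*}(\outer)$ is merely a \emph{stationary} point of $F(\outer, \cdot)$ (stationarity holds on the active set $\{\constr[\numconstr] = 0\}$ by complementary slackness), and stationarity alone does not pin down whether the substitution recovers a maximum of $F$ in $\inner$ or only a critical value. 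Getting this \emph{direction} right — so that the envelope inherits convexity rather than concavity — is exactly where the full strength of (3$'$c) must be used, namely that the product form is convex in $\outer$ simultaneously for \emph{all} $\inner \in \innerset$, not merely at the optimizer where the term vanishes.

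Consequently, the main obstacle is this final reconciliation, and I would resolve it by arguing the midpoint inequality directly rather than through a black-box envelope lemma. Concretely, for $\outer_0, \outer_1 \in \outerset$, $\lambda \in [0,1]$, and $\outer_\lambda = (1-\lambda)\outer_0 + \lambda \outer_1$, I would take the midpoint optimizer $\inner^{*}(\outer_\lambda)$, use the separability from (3$'$b) and the monotonicity of each $\constr[\numconstr]$ in $\inner[\numconstr]$ (guaranteed feasible by Slater's condition) to restore feasibility one coordinate at a time at each endpoint $\outer_0$ and $\outer_1$, and bound the resulting change in $\obj$ by the convex-in-$\outer$ quantities that (3$'$c) controls, combining with convexity of $\obj$ in $\outer$ from (3$'$a) to obtain $\val(\outer_\lambda) \leq (1-\lambda)\val(\outer_0) + \lambda \val(\outer_1)$. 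Making this coordinatewise feasibility-restoration-and-bookkeeping argument tight — and in particular verifying that the cost of restoring feasibility is dominated termwise by the convex expressions in (3$'$c) — is the technical heart of the proof.
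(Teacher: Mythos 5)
Your proposal tracks the paper's own proof through every constructive step: existence of KKT multipliers from Slater's condition, coordinatewise decoupling of the stationarity condition via (3$'$b), the closed-form ratio $\langmult[\numconstr]^* = \frac{-\grad[{\inner[\numconstr]}]\obj(\outer,\inner)}{\grad[{\inner[\numconstr]}]\constr[\numconstr](\outer,\inner[\numconstr])}$, substitution into the Lagrangian to form your $F(\outer,\inner)$, and convexity of $F(\cdot,\inner)$ for each fixed $\inner$ from (3$'$a) and (3$'$c). You and the paper part ways exactly at what you call the crux: the paper simply asserts that plugging the $\inner$-dependent multiplier formula into the Lagrangian yields $\val(\outer) = \max_{\inner \in \innerset} F(\outer,\inner)$ and then concludes by the pointwise-maximum-of-convex-functions rule (invoking Danskin's theorem), whereas you decline to assert this identity—on the correct grounds that KKT stationarity only makes $\inner^*(\outer)$ a critical point of $F(\outer,\cdot)$, not a maximizer—and propose instead a direct midpoint inequality with coordinatewise feasibility restoration.

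That substitute argument is the gap: it is sketched but never executed, and you yourself label it ``the technical heart of the proof,'' so the proposal does not constitute a proof. Moreover, the difficulty you isolated cannot be absorbed by bookkeeping, because under the stated hypotheses the conclusion itself can fail. Take $X = [-1,1]$, $Y = [-10,10]$, $f(x,y) = -y^2 + y$, $g(x,y) = x - y$: assumptions (1), (2), (3$'$a)--(3$'$c) all hold (in particular $\frac{-\partial_y f}{\partial_y g}\, g = (1-2y)(x-y)$ is affine, hence convex, in $x$ for every fixed $y$), yet the value function is $V(x) = -x^2 + x$ for $x \in [-1, \nicefrac{1}{2}]$ and $V(x) = \nicefrac{1}{4}$ thereafter, which is concave, not convex. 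The mechanism is exactly the one you feared: here $F(x,y) = y^2 - 2xy + x = (y-x)^2 - x^2 + x$ is strictly \emph{convex} in $y$, the constrained optimizer $y^*(x) = x$ (for $x \le \nicefrac{1}{2}$) is the \emph{minimizer} of $F(x,\cdot)$, so the substitution $V(x) = F(x, y^*(x)) = \min_{y \in Y} F(x,y)$ traces a lower envelope of the convex family $\{F(\cdot,y)\}_{y}$, which has no reason to be convex, and $\max_{y \in Y} F(x,y) \neq V(x)$. The same example shows that the identity the paper asserts at this juncture is itself unjustified: the valid strong-duality identity freezes the multiplier at its optimal value $\langmult^*(\outer)$, but then the fixed-$\inner$ convexity in $\outer$ supplied by (3$'$c) is lost. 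In short, you located the weak point precisely, but repairing it requires strengthening the hypotheses, not tightening the bookkeeping, and your proposal as written leaves the theorem unproved.
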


\begin{proof}[Proof of \Cref{thm:convex-value-func}]
By the Maximum Theorem \cite{berge1997topological}, the outer player's value function 
%$\val(\outer) = \max_{\inner \in \innerset: \constr(\outer, \inner) \geq \zeros} \obj(\outer, \inner)$ 
is continuous.
% , and the inner player's solution correspondence $\innerset^*(\outer) = \argmax_{\inner \in \innerset: \constr(\outer, \inner) \geq \zeros} \obj(\outer, \inner)$ is non-empty, for all $\outer \in \outerset$. 

Define $\lang: \innerset \times \R_+^\numconstrs \times \outerset \to \R$ s.t.\ $\lang(\inner, \langmult; \outer) = \obj(\outer, \inner) + \sum_{\numconstr = 1}^\numconstrs \langmult[\numconstr] \constr[\numconstr](\outer, \inner)$ 
% and $h: \innerset \times \outerset \to \R$ s.t.\ $h(\inner; \outer) = \min_{\langmult \in \R_+^\numconstrs} \lang(\inner, \langmult; \outer )$
.
Since Slater's condition is satisfied, the KKT theorem \cite{kuhn1951kkt} applies, which means that for all $\outer \in \outerset$ and $\inner \in \innerset$, the optimal KKT multipliers $\langmult^*$ exist, and thus:
\begin{align}
    \grad[\inner] \obj(\outer, \inner) + \sum_{\numconstr = 1}^\numconstrs \langmult[\numconstr]^* \grad[\inner] \constr[\numconstr](\outer, \inner) = \zeros\\
    \grad[\inner] \obj(\outer, \inner) + \sum_{\numconstr = 1}^\numconstrs \langmult[\numconstr]^* \grad[\inner] \constr[\numconstr](\outer, \inner[\numconstr]) = \zeros && \text{(Assumption 3$'$b)} \\
    \grad[{\inner[\numconstr]}] \obj(\outer, \inner) + \sum_{\numconstr = 1}^\numconstrs \langmult[\numconstr]^* \grad[{\inner[\numconstr]}] \constr[\numconstr](\outer, \inner[\numconstr]) = \zeros && \\
    \langmult[\numconstr]^* = \frac{- \grad[{\inner[\numconstr]}] \obj(\outer, \inner)}{\grad[{\inner[\numconstr]}] \constr[\numconstr](\outer, \inner[\numconstr])} && \forall \numconstr \in [\numconstrs]
\end{align}

Plugging the optimal KKT multipliers back into the Lagrangian, we obtain:
\begin{align}
    \val(\outer) = \max_{\inner \in \innerset} \lang(\inner, \langmult^*(\outer, \inner); \outer) = \max_{\inner \in \innerset} \obj(\outer, \inner) + \sum_{\numconstr = 1}^\numconstrs \frac{- \grad[{\inner[\numconstr]}] \obj(\outer, \inner)}{\grad[{\inner[\numconstr]}] \constr[\numconstr](\outer, \inner[\numconstr])} \constr[\numconstr](\outer, \inner[\numconstr])
\end{align}

By Assumption 3$'$c, for all $\numconstr \in [\numconstrs]$, $\frac{- \grad[{\inner[\numconstr]}] \obj(\outer, \inner)}{\grad[{\inner[\numconstr]}] \constr[\numconstr](\outer, \inner[\numconstr])} \constr[\numconstr](\outer, \inner[\numconstr])$, and by Assumption 3$'$a, $\obj(\outer, \inner)$ are convex in $\outer$, for all $\inner \in \innerset$.
Therefore, $\obj(\outer, \inner) + \sum_{\numconstr = 1}^\numconstrs \frac{- \grad[{\inner[\numconstr]}] \obj(\outer, \inner)}{\grad[{\inner[\numconstr]}] \constr[\numconstr](\outer, \inner[\numconstr])} \constr[\numconstr](\outer, \inner[\numconstr])$ is convex in $\outer \in \outerset$, for all $\inner \in \innerset$.
Finally, by Danskin's theorem, $\val(\outer) = \max_{\inner \in \innerset} \obj(\outer, \inner) + \sum_{\numconstr = 1}^\numconstrs \frac{- \grad[{\inner[\numconstr]}] \obj(\outer, \inner)}{\grad[{\inner[\numconstr]}] \constr[\numconstr](\outer, \inner[\numconstr])} \constr[\numconstr](\outer, \inner[\numconstr])$ is convex as well.

% and are well-defined: i.e, there exists $c \in \R_+$ such that $\forall \numconstr \in [\numconstrs], 0 \leq \langmult[\numconstr]^* \leq c < \infty$.
% We can thus re-express the outer player's value function $\val(\outer)$ as the following Lagrangian saddle-point problems:
% \begin{align}
%     \val(\outer) =  \max_{\inner \in \innerset: \constr(\outer, \inner) \geq \zeros} \obj(\outer, \inner) = \max_{\inner \in \innerset} \min_{\langmult \in \R_+^\numconstrs} \lang(\inner, \langmult; \outer) = \max_{\inner \in \innerset}  \min_{\langmult \in [0, c]^\numconstrs} \lang(\inner, \langmult; \outer) = \max_{\inner \in \innerset} h(\inner; \outer) \enspace .
% \end{align}

% Notice that $\lang$ is concave in $\inner$ %\sdeni{convex in $\langmult$,}{} 
% and convex in $(\langmult, \outer)$, by \Cref{main-assum}. Since $h$ is the $\min$-projection of $\lang$ w.r.t. $\langmult$ onto the compact set $[0, c]^\numconstrs$, $h$ must be convex in $\outer$
% %\citeauthor{rockafellar2009variational}
% (\cite{rockafellar2009variational}, Proposition~2.22). 
% % Additionally, $h$ is concave in $\inner$, since the infimum, or in this case the minimum, of a convex function over a compact index set is convex. \deni{add citation}. 
% % Hence, $h$ is convex in $\outer$ and concave in $\inner$.
% Further, since $\val$ is the maximum of convex functions over a compact set, i.e., $ \max_{\inner \in \innerset} h(\inner; \outer)$, $\val$ must be convex (\cite{rockafellar2009variational}, Theorem~5.5). 
\end{proof}

\propstackelbergconvex*
\begin{proof}[Proof of \Cref{prop:stackelberg_convex}]
By \Cref{stackleberg-existence}, the set of Stackelberg equilibria of any min-max Stackelberg game is non-empty. Additionally, under \Cref{main-assum}, we have that for all $\outer \in \outers$, $\obj(\outer, \cdot)$ is concave, and $\{\inner \in \innerset \mid \constr(\outer, \inner) \geq \zeros\}$ is convex. Hence, by Theorem 2.6 of Rockafeller \cite{rockafellar2009variational}, the set of solutions $\innerset^*(\outer) = \argmax_{\inner \in \innerset: \constr(\outer, \inner) \geq \zeros} \obj(\outer, \inner)$ is compact- and convex- valued.
Similarly, by \Cref{thm:convex-value-func}, under \Cref{assum:existence}, $\max_{\inner \in \innerset: \constr(\outer, \inner) \geq \zeros} \obj(\outer, \inner)$ is continuous and convex. Hence, the set of solutions $\outerset^* = \argmin_{\outer \in \outerset} \max_{\inner \in \innerset: \constr(\outer, \inner) \geq \zeros} \obj(\outer, \inner)$ is compact- and convex-valued. Since the composition of two compact-convex-valued correspondences is again compact-convex-valued (Proposition 5.52 of Rockafeller \cite{rockafellar2009variational}), we conclude that the set of Stackelberg equilibria, namely $\outerset^*(\innerset^*)$, is compact and convex.
\end{proof}

\section{Envelope Theorem}
\label{sec-app:envelope}

Danskin's theorem \cite{danskin1966thm} offers insights into optimization problems of the form:
%\begin{align}
%\label{Danskins-thm-format}
    $\max_{\inner \in \innerset} \obj (\outer, \inner)$,
%\enspace ,
%\end{align}
where $\innerset \subset \R^\innerdim$ is compact and non-empty.
Among other things, Danskin's theorem allows us to compute the gradient of the objective function of this optimization problem with respect to $\outer$.
\begin{theorem}[Danskin's Theorem]
\label{Danskinsthm}
    %Consider \Cref{Danskins-thm-format}.
    Consider an optimization problem of the form:
    $\max_{\inner \in \innerset} \obj (\outer, \inner)$, where $\innerset \subset \R^\innerdim$ is compact and non-empty.
    Suppose that $\innerset$ is convex and that $\obj$ is concave in $\inner$. Let $\val(\outer) = \max_{\inner \in \innerset} \obj (\outer, \inner)$ and 
    % \amy{$\inners$?}
    $\innerset^*(\outer) = \argmax_{\inner \in \innerset} \obj (\outer, \inner)$. Then $\val$ is differentiable at $\widehat{\outer}$, if the solution correspondence $\innerset^*(\widehat{\outer})$ is a singleton: i.e., $\innerset^*(\widehat{\outer}) = \{ \inner^*(\widehat{\outer}) \}$. Additionally, the gradient at $\widehat{\outer}$ is given by
    %\begin{align}
        $\val^{\prime}(\widehat{\outer}) = \grad[\outer] \obj (\widehat{\outer}, \inner^*(\widehat{\outer}))$.
    %\end{align}
\end{theorem}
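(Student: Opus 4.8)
The plan is to prove Danskin's theorem by first computing the one-sided directional derivative of $\val$ in an arbitrary direction, establishing the formula $\val'(\widehat{\outer}; d) = \max_{\inner \in \innerset^*(\widehat{\outer})} \left< \grad[\outer]\obj(\widehat{\outer}, \inner), d\right>$, and then specializing to the singleton hypothesis $\innerset^*(\widehat{\outer}) = \{\inner^*(\widehat{\outer})\}$, where this expression collapses to a linear functional of $d$ and hence certifies differentiability. Throughout I treat $\grad[\outer]\obj$ as jointly continuous (consistent with the standing continuity assumptions of the paper), which is what makes the first-order expansions below uniform in the $\inner$-variable.

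First, the lower bound on the directional derivative is immediate. Fix a direction $d$ and any maximizer $\inner \in \innerset^*(\widehat{\outer})$. Since $\inner$ remains feasible we have $\val(\widehat{\outer}+td) \geq \obj(\widehat{\outer}+td, \inner)$, while $\obj(\widehat{\outer}, \inner) = \val(\widehat{\outer})$, so a first-order Taylor expansion of $\obj(\cdot, \inner)$ at $\widehat{\outer}$ gives $\frac{\val(\widehat{\outer}+td) - \val(\widehat{\outer})}{t} \geq \left< \grad[\outer]\obj(\widehat{\outer}, \inner), d\right> + o(1)$ as $t \to 0^+$. Letting $t \to 0^+$ and then taking the supremum over $\inner \in \innerset^*(\widehat{\outer})$ yields $\liminf_{t\to 0^+}\frac{\val(\widehat{\outer}+td)-\val(\widehat{\outer})}{t} \geq \sup_{\inner \in \innerset^*(\widehat{\outer})}\left<\grad[\outer]\obj(\widehat{\outer},\inner), d\right>$.

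The matching upper bound is the crux. I would take a sequence $t_n \to 0^+$ realizing the $\limsup$ of the difference quotients and pick $\inner_n \in \innerset^*(\widehat{\outer}+t_n d)$. By compactness of $\innerset$ a subsequence satisfies $\inner_n \to \bar\inner$, and by Berge's maximum theorem (already invoked elsewhere in the paper) the argmax correspondence $\innerset^*(\cdot)$ is upper hemicontinuous, so $\bar\inner \in \innerset^*(\widehat{\outer})$. Using $\val(\widehat{\outer}) \geq \obj(\widehat{\outer}, \inner_n)$ I bound $\val(\widehat{\outer}+t_n d) - \val(\widehat{\outer}) \leq \obj(\widehat{\outer}+t_n d, \inner_n) - \obj(\widehat{\outer}, \inner_n)$, and a mean-value expansion of $\obj(\cdot, \inner_n)$ together with the joint continuity of $\grad[\outer]\obj$ drives the difference quotient to $\left<\grad[\outer]\obj(\widehat{\outer}, \bar\inner), d\right> \leq \sup_{\inner \in \innerset^*(\widehat{\outer})}\left<\grad[\outer]\obj(\widehat{\outer}, \inner), d\right>$. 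Combining the two bounds, and noting that the supremum is attained since $\innerset^*(\widehat{\outer})$ is compact and the integrand continuous, gives the directional-derivative formula as a genuine $\max$. The main obstacle is exactly this step: controlling the drift of the maximizers $\inner_n$ as the parameter is perturbed, which is precisely what upper hemicontinuity of the argmax plus the uniformity furnished by joint continuity of $\grad[\outer]\obj$ is for.

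Finally, under the singleton hypothesis $\innerset^*(\widehat{\outer}) = \{\inner^*(\widehat{\outer})\}$, the formula reduces to $\val'(\widehat{\outer}; d) = \left<\grad[\outer]\obj(\widehat{\outer}, \inner^*(\widehat{\outer})), d\right>$ for every direction $d$. Since the right-hand side is linear in $d$, the one-sided directional derivatives in $+d$ and $-d$ are negatives of each other; combined with local Lipschitz continuity of $\val$ — which follows from the boundedness of $\grad[\outer]\obj$ on the compact set (neighborhood of $\widehat{\outer}$) $\times\, \innerset$ via the estimate $\val(\outer_1)-\val(\outer_2) \leq \obj(\outer_1,\inner_1^*)-\obj(\outer_2,\inner_1^*)$ — this upgrades G\^ateaux to Fr\'echet differentiability in finite dimensions. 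Hence $\val$ is differentiable at $\widehat{\outer}$ with $\val'(\widehat{\outer}) = \grad[\outer]\obj(\widehat{\outer}, \inner^*(\widehat{\outer}))$, as claimed.
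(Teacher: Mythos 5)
Your proposal is correct, but there is no in-paper proof to compare it against: the paper quotes Theorem~\ref{Danskinsthm} as a classical result, cites it to \cite{danskin1966thm}, and never proves it --- its own contribution in this direction is the subdifferential envelope theorem (\Cref{envelope-sd}), which it proves by an entirely different route (Lagrangian saddle-point reformulation plus the pointwise-maximum subdifferential rule), and the surrounding appendix text uses Danskin's theorem only as a foil, via \Cref{no-danskins-ex}, to show it fails under parameterized constraints.

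As a self-contained argument, what you wrote is the standard Danskin proof and it goes through: the lower bound from feasibility of a fixed maximizer, and the upper bound from $\inner_n \in \innerset^*(\widehat{\outer}+t_n d)$, compactness of $\innerset$, upper hemicontinuity of the argmax (Berge, which applies since $\obj$ is continuous and $\innerset$ is compact and constant), a mean-value expansion of $\obj(\cdot,\inner_n)$, and joint continuity of $\grad[\outer]\obj$, together yield $\val^{\prime}(\widehat{\outer}; d) = \max_{\inner \in \innerset^*(\widehat{\outer})} \left\langle \grad[\outer]\obj(\widehat{\outer},\inner), d \right\rangle$. The one step you assert without justification is the upgrade from G\^ateaux to Fr\'echet differentiability; it is true and deserves one line: the difference quotients $d \mapsto \left(\val(\widehat{\outer}+td)-\val(\widehat{\outer})\right)/t$ are equi-Lipschitz on the unit sphere (by your cross-evaluation Lipschitz bound on $\val$) and converge pointwise to a linear map, hence uniformly, which is exactly Fr\'echet differentiability in $\R^\outerdim$. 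Two mild remarks: first, the hypotheses that $\innerset$ is convex and $\obj$ is concave in $\inner$ are never used in your proof --- correctly so, since the directional-derivative formula requires neither; they are inert framing inherited from the paper's statement. Second, you implicitly need $\widehat{\outer}+td$ to remain in the domain for small $t>0$ in every direction, i.e., $\widehat{\outer}$ interior, but the paper's statement is equally loose on this point, so it is not a gap relative to what was asked.
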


Unfortunately, Danskin's theorem does not hold when the set $\innerset$ is replaced by a correspondence, %$\inners: \outerset \rightrightarrows \innerset$, 
which occurs in min-max Stackelberg games: i.e., when the inner problem is $\max_{\inner \in \innerset: \constr (\outer, \inner) \geq \zeros} \obj (\outer, \inner)$.

\begin{example}[Danskin's theorem does not apply to min-max Stackelberg games]
\label{no-danskins-ex}
Consider the optimization problem:
\begin{align}
    \max_{\inner[ ] \in \R :  \inner[ ] + \outer[ ] \geq 0} -\inner[ ]^2 + \inner[ ] + 2\outer[ ] + 2 \enspace .
\end{align}
 
The solution to this problem is unique, given any $\outer[ ] \in \outerset$, meaning the solution correspondence $\innerset^* (\outer[ ])$ is singleton-valued.
We denote this unique solution by $\inner[ ]^* (\outer[ ])$.
After solving, we find that
\begin{align}
    \inner[ ]^*(\outer[ ]) = \left\{\begin{array}{cc}
        \nicefrac{1}{2} & \text{ if } \outer[ ] \geq - \nicefrac{1}{2} \\
        -\outer[ ] & \text{ if } \outer[ ] < - \nicefrac{1}{2}
    \end{array}\right.
\end{align}
The value function $V(\outer[ ]) = \max_{\inner[ ] \in \R :  \inner[ ] + \outer[ ] \geq 0} -\inner[ ]^2 + \inner[ ] + 2\outer[ ] + 2$ is then given by:
\begin{align}
    V(\outer[ ]) &= f(\outer[ ], \inner[ ]^*(\outer[ ]))\\
    &=-\inner[ ]^*(\outer[ ])^2 + \inner[ ]^*(\outer[ ]) + 2\outer[ ] + 2\\
    &= \left\{\begin{array}{cc}
        -\nicefrac{1}{4} + \nicefrac{1}{2} + 2 \outer[ ] +2 & \text{ if } \outer[ ] \geq - \nicefrac{1}{2} \\
        -\outer[ ]^2 - \outer[ ] + 2\outer[ ] + 2 & \text{ if  }  \outer[ ] < - \nicefrac{1}{2}
    \end{array}\right.\\
    &= \left \{\begin{array}{cc}
         \nicefrac{9}{4} + 2 \outer[ ] & \text{ if } \outer[ ] \geq - \nicefrac{1}{2} \\
        -\outer[ ]^2 + \outer[ ]  + 2 & \text{ if  }  \outer[ ] < - \nicefrac{1}{2}
    \end{array} \right.
\label{value-func-example}
\end{align}
The derivative of this value function is:
\begin{align}
    \frac{\partial V}{\partial \outer[ ]} &= \left \{\begin{array}{cc}
         2 & \text{ if } \outer[ ] \geq - \nicefrac{1}{2} \\
        1 - 2 \outer[ ]  & \text{ if  }  \outer[ ] < - \nicefrac{1}{2}
    \end{array} \right.
\label{value-func-deriv-example}
\end{align}
However, the derivative predicted by Danskin's theorem is 2.
Hence, Danskin's theorem does not hold when the constraints are parameterized, i.e., when the problem is of the form $\min_{\inner \in \innerset: \constr (\outer, \inner)} \obj (\outer, \inner)$ rather than $\min_{\inner \in \innerset} \obj (\outer, \inner)$, where $\outerset \subset \R^\outerdim$, $\innerset \subset \R^\innerdim$,  and for all $\numconstr \in [\numconstrs]$, $\constr[\numconstr]: \outerset \times \innerset \to \R$ are continuous.

% \amy{what language do we want to use here? the constraints are parameterized? the constraint set is a correspondence? etc.?}

\textbf{N.B.} For simplicity, we do not assume the constraint set is compact in this example.
Compactness of the constraint set can be used to guarantee existence of a solution,
%for all $\outer \in \outerset$, 
but as a solution to this particular problem always exists, we can do away with this assumption.
%In sum, Danskin's theorem cannot be used to calculate the derivative of $\max_{\inner \in \innerset(\outer)} \obj (\outer, \inner)$ in $\min_{\outer \in \outerset} \max_{\inner \in \innerset(\outer)} \obj (\outer, \inner)$.
\end{example}

%An answer to Danskin's theorem not holding when the constraints are parameterized can be found in the mathematical economics literature.

The following theorem, due to \citet{milgrom2002envelope}, generalizes Danskin's theorem to handle parameterized constraints:

\begin{theorem}[Envelope Theorem \cite{milgrom2002envelope}]\label{envelope}
    Consider the maximization problem
    \begin{align}\label{envelope-max}
        \val(\outer) = \max _{\inner \in \innerset} \obj (\outer, \inner), \text { subject to } \constr[k](\outer, \inner) \geq 0, \text { for all } \numconstr =1, \ldots, \numconstrs \enspace ,
    \end{align}
    where $\innerset \subseteq \R^{\innerdim}$.
    
    Define the solution correspondence 
    % \amy{this is a correspndonce, so i think it really should be $\inners$?}
    $\innerset^*(\outer) = \argmax _{\inner \in \outerset:  \constr (\outer, \inner) \geq \zeros} \obj (\outer, \inner)$.
    If \Cref{main-assum} holds, then the value function $\val$ is absolutely continuous, and at any point $\widehat{\outer} \in \outerset$ where $\val$ is differentiable:
    \begin{align}
    \label{envelope-deriv}
        \grad[\outer] \val(\widehat{\outer}) = \grad[\outer] L(\inner^*(\widehat{\outer}), \langmult^*(\widehat{\outer}, \inner^*(\widehat{\outer}))), \widehat{\outer}) = \grad[\outer] \obj\left( \widehat{\outer}, \inner^{*}(\widehat{\outer})\right) +\sum_{\numconstr = 1}^{\numconstrs} \langmult[\numconstr]^*(\widehat{\outer}, \inner^*(\widehat{\outer})) \grad[\outer] \constr[k]\left(\widehat{\outer}, \inner^{*}(\widehat{\outer})\right) \enspace ,
    \end{align}
    where $\langmult^*(\widehat{\outer}, \inner^*(\widehat{\outer})) = \left(\langmult[1](\widehat{\outer}, \inner^*(\widehat{\outer})), \ldots, \langmult[\numconstrs]^*(\widehat{\outer}, \inner^*(\widehat{\outer})) \right)^T \in \langmults^*(\widehat{\outer}, \inner^*(\widehat{\outer}))$ are the KKT multipliers associated associated with $\inner^{*}(\widehat{\outer}) \in \innerset^*(\widehat{\outer})$. 
    % \amy{$\inners$?}
\end{theorem}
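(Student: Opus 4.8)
The plan is to establish the formula through Lagrangian duality and then, at a point where $\val$ is differentiable, to sandwich $\val$ below a fixed-multiplier upper envelope to which the ordinary Danskin's theorem (\Cref{Danskinsthm}) applies. First I would set up the Lagrangian $\lang(\inner, \langmult; \outer) = \obj(\outer, \inner) + \sum_{\numconstr = 1}^{\numconstrs} \langmult[\numconstr] \constr[\numconstr](\outer, \inner)$ and fix $\widehat{\outer} \in \outerset$. Because $\obj(\widehat{\outer}, \cdot)$ and each $\constr[\numconstr](\widehat{\outer}, \cdot)$ are concave in $\inner$ (part~3a of \Cref{main-assum}) and Slater's condition holds (part~1), strong duality holds for the inner program and optimal KKT multipliers $\langmult^*(\widehat{\outer}, \inner^*(\widehat{\outer})) \geq \zeros$ exist for every $\inner^*(\widehat{\outer}) \in \innerset^*(\widehat{\outer})$. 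Complementary slackness then gives $\val(\widehat{\outer}) = \obj(\widehat{\outer}, \inner^*(\widehat{\outer})) = \lang(\inner^*(\widehat{\outer}), \langmult^*(\widehat{\outer}, \inner^*(\widehat{\outer})); \widehat{\outer})$, and strong duality yields the representation $\val(\outer) = \min_{\langmult \geq \zeros} \max_{\inner \in \innerset} \lang(\inner, \langmult; \outer)$ for all $\outer \in \outerset$.

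Next I would define the fixed-multiplier upper envelope $h(\outer) = \max_{\inner \in \innerset} \lang(\inner, \langmult^*(\widehat{\outer}, \inner^*(\widehat{\outer})); \outer)$, holding the multipliers frozen at their optimal value at $\widehat{\outer}$. Feasibility of $\langmult^*(\widehat{\outer})$ in the dual and the min-max representation give $\val(\outer) \leq h(\outer)$ for every $\outer \in \outerset$, while dual optimality at $\widehat{\outer}$ gives $\val(\widehat{\outer}) = h(\widehat{\outer})$. The crucial structural point is that the feasible set of the maximization defining $h$ is the \emph{fixed} set $\innerset$, with no dependence on $\outer$; hence Danskin's theorem applies to $h$ even though it fails for $\val$ itself (\Cref{no-danskins-ex}). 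Together with the continuity of $\grad[\outer] \obj$ and $\grad[\outer] \constr[\numconstr]$ (part~2 of \Cref{main-assum}), this yields the directional derivative $h'(\widehat{\outer}; d) = \max_{\inner} \langle \grad[\outer] \lang(\inner, \langmult^*(\widehat{\outer}); \widehat{\outer}), d \rangle$ taken over the maximizers of $\lang(\cdot, \langmult^*(\widehat{\outer}); \widehat{\outer})$ on $\innerset$, among which $\inner^*(\widehat{\outer})$ sits by the saddle-point characterization.

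I would then run the sandwich argument. Since $h - \val \geq 0$ with equality at $\widehat{\outer}$, for each direction $d$ and small $t > 0$ we have $\frac{\val(\widehat{\outer} + t d) - \val(\widehat{\outer})}{t} \leq \frac{h(\widehat{\outer} + t d) - h(\widehat{\outer})}{t}$; letting $t \to 0^+$ and using differentiability of $\val$ at $\widehat{\outer}$ gives $\langle \grad[\outer] \val(\widehat{\outer}), d \rangle \leq h'(\widehat{\outer}; d)$. Applying the same bound to $-d$ and combining pins $\langle \grad[\outer] \val(\widehat{\outer}), d \rangle$ between the minimum and maximum of $\langle \grad[\outer] \lang(\inner, \langmult^*(\widehat{\outer}); \widehat{\outer}), d \rangle$ over the maximizers $\inner$, for \emph{every} $d$. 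As the left-hand side is linear in $d$, this forces $\grad[\outer] \val(\widehat{\outer}) = \grad[\outer] \lang(\inner^*(\widehat{\outer}), \langmult^*(\widehat{\outer}, \inner^*(\widehat{\outer})); \widehat{\outer})$; expanding $\grad[\outer] \lang$ and invoking complementary slackness produces exactly \eqref{envelope-deriv}.

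Finally, for absolute continuity I would argue that $\val$ is Lipschitz: under Slater's condition the optimal KKT multipliers are uniformly bounded (e.g., Chapter~VII, Theorem~2.3.3 of \cite{urruty1993convex}), and since $\grad[\outer] \obj, \grad[\outer] \constr[1], \ldots, \grad[\outer] \constr[\numconstrs]$ are continuous on the compact set $\outerset \times \innerset$, the Lagrangian-gradient expression is uniformly bounded; hence $\val$ has uniformly bounded difference quotients, is Lipschitz, and is therefore absolutely continuous. The main obstacle I anticipate is the possible non-uniqueness of the inner maximizer: when $\innerset^*(\widehat{\outer})$ is not a singleton, Danskin only delivers the convex-hull (subdifferential) form for $h$, so the two-sided directional-derivative comparison is essential to collapse this set to a single gradient at points where $\val$ is differentiable. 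Care is also needed because the frozen-multiplier envelope $h$ need not be convex in $\outer$, so one cannot appeal to convex subdifferential calculus and must argue directly through directional derivatives.
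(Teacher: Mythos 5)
The paper offers no in-house proof of this theorem---it is imported from \cite{milgrom2002envelope}---so the relevant comparison is with the paper's proof of the subdifferential version, \Cref{envelope-sd}, which likewise passes through the Lagrangian reformulation of $\val$ under Slater's condition. Your setup is sound up to the last step: strong duality gives $\val(\outer) = \min_{\langmult \geq \zeros} \max_{\inner \in \innerset} \lang(\inner, \langmult; \outer)$, the frozen-multiplier envelope $h(\outer) = \max_{\inner \in \innerset} \lang(\inner, \langmult^*(\widehat{\outer}, \inner^*(\widehat{\outer})); \outer)$ majorizes $\val$ with equality at $\widehat{\outer}$, and Danskin in directional-derivative form does apply to $h$ because its feasible set is the fixed $\innerset$. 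The genuine gap is the final ``forces'' step. Your two-sided comparison in $d$ and $-d$ yields, for every $d$, that $\left< \grad[\outer] \val(\widehat{\outer}), d \right>$ lies between the min and max of $\left< \grad[\outer] \lang(\inner, \langmult^*; \widehat{\outer}), d \right>$ over the maximizer set $M$ of the frozen-multiplier Lagrangian; quantified over all $d$, this is exactly the statement that $\grad[\outer] \val(\widehat{\outer}) \in \mathrm{conv}\left( \left\{ \grad[\outer] \lang(\inner, \langmult^*; \widehat{\outer}) : \inner \in M \right\} \right)$---membership in a convex hull, nothing more. Linearity of the left-hand side in $d$ does not collapse the hull: if $M$ contains two maximizers with distinct Lagrangian gradients $c_1 \neq c_2$, every point of the segment between $c_1$ and $c_2$ satisfies all your inequalities, so the argument cannot identify $\grad[\outer] \val(\widehat{\outer})$ with the gradient at your particular chosen $\inner^*(\widehat{\outer})$. (Note also that $M$ may strictly contain the feasible solution set $\innerset^*(\widehat{\outer})$.) This is precisely the non-uniqueness obstacle you flag in your closing paragraph, and the directional-derivative pincer does not overcome it: what must be shown is that at points of differentiability all these Lagrangian gradients coincide, which is the actual content of the theorem.

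The gap is repairable with the paper's own toolkit. \Cref{envelope-sd} establishes that $\subdiff[\outer] \val(\widehat{\outer})$ is the convex hull of the union, over all $\inner^*(\widehat{\outer}) \in \innerset^*(\widehat{\outer})$ and associated optimal multipliers, of the Lagrangian gradients $\grad[\outer] \obj( \widehat{\outer}, \inner^{*}(\widehat{\outer})) + \sum_{\numconstr = 1}^{\numconstrs} \langmult[\numconstr]^*(\widehat{\outer}, \inner^*(\widehat{\outer})) \grad[\outer] \constr[\numconstr](\widehat{\outer}, \inner^{*}(\widehat{\outer}))$; in particular, each such gradient is itself an element of $\subdiff[\outer] \val(\widehat{\outer})$. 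Since $\val$ is convex (\Cref{thm:convex-value-func}) and assumed differentiable at $\widehat{\outer}$, its subdifferential there is the singleton $\{\grad[\outer] \val(\widehat{\outer})\}$, which forces every generator of the hull---in particular the one at your chosen pair $(\inner^*(\widehat{\outer}), \langmult^*(\widehat{\outer}, \inner^*(\widehat{\outer})))$---to equal $\grad[\outer] \val(\widehat{\outer})$, giving \eqref{envelope-deriv} for an arbitrary selection. (The classical alternative in \cite{milgrom2002envelope} instead pinches $\val(\outer') - \val(\widehat{\outer})$ between Lagrangian differences evaluated at saddle points for both $\outer'$ and $\widehat{\outer}$, which requires a selection-continuity argument your construction was designed to avoid.) Your concluding Lipschitz argument for absolute continuity is correct and matches the paper's own bound $\lipschitz[\val]$ via bounded multipliers and compactness.
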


\section{Omitted Subdifferential Envelope Theorem Proof (\Cref{sec:envelope})}

\begin{proof}[Proof of \Cref{envelope-sd}]
As usual, let $\val(\outer) = \max_{\inner \in \innerset : \constr (\outer, \inner) \geq 0} \obj (\outer, \inner)$. First, note that \Cref{thm:convex-value-func} $\val$ is subdifferentiable as it is convex \cite{subgradients-notes}.
Reformulating the problem as a Lagrangian saddle point problem, for all $\widehat{\outer} \in \outerset$, it holds that:
\begin{align}
    \val(\widehat{\outer}) &= \max_{\inner \in \innerset: \constr (\widehat{\outer}, \inner) \geq 0} \obj (\widehat{\outer}, \inner)\\ 
    &= \max_{\inner \in \innerset} \min_{\langmult \in \R^\numconstrs_{++}} \left\{ \obj (\widehat{\outer}, \inner) + \sum_{\numconstr = 1}^\numconstrs \langmult[\numconstr] \constr[\numconstr](\widehat{\outer}, \inner) \right\}\label{eq:exists-langmult}
\end{align}
Since $\obj$ is continuous, $\innerset$ is compact, and $\constr[1], \hdots, \constr[\numconstrs]$ are continuous, for all $\widehat{\outer} \in \outerset$, there exists $\inner^{*}(\widehat{\outer}) \in \argmax_{\inner \in \innerset: \constr(\widehat{\outer}, \inner) \geq \zeros} \obj(\widehat{\outer}, \inner)$.
%\deni{Yes, because that guarantees that a solution for the Langrangian, i.e., eq 14 to exist }\deni{Didn't understand? Do you mean that it's weird that eq. 14 does not have the constraints? If so that is because eq.13 and 14 are equal so if $\inner$ exists for one then it exists for the other.}
%
Furthermore, as \Cref{main-assum} ensures that an interior solution exists, the Karush-Kuhn-Tucker Theorem \cite{kuhn1951kkt} applies, so for all $\widehat{\outer} \in \outerset$ and any associated $\inner^{*}(\widehat{\outer})$, there exists $\langmult(\widehat{\outer}, \inner^{*}(\widehat{\outer})) \in \R^\numconstrs$ 
that solves \Cref{eq:exists-langmult}.

Define the solution correspondence $\innerset^*(\outer) = \argmax_{\inner \in \innerset: \constr (\outer, \inner) \geq 0} \obj (\outer, \inner)$, and let $\langmults^*(\outer, \inner) = \argmin_{\langmult \in \R^\numconstrs_{+}} \left\{ \obj (\outer, \inner) + \sum_{\numconstr = 1}^\numconstrs \langmult[\numconstr] \constr[\numconstr](\outer, \inner) \right\}$.
We can then re-express the value function at $\widehat{\outer}$ as:
\begin{align*}
    \val(\widehat{\outer}) = \obj (\widehat{\outer}, \inner^*(\widehat{\outer})) + \sum_{\numconstr = 1}^\numconstrs \langmult[\numconstr]^*(\widehat{\outer}, \inner^*(\widehat{\outer})) \constr[\numconstr](\widehat{\outer}, \inner^*(\widehat{\outer})), && \forall \inner^*(\widehat{\outer}) \in \innerset^*(\widehat{\outer}), \langmult[\numconstr]^*(\widehat{\outer},  \inner^*(\widehat{\outer})) \in \langmults[\numconstr]^*(\widehat{\outer}, \inner^*(\widehat{\outer})) \enspace .
\end{align*}
Equivalently, we can take the maximum over $\inner^*$'s and $\langmult^*$'s to obtain:
\begin{align*}
    \val(\widehat{\outer})&= \max_{\inner^*(\widehat{\outer}) \in \innerset^*(\widehat{\outer})}  \max_{\langmult[\numconstr]^*(\widehat{\outer}, \inner^*(\widehat{\outer})) \in \langmults[\numconstr]^*(\widehat{\outer}, \inner^*(\widehat{\outer}))} \left\{ \obj (\widehat{\outer}, \inner) + \sum_{\numconstr = 1}^\numconstrs \langmult[\numconstr]^*(\widehat{\outer}, \inner^*(\widehat{\outer})) \constr[\numconstr](\widehat{\outer}, \inner) \right\} \enspace .
\end{align*}

Note that for fixed $\inner^*(\widehat{\outer}) \in \innerset^*(\widehat{\outer})$ and corresponding fixed $\langmult[\numconstr]^*(\widehat{\outer}, \inner^*(\widehat{\outer})) \in \langmults[\numconstr]^*(\widehat{\outer}, \inner^*(\widehat{\outer}))$, \\ $\obj (\widehat{\outer}, \inner) + \sum_{\numconstr = 1}^\numconstrs \langmult[\numconstr]^*(\widehat{\outer}, \inner^*(\widehat{\outer})) \constr[\numconstr](\widehat{\outer}, \inner^*(\widehat{\outer}))$ is differentiable, since $\funcs$ are differentiable.

Additionally, recall the pointwise maximum subdifferential property,%
\footnote{See, for example, \cite{subgradients-notes}.}
i.e., if $f(\outer) = \max_{\alpha \in \calA} f_\alpha(\outer)$ for a family of functions $\{f_\alpha\}_{\alpha \in \calA}$, then $\subdiff[\outer]f(\a) = \mathrm{conv} \left(\bigcup_{\alpha \in \calA} \left\{\subdiff[\outer]f_{\alpha \in \calA}(\a) \mid f_\alpha(\a) = f(\outer)  \right\}\right)$, which then gives:
\begin{align}
    \subdiff[\outer] \val(\widehat{\outer}) &= \subdiff[\outer] \left( \max_{\inner^*(\widehat{\outer}) \in \innerset^*(\widehat{\outer})}  \max_{\langmult[\numconstr]^*(\widehat{\outer}, \inner^*(\widehat{\outer})) \in \langmults[\numconstr]^*(\widehat{\outer}, \inner^*(\widehat{\outer}))} \left\{ \obj (\widehat{\outer}, \inner^*(\widehat{\outer}))  + \sum_{\numconstr = 1}^\numconstrs \langmult[\numconstr]^*(\widehat{\outer}, \inner^*(\widehat{\outer})) \constr[\numconstr](\widehat{\outer}, \inner^*(\widehat{\outer})) \right\} \right)\\
    &= \mathrm{conv} \left( \bigcup_{\inner^*(\widehat{\outer}) \in \innerset^*(\widehat{\outer})}      \bigcup_{\langmult[\numconstr]^*(\widehat{\outer}, \inner^*(\widehat{\outer})) \in \langmults[\numconstr]^*(\widehat{\outer}, \inner^*(\widehat{\outer}))} \subdiff[\outer] \left\{ \obj (\widehat{\outer}, \inner^*(\widehat{\outer}))  + \sum_{\numconstr = 1}^\numconstrs \langmult[\numconstr]^*(\widehat{\outer}, \inner^*(\widehat{\outer})) \constr[\numconstr](\widehat{\outer}, \inner^*(\widehat{\outer})) \right\}
    \right)\\
    &= \mathrm{conv} \left( \bigcup_{\inner^*(\widehat{\outer}) \in \innerset^*(\widehat{\outer})}      \bigcup_{\langmult[\numconstr]^*(\widehat{\outer}, \inner^*(\widehat{\outer})) \in \langmults[\numconstr]^*(\widehat{\outer}, \inner^*(\widehat{\outer}))} \left\{ \grad[\outer] \obj\left( \widehat{\outer}, \inner^{*}(\widehat{\outer})\right) + \sum_{\numconstr = 1}^{\numconstrs} \langmult[\numconstr]^*(\widehat{\outer}, \inner^*(\widehat{\outer})) \grad[\outer] \constr[k]\left(\widehat{\outer}, \inner^{*}(\widehat{\outer})\right) \right\}
    \right) \enspace .
\end{align}
\end{proof}

\section{Convergence Results for \Cref{sec:envelope}}
\label{sec-app:algos}

\begin{proof}[Proof of \Cref{l-mogd}]
By our subdifferential envelope theorem (\Cref{envelope-sd}), we have:
\begin{align}
\grad[\outer] \obj (\outer^{(\outeriter-1)}, \inner^{(\outeriter-1)}) + \sum_{\numconstr = 1}^\numconstrs \langmult[\numconstr]^{(\outeriter-1)} \grad[\outer]  \constr[\numconstr](\outer^{(\outeriter-1)}, \inner^{(\outeriter-1)}) 
&\in  \subdiff[\outer] \val(\outer^{(t-1)}) \\
&= \subdiff[\outer] \max_{\inner \in \innerset: \constr (\outer^{(\outeriter-1)}, \inner) \geq 0} \obj (\outer^{(\outeriter-1)}, \inner) \enspace .
\end{align}

For notational clarity, let $\subgrad (\outeriter-1) = \grad[\outer] \obj (\outer^{(\outeriter-1)}, \inner^{(\outeriter-1)}) + \sum_{\numconstr = 1}^\numconstrs \langmult[\numconstr]^{(\outeriter-1)} \grad[\outer]  \constr[\numconstr](\outer^{(\outeriter-1)}, \inner^{(\outeriter-1)})$.
Suppose that $\outer^{*} \in \argmin_{\outer \in \outerset} \max_{\inner \in \innerset: \constr (\outer, \inner) \geq \zeros} \obj (\outer, \inner)$.
Then:
\begin{align}
    &\left\| \outer^{(\outeriters)} - \outer^{*} \right\|^2 \\ 
    &= \left\| \project[\outerset] \left( \outer^{(\outeriters-1)} - \learnrate[\outeriters] \subgrad (\outeriters-1) \right) - \project[\outerset] \left( \outer^{*} \right) \right\|^2 \\
    &\leq \left\| \outer^{(\outeriters-1)} - \learnrate[\outeriters] \subgrad (\outeriters-1) -  \outer^{*} \right\|^2 \\
    &= \left\| \outer^{(\outeriters-1)} - \outer^{*} \right\|^2 - 2\learnrate[\outeriters] \left<\subgrad (\outeriters-1),\left( \outer^{(\outeriters-1)} - \outer^{*} \right) \right> + \learnrate[\outeriters][2] \left\| \subgrad (\outeriters-1) \right\|^2 \enspace ,
\end{align}

\noindent
where the first line follows from the subgradient descent rule and the fact that $\outer^{*} \in \outerset$;
the second, because the project operator is a non-expansion; and
the third, by the definition of the norm.

Let for any $\outeriter \in \N_+$, $\err[\outeriter] = \left<\subgrad (\outeriter) - \grad \val(\outer^{(\outeriter)}),\left( \outer^{(\outeriter)} - \outer^{*} \right) \right>$:
\begin{align}
    &= \left\| \outer^{(\outeriter-1)} - \outer^{*} \right\|^2 - 2\learnrate[\outeriter] \left<\grad \val(\outer^{(\outeriter- 1)}),\left( \outer^{(\outeriter-1)} - \outer^{*} \right) \right> - 2\learnrate[\outeriter] \left<\subgrad (\outeriter-1) - \grad \val(\outer^{(\outeriter- 1)}),\left( \outer^{(\outeriter-1)} - \outer^{*} \right) \right> \notag \\ 
    &+ \learnrate[\outeriter][2] \left\| \subgrad (\outeriter-1) \right\|^2 \\
    &= \left\| \outer^{(\outeriter-1)} - \outer^{*} \right\|^2 - 2\learnrate[\outeriter] \left<\grad \val(\outer^{(\outeriter- 1)}),\left( \outer^{(\outeriter-1)} - \outer^{*} \right) \right> - 2\learnrate[\outeriter] \err[\outeriter - 1] + \learnrate[\outeriter][2] \left\| \subgrad (\outeriter-1) \right\|^2 \\
    &\leq \left\| \outer^{(\outeriter-1)} - \outer^{*} \right\|^2 - 2\learnrate[\outeriter] \left( \val (\outer^{(\outeriter-1)}) - \val (\outer^{*}) \right) - 2\learnrate[\outeriter] \err[\outeriter - 1] + \learnrate[\outeriter][2] \left\| \subgrad (\outeriter-1) \right\|^2 \enspace ,
\end{align}
\noindent where the last line follows by the definition of the subgradient, i.e., $\left< \grad \val(\outer^{(\outeriter- 1)}), \left( \outer^{(\outeriter-1)} - \outer^{*} \right) \right> \geq \obj (\outer^{(\outeriter-1)}, \inner^{(\outeriter-1)}) - \obj (\outer^{*}, \inner^{(\outeriter-1)})$. 
Applying this inequality recursively, we obtain:
\begin{align}
    \left\| \outer^{(\outeriter)} - \outer^{*} \right\|^2 \leq \left\| \outer^{(0)} - \outer^{*} \right\|^2 - \sum_{\outeriter = 1}^{\outeriter} 2\learnrate[\outeriter] \left( \val (\outer^{(\outeriter-1)}) - \val (\outer^{*}) \right) - \sum_{\outeriter = 1}^{\outeriter} 2\learnrate[\outeriter] \err[\outeriter - 1] + \sum_{\outeriter = 1}^{\outeriter} \learnrate[\outeriter][2] \left\| \subgrad (\outeriter-1) \right\|^2 \enspace .\label{eq:leftover_rec}
\end{align}
Since $\left\| \outer^{(\outeriter)} - \outer^{*} \right\| \geq 0$, re-organizing, we have:
\begin{align}
    2 \sum_{\outeriter = 1}^{\outeriters} \learnrate[\outeriter] \left( \val (\outer^{(\outeriters-1)}) - \val (\outer^{*}) \right)  \leq  \left\| \outer^{(0)} - \outer^{*} \right\|^2 - \sum_{\outeriter = 1}^{\outeriters} 2\learnrate[\outeriters] \err[\outeriters - 1]  + \sum_{\outeriter = 1}^{\outeriters} \learnrate[\outeriter][2] \left\| \subgrad (\outeriter-1) \right\|^2 \enspace .
\end{align}

Let $(\bestiter[x][t], \bestiter[y][t]) = (\outer^{(k^*)}, \inner^{(k^*)})$ where $k^* \in \argmin_{ k \in [t]} \val (\outer^{(k)})$. Then:
\begin{align}
    & \sum_{\outeriter = 1}^{\outeriters} \learnrate[\outeriter] \left( \val (\outer^{(\outeriters-1)}) - \val (\outer^{*}) \right) \\
    % &\geq \sum_{\outeriter = 1}^{\outeriters} \learnrate[\outeriter] \left( \obj (\outer^{(\outeriter-1)}, \inner^{(\outeriter-1)}) - \max_{\inner \in \innerset : \constr (\outer^{*}, \inner) \geq \zeros} \obj (\outer^{*}, \inner) \right) \\
    &\geq \left( \sum_{\outeriter = 1}^{\outeriters} \learnrate[\outeriter] \right) \min_{\outeriter \in [\outeriters]} \left( \val (\outer^{(\outeriters-1)}) - \val (\outer^{*}) \right) \\
    &= \left( \sum_{\outeriter = 1}^{\outeriters} \learnrate[\outeriter] \right) \left( \val (\bestiter[x][\outeriters - 1]) - \val (\outer^{*}) \right)
\end{align}

\noindent
Combining the above inequality with \Cref{eq:leftover_rec}, we get the following bound:
\begin{align}
    \val (\bestiter[x][\outeriters - 1]) - \val (\outer^{*}) \leq \frac{\left\| \outer^{(0)} - \outer^{*} \right\|^2 - \sum_{\outeriter = 1}^{\outeriters} 2\learnrate[\outeriter] \err[\outeriter - 1]  + \sum_{\outeriter = 1}^{\outeriters} \learnrate[\outeriter][2]||\subgrad (\outeriter-1)||^2}{2 \left( \sum_{\outeriter = 1}^{\outeriters} \learnrate[\outeriter] \right)}
\end{align}

Now,
since the value function $\val$ is $\lipschitz[\val]$-Lipschitz continuous, where $\lipschitz[\val] = \max_{(\widehat{\outer}, \widehat{\inner}) \in \outerset \times \innerset} \left\| \grad[\outer] \obj\left( \widehat{\outer}, \inner^{*}(\widehat{\outer})\right) + \sum_{\numconstr = 1}^{\numconstrs} \langmult[\numconstr]^*(\widehat{\outer}, \inner^*(\widehat{\outer})) \grad[\outer] \constr[k]\left(\widehat{\outer}, \inner^{*}(\widehat{\outer})\right)\right\|$ and $\langmult^*(\widehat{\outer}, \inner^*(\widehat{\outer}))$ are the optimal KKT multipliers associated with $\inner^{*}(\widehat{\outer}) \in \argmax_{\inner \in \innerset:  \constr(\widehat{\outer}, \inner) \geq \zeros} \obj(\outer, \inner)$,  all the subgradients are bounded: i.e., for all $k \in \N, \left\| \subgrad (k-1) \right\| \leq \lipschitz[\val]$.
% Additionally, since for all $k \in \N$, $\inner^{(k)}$  satisfies $\obj (\bestiter[x][k], \bestiter[y][k]) \geq  \max_{\inner \in \inners(\bestiter[x][k-1])} \obj (\bestiter[x][k-1], \inner) - \delta$, we have:
So:
\begin{align}
    \val (\bestiter[x][T -1 ]) - \val (\outer^{*}) &\leq \frac{\left\| \outer^{(0)} - \outer^{*} \right\|^2 - \sum_{\outeriter = 1}^{\outeriters} 2\learnrate[\outeriter] \err[\outeriter - 1]  + \lipschitz[\val]^2 \sum_{\outeriter = 1}^{\outeriters} \learnrate[\outeriter][2]}{2 \left( \sum_{\outeriter = 1}^{\outeriters} \learnrate[\outeriter] \right)}\\
    & \frac{\left\| \outer^{(0)} - \outer^{*} \right\|^2 + \sum_{\outeriter = 1}^{\outeriters} 2\learnrate[\outeriter] \left|\err[\outeriter - 1] \right|  + \lipschitz[\val]^2 \sum_{\outeriter = 1}^{\outeriters} \learnrate[\outeriter][2]}{2 \left( \sum_{\outeriter = 1}^{\outeriters} \learnrate[\outeriter] \right)}
\end{align}

Letting $\overline{\err[\outeriters]} \doteq \frac{\sum_{\outeriter = 1}^{\outeriters} \learnrate[\outeriter] \left|\err[\outeriter - 1] \right|}{\sum_{\outeriter = 1}^{\outeriters} \learnrate[\outeriter]}$, we get: 
\begin{align}
    \obj (\bestiter[x][T], \bestiter[x][T]) - \min_{\outer \in \outerset} \max_{\inner \in \innerset : \constr (\outer, \inner) \geq 0}  \obj (\outer, \inner) &\leq \frac{\left\| \outer^{(0)} - \outer^{*} \right\|^2 + \lipschitz[\val] \sum_{\outeriter = 1}^{\outeriters} \learnrate[\outeriter][2]}{2 \left( \sum_{\outeriter = 1}^{\outeriters} \learnrate[\outeriter] \right)} + \overline{\err[\outeriters]}\label{eq:conv-upper-bound}
\end{align}

Recall the assumptions that the step sizes are square-summable but not summable, namely
%\begin{align}
    $\sum_{k = 1}^{\outeriters} \learnrate[k][2] \leq \infty$
    and
    %& 
    $\sum_{k = 1}^{\outeriters} \learnrate[k] = \infty$.
%\end{align}
%
Now as $\outeriters \to \infty$, \Cref{eq:conv-upper-bound} becomes:
\begin{align}
\lim_{k \to \infty} \obj (\bestiter[x][k], \bestiter[y][k]) \leq \min_{\outer \in \outerset} \max_{\inner \in \innerset : \constr (\outer, \inner) \geq 0}  \obj (\outer, \inner) + \overline{\err[\outeriters]}
\enspace .
\end{align}
% = \obj (\outer^{*}, \inner^{*})$ where $(\outer^{*}, \inner^{*})$ is a Stackelberg equilibrium $(\outers, \inners, \obj, \constr)$
%
We have thus proven the first inequality of the two that define an $(\overline{\err[\outeriters]},\delta)$-Stackelberg equilibrium.

The second inequality follows by construction, as for all $k \in \N$, the max oracle returns $\bestiter[y][k]$ that satisfies $\obj (\bestiter[x][k], \bestiter[y][k]) \geq  \max_{\inner \in \innerset : \constr (\outer, \inner) \geq 0}  \obj (\bestiter[x][k], \inner) - \delta$.
Thus, as $k \to \infty$, the best iterate converges to an $(\overline{\err[\outeriters]}, \delta)$-Stackelberg equilibrium.
%
% \begin{align}
%     \lim_{k \to \infty} \obj (\bestiter[x][k-1], \inner^{(k)}) - \delta  \leq \lim_{k \to \infty} \obj (\bestiter[x][k-1], \inner^{(k)}) = \max_{\inner \in \inners(\outer^{*})} \obj (\outer^{*}, \inner) + \delta 
% \end{align}
% \deni{Fix below here.}
Additionally, setting $\learnrate[\outeriter] = \frac{\left\| \outer^{(0)} - \outer^{*} \right\|}{ \sqrt{\outeriters}}$, we see that for all $\outeriter \in [T]$,
\begin{align}
    \obj (\bestiter[x][T], \bestiter[x][T]) - \min_{\outer \in \outerset} \max_{\inner \in \innerset : \constr (\outer, \inner) \geq 0} \obj (\outer, \inner)  &\leq  \frac{ \left\| \outer^{(0)} - \outer^{*} \right\|^2}{\sqrt{\outeriters}} + \overline{\err[\outeriters]} \enspace .
\end{align}
Likewise, setting $\varepsilon \leq \frac{\left\| \outer^{(0)} - \outer^{*} \right\|^2}{\sqrt{\outeriters}}$,
% for any $\varepsilon \geq 0$, 
% \overline{\err[\outeriters]}
% ,\amy{i cannot parse, do you mean something like: choosing $\varepsilon$ s.t.\@ $0 \leq \varepsilon \leq \frac{ \left\| \outer^{(0)} - \outer^{*} \right\|^2}{\sqrt{\outeriters}}$? or do you mean, setting $\varepsilon \leq \frac{\left\| \outer^{(0)} - \outer^{*} \right\|^2}{\sqrt{\outeriters}}$ for any $\outeriter \in [T]$,}
we obtain $\outeriters \leq \frac{\left\| \outer^{(0)} - \outer^{*} \right\|^4}{\varepsilon^2}$,
% and thus $N_{\outeriters} (\varepsilon) \in O (\varepsilon^{-2})$.
implying that
% Finally, applying the same logic as above to infer the second inequality in the Stackelberg equilibrium definition,
the best iterate converges to an $(\varepsilon + \overline{\err[\outeriters]}, \delta)$-Stackelberg equilibrium in $O(\varepsilon^{-2})$ iterations.

Finally, by the Cauchy-Schwarz inequality, $|\err[\outeriters - 1]| \leq \left\|\subgrad (\outeriter) - \grad \val(\outer^{(\outeriter)})\right\| \left\|\outer^{(\outeriter)} - \outer^{*} \right\| 
% \leq 2\lipschitz[\val] c
$, giving us the theorem statement.
\end{proof}

\begin{theorem}
\label{ls-mogd}
Suppose \Cref{mogd} is run on min-max Stackelberg game given by $(\outerset, \innerset, \obj, \constr)$ which satisfies \Cref{main-assum}, and that $\val$ is $\mu$-strongly convex in $\outer$. Then, if $(\bestiter[x][t], \bestiter[y][t]) \in \argmin_{(\outer^{(k)}, \inner^{(k)}) : k \in [t]} \obj (\outer^{(k)}, \inner^{(k)})$, for $\varepsilon \in (0,1)$, and for all $\outeriter \in \outeriters$ $\learnrate[\outeriter] = \frac{2}{\mu(\outeriter +1)}$, if we choose
%
%\begin{align*}
    $\outeriters \geq
    N_\outeriters (\varepsilon) \in O(\varepsilon^{-1})$,
%\end{align*}
%
then there exists an iteration $\outeriters^{*} \leq \outeriters$ s.t.\ $(\bestiter[x][\outeriters^{*}], \bestiter[y][\outeriters^{*}])$ is an $(\frac{8\lipschitz[\val](c)}{\mu}, \delta)$-Stackelberg equilibrium.
\end{theorem}

\begin{proof}[Proof of \Cref{ls-mogd}]
% By our subdifferential envelope theorem (\Cref{envelope-sd}), we have:
% \begin{align}
% \grad[\outer] \obj (\outer^{(\outeriter-1)}, \inner^{(\outeriter-1)}) + \sum_{\numconstr = 1}^\numconstrs \langmult[\numconstr]^{(\outeriter-1)} \grad[\outer]  \constr[\numconstr](\outer^{(\outeriter-1)}, \inner^{(\outeriter-1)}) 
% &\in  \subdiff[\outer] \val(\outer^{(t-1)}) \\ 
% &= \subdiff[\outer] \max_{\inner \in \innerset: \constr (\outer^{(\outeriter-1)}, \inner)  \geq 0} \obj (\outer^{(\outeriter-1)}, \inner) \enspace .
% \end{align}

For notational clarity, let $\subgrad (\outeriter-1) = \grad[\outer] \obj (\outer^{(\outeriter-1)}, \inner^{(\outeriter-1)}) + \sum_{\numconstr = 1}^\numconstrs \langmult[\numconstr]^{(\outeriter-1)} \grad[\outer]  \constr[\numconstr] (\outer^{(\outeriter-1)}, \inner^{(\outeriter-1)})$.
Suppose that $\outer^{*} \in \argmin_{\outer \in \outerset} \max_{\inner \in \innerset: \constr (\outer, \inner)} \obj (\outer, \inner)$.
Then, for all $\outeriter \in \N$ s.t.\ $\outeriter \geq 1$, we have:
\begin{align}
    & \left\| \outer^{(\outeriter)} - \outer^{*} \right\|^2 \\ 
    &= \left\| \project[\outerset] \left( \outer^{(\outeriter-1)} - \learnrate[\outeriter] \subgrad (\outeriter-1) \right) - \project[\outerset] \left( \outer^{*} \right) \right\|^2 \\
    &\leq \left\| \outer^{(\outeriter-1)} - \learnrate[\outeriter] \subgrad (\outeriter-1) -  \outer^{*} \right\|^2 \\
    &= \left\| \outer^{(\outeriter-1)} - \outer^{*} \right\|^2 - 2\learnrate[\outeriter] \left<\subgrad (\outeriter-1),\left( \outer^{(\outeriter-1)} - \outer^{*} \right) \right> + \learnrate[\outeriter][2] \left\| \subgrad (\outeriter-1) \right\|^2
\end{align}
where the first line follows from the subgradient method's update rule and the fact that $\outer^{*} \in \outerset$;
the second, because the project operator is a non-expansion; and
the third, by the definition of the norm.

Let for any $\outeriter \in \N_+$, $\err[\outeriter] = \left<\subgrad (\outeriter) - \grad \val(\outer^{(\outeriter)}),\left( \outer^{(\outeriter)} - \outer^{*} \right) \right>$:
\begin{align}
    &= \left\| \outer^{(\outeriter-1)} - \outer^{*} \right\|^2 - 2\learnrate[\outeriter] \left<\grad \val(\outer^{(\outeriter- 1)}),\left( \outer^{(\outeriter-1)} - \outer^{*} \right) \right> - 2\learnrate[\outeriter] \left<\subgrad (\outeriter-1) - \grad \val(\outer^{(\outeriter- 1)}),\left( \outer^{(\outeriter-1)} - \outer^{*} \right) \right> \notag \\ 
    &+ \learnrate[\outeriter][2] \left\| \subgrad (\outeriter-1) \right\|^2 \\
    &= \left\| \outer^{(\outeriter-1)} - \outer^{*} \right\|^2 - 2\learnrate[\outeriter] \left<\grad \val(\outer^{(\outeriters- 1})),\left( \outer^{(\outeriter-1)} - \outer^{*} \right) \right> - 2\learnrate[\outeriter] \err[\outeriter - 1] + \learnrate[\outeriter][2] \left\| \subgrad (\outeriter-1) \right\|^2 \\
    % &\leq \left\| \outer^{(\outeriter-1)} - \outer^{*} \right\|^2 - 2\learnrate[\outeriter] \left( \val (\outer^{(\outeriter-1)}) - \val (\outer^{*}) \right) - 2\learnrate[\outeriter] \err[\outeriter - 1] + \learnrate[\outeriter][2] \left\| \subgrad (\outeriter-1) \right\|^2 \enspace ,
% \end{align}
% \begin{align}
    &\leq \left\| \outer^{(\outeriter-1)} - \outer^{*} \right\|^2 - 2\learnrate[\outeriter] \left[ \frac{\mu}{2} \left\| \outer^{(\outeriter-1)} - \outer^{*} \right\|^2 + \val(\outer^{(\outeriter-1)}) - \val(\outer^{*})  \right] -2\learnrate[\outeriter] \err[\outeriter - 1] + \learnrate[\outeriter][2] \left\| \subgrad (\outeriter-1) \right\|^2 \\
    &= \left\| \outer^{(\outeriter-1)} - \outer^{*} \right\|^2 - \learnrate[\outeriter] \mu \left\| \outer^{(\outeriter-1)} - \outer^{*} \right\|^2 - 2\learnrate[\outeriter]  \left( \val (\outer^{(\outeriter-1)}) - \val (\outer^{*}) \right) -2\learnrate[\outeriter] \err[\outeriter - 1] + \learnrate[\outeriter][2] \left\| \subgrad (\outeriter-1) \right\|^2 \\
    &= \left(1 - \learnrate[\outeriter] \mu \right) \left\| \outer^{(\outeriter-1)} - \outer^{*} \right\|^2  - 2\learnrate[\outeriter]  \left( \val (\outer^{(\outeriter-1)}) - \val (\outer^{*}) \right) - 2\learnrate[\outeriter] \err[\outeriter - 1] + \learnrate[\outeriter][2] \left\| \subgrad (\outeriter-1) \right\|^2\label{eq:organized-bound-d2}
\end{align}
% where the first line follows from definitions, the second from the non-expansiveness of the projection operator, the third from algebra, the fourth from the definition of strong convexity, i.e., \\ 
% \begin{align}
% \left<\subgrad (\outeriters-1), \left( \outer^{(\outeriters-1)} - \outer^{*} \right) \right>
% &\geq \frac{\mu}{2} \left\| \outer^{(\outeriters-1)} - \outer^{*} \right\|^2 + \obj (\outer^{(\outeriters-1)}, \inner^{(\outeriters-1)}) - \obj (\outer^{*}, \inner^{(\outeriters-1)}) \enspace .
% \end{align}

Re-organizing \Cref{eq:organized-bound-d2}, yields:
\begin{align}
    &\val (\outer^{(\outeriter-1)}) - \val (\outer^{*}) \\
    &\leq \frac{1 - \learnrate[\outeriter] \mu}{2\learnrate[\outeriter]} \left\| \outer^{(\outeriter-1)} - \outer^{*} \right\|^2 - \frac{1}{2\learnrate[\outeriter]} \left\| \outer^{(\outeriter)} - \outer^{*} \right\|^2 - 2 \learnrate[\outeriter]\err[\outeriter - 1] + \frac{\learnrate[\outeriter]}{2} \left\| \subgrad (\outeriter-1) \right\|^2 \enspace .
\end{align}
Next, setting for all $\outeriter \in \N_+$ $\learnrate[\outeriter] = \frac{2}{\mu(\outeriter+1)}$, we get:
\begin{align}
    &\val (\outer^{(\outeriter-1)}) - \val (\outer^{*}) 
    \\
    &\leq \frac{\mu(\outeriter-1)}{4} \left\| \outer^{(\outeriter-1)} - \outer^{*} \right\|^2 - \frac{\mu(\outeriter+1)}{4} \left\| \outer^{(\outeriter)} - \outer^{*} \right\|^2 - \frac{4}{\mu(\outeriter+1)} \err[\outeriter - 1] + \frac{1}{\mu(\outeriter+1)} \left\| \subgrad (\outeriter-1) \right\|^2 \enspace .
\end{align}    
Multiplying both sides by $\outeriter$, we now have:
\begin{align}    
    &\outeriter \left( \val (\outer^{(\outeriter-1)}) - \val (\outer^{*}) \right) \\
    &\leq \frac{\mu\outeriter(\outeriter-1)}{4} \left\| \outer^{(\outeriter-1)} - \outer^{*} \right\|^2 - \frac{\mu\outeriter(\outeriter+1)}{4} \left\| \outer^{(\outeriter)} - \outer^{*} \right\|^2 - \frac{4\outeriter}{\mu(\outeriter+1)} \err[\outeriter - 1] + \frac{1}{\mu} \left\| \subgrad (\outeriter-1) \right\|^2 \enspace . 
\end{align}

Summing up across all iterations on both sides:
\begin{align}
    & \sum_{\outeriter = 1}^{\outeriters } \outeriter \left( \val (\outer^{(\outeriter - 1)}) - \val (\outer^{*}) \right) \\
    &\leq \sum_{\outeriter = 1}^{\outeriters } \frac{\mu\outeriter(\outeriter-1)}{4} \left\| \outer^{(\outeriter-1)} - \outer^{*} \right\|^2 - \sum_{\outeriter = 1}^{\outeriters } \frac{\mu\outeriter(\outeriter+1)}{4} \left\| \outer^{(\outeriter)} - \outer^{*} \right\|^2 - \sum_{\outeriter = 1}^{\outeriters }\frac{4\outeriter}{\mu(\outeriter+1)} \err[\outeriter - 1] + \sum_{\outeriter = 1}^{\outeriters}  \frac{1}{\mu} \left\| \subgrad (\outeriter-1) \right\|^2 \\
    &= \sum_{\outeriter = 1}^{\outeriters } \frac{\mu\outeriter(\outeriter-1)}{4} \left\| \outer^{(\outeriter-1)} - \outer^{*} \right\|^2 - \sum_{\outeriter = 2}^{\outeriters + 1} \frac{\mu(\outeriter-1) \outeriter}{4} \left\| \outer^{(\outeriter-1)} - \outer^{*} \right\|^2 - \sum_{\outeriter = 1}^{\outeriters }\frac{4\outeriter}{\mu(\outeriter+1)}\err[\outeriter - 1] + \sum_{\outeriter = 1}^{\outeriters}  \frac{1}{\mu} \left\| \subgrad (\outeriter-1) \right\|^2 \\
    &= - \frac{\mu\outeriters(\outeriters+1)}{4} \left\| \outer^{(\outeriters )} - \outer^{*} \right\|^2 -\sum_{\outeriter = 1}^{\outeriters }\frac{4\outeriter}{\mu(\outeriter+1)}\err[\outeriter - 1] + \sum_{\outeriter = 1}^{\outeriters}  \frac{1}{\mu} \left\| \subgrad (\outeriter-1) \right\|^2 \\
    &\leq -\sum_{\outeriter = 1}^{\outeriters }\frac{4\outeriter}{\mu(\outeriter+1)}\err[\outeriter - 1] + \sum_{\outeriter = 1}^{\outeriters}  \frac{1}{\mu} \left\| \subgrad (\outeriter-1) \right\|^2 \\
    & \leq -\sum_{\outeriter = 1}^{\outeriters }\frac{4\outeriter}{\mu(\outeriter+1)}\err[\outeriter - 1] + \frac{\outeriters \lipschitz[\val]}{\mu}\\
\end{align}
where the last line holds because the value function $\val$ is $\lipschitz[\val]$-Lipschitz continuous, where $\lipschitz[\val] = \max_{(\widehat{\outer}, \widehat{\inner}) \in \outerset \times \innerset} \left\| \grad[\outer] \obj\left( \widehat{\outer}, \inner^{*}(\widehat{\outer})\right) + \sum_{\numconstr = 1}^{\numconstrs} \langmult[\numconstr]^*(\widehat{\outer}, \inner^*(\widehat{\outer})) \grad[\outer] \constr[k]\left(\widehat{\outer}, \inner^{*}(\widehat{\outer})\right)\right\|$ and $\langmult^*(\widehat{\outer}, \inner^*(\widehat{\outer}))$ are the optimal KKT multipliers associated with $\inner^{*}(\widehat{\outer}) \in \argmax_{\inner \in \innerset:  \constr(\widehat{\outer}, \inner) \geq \zeros} \obj(\outer, \inner)$,  all the subgradients are bounded: i.e., for all $k \in \N, \left\| \subgrad (k-1) \right\| \leq \lipschitz[\val]$.

Additionally, note that by Cauchy-Schwarz we have, $|\err[\outeriters - 1]| \leq \left\|\subgrad (\outeriter) - \grad \val(\outer^{(\outeriter)})\right\| \left\|\outer^{(\outeriter)} - \outer^{*} \right\| \leq 2\lipschitz[\val] c$, where 
% $\lipschitz[\obj] = \max_{(\outer, \inner) \in \outerset \times \innerset} \obj(\outer, \inner)$, and
$c = \max_{\x, \x^\prime \in \outerset} \left\| \x - \x^\prime \right\|$.
\begin{align}
        \sum_{\outeriter = 1}^{\outeriters } \outeriter \left( \val (\outer^{(\outeriter - 1)}) - \val (\outer^{*}) \right) &\leq \sum_{\outeriter = 1}^{\outeriters } |\err[\outeriters - 1]|   \frac{4\outeriter}{\mu(\outeriter+1)} + \frac{\outeriters \lipschitz[\val]}{\mu}\\
        &\leq \sum_{\outeriter = 1}^{\outeriters } |\err[\outeriters - 1]|   \frac{4\outeriter}{\mu(\outeriter+1)} + \frac{\outeriters \lipschitz[\val]}{\mu}\enspace ,
\end{align}
Let $(\bestiter[x][t], \bestiter[y][t]) \in \argmin_{(\outer^{(k)}, \inner^{(k)}) : k \in [t]} \obj (\outer^{(k)}, \inner^{(k)})$.
Then:
\begin{align}
    \sum_{\outeriter = 1}^{\outeriters } \outeriter \left( \obj (\outer^{(\outeriter)}, \inner^{(\outeriter)}) - \obj (\outer^{*}, \inner^{(\outeriter)}) \right) &\leq \sum_{\outeriter = 1}^{\outeriters } |\err[\outeriters - 1]|   \frac{4\outeriter}{\mu(\outeriter+1)} + \frac{\outeriters}{\mu} \lipschitz[\val] \\
    \sum_{\outeriter = 1}^{\outeriters } \outeriter \left( \obj (\outer^{(\outeriter)}, \inner^{(\outeriter)}) - \max_{\inner \in \innerset: \constr (\outer^{*}, \inner) \geq \zeros} \obj (\outer^{*}, \inner) \right) &\leq \sum_{\outeriter = 1}^{\outeriters } |\err[\outeriters - 1]|   \frac{4\outeriter}{\mu(\outeriter+1)} +  \frac{\outeriters}{\mu} \lipschitz[\val] \\
    \left( \sum_{\outeriter = 1}^{\outeriters } \outeriter \right) \min_{\outeriter \in [\outeriters]} \left( \obj (\outer^{(\outeriter)}, \inner^{(\outeriter)}) - \max_{\inner \in \innerset: \constr (\outer^{*}, \inner) \geq \zeros} \obj (\outer^{*}, \inner) \right) &\leq \sum_{\outeriter = 1}^{\outeriters } |\err[\outeriters - 1]|   \frac{4\outeriter}{\mu(\outeriter+1)} + \frac{\outeriters}{\mu} \lipschitz[\val] \\
    \left( \sum_{\outeriter = 1}^{\outeriters } \outeriter \right)
    \left( \obj (\bestiter[x][\outeriters], \bestiter[y][\outeriters]) - \max_{\inner \in \innerset: \constr (\outer^{*}, \inner) \geq \zeros} \obj (\outer^{*}, \inner) \right) &\leq \sum_{\outeriter = 1}^{\outeriters } |\err[\outeriters - 1]|   \frac{4\outeriter}{\mu(\outeriter+1)} + \frac{\outeriters}{\mu} \lipschitz[\val] \\
    \left( \frac{(\outeriters + 1) \outeriters}{2} \right) \left( \obj (\bestiter[x][\outeriters], \bestiter[y][\outeriters]) - \max_{\inner \in \innerset: \constr (\outer^{*}, \inner) \geq \zeros} \obj (\outer^{*}, \inner) \right) &\leq \sum_{\outeriter = 1}^{\outeriters } |\err[\outeriters - 1]|   \frac{4\outeriter}{\mu(\outeriter+1)} + \frac{\outeriters \lipschitz[\val]}{\mu} \\
    \obj (\bestiter[x][\outeriters], \bestiter[y][\outeriters]) - \max_{\inner \in \innerset: \constr (\outer^{*}, \inner) \geq \zeros} \obj (\outer^{*}, \inner) &\leq O\left(\overline{\err[\outeriters - 1]}\right) + 
    % \frac{\sum_{\outeriter = 1}^{\outeriters } |\err[\outeriters - 1]|   \frac{4\outeriter}{\mu(\outeriter+1)}}{(\outeriters + 1) \outeriters} +
    \frac{\lipschitz[\val]}{\mu (\outeriters + 1)}  \enspace .
    % \max_{\inner \in \inners(\bestiter[x][\outeriters])} \obj (\bestiter[x][\outeriters], \inner) - \delta - \max_{\inner \in \innerset: \constr (\outer^{*}, \inner) \geq \zeros} \obj (\outer^{*}, \inner) &\leq \frac{\lipschitz[\val]}{\mu( \outeriters )} \\
    % \max_{\inner \in \inners(\bestiter[x][\outeriters])} \obj (\bestiter[x][\outeriters], \inner) - \max_{\inner \in \inners(\outer^{*})} \left( \obj (\outer^{*}, \inner) + \delta \right)&\leq \frac{\lipschitz[\val]}{\mu( \outeriters + 1)}
\end{align}
That is, as the number of iterations increases, the best iterate converges to a $(O\left(\overline{\err[\outeriters - 1]}\right), \delta)$-Stackelberg equilibrium.
Likewise, by the same logic we applied at the end of the proof as \Cref{l-mogd}, the best iterate converges to a $(\varepsilon + O\left(\overline{\err[\outeriters - 1]}\right), \delta)$-Stackelberg equilibrium in $O(\varepsilon^{-1})$ iterations.

\end{proof}

We now present a theorem which covers one of the cases given in \Cref{ngd-thms}. The proofs of the theorems that cover the other cases are similar to the proof below. We note that gradient ascent converges in $O(\varepsilon^{-1})$ iterations to a $\varepsilon$-maximum for a Lipschitz-smooth objective, and  in $O(\log(\varepsilon^{-1}))$ iterations to a $\varepsilon$-maximum for a Lipschitz-smooth and strongly-concave objective function \cite{boyd2004convex}.

\begin{theorem} \label{l-ngd}
Suppose \Cref{ngd} is run on a min-max Stackelberg game $(\outerset, \innerset, \obj, \constr)$ which satisfies \Cref{main-assum}. Suppose holds and that $\obj$ is $\lipschitz[\grad \obj]$-smooth. Let $(\bestiter[x][t], \bestiter[y][t]) \in \argmin_{(\outer^{(k)}, \inner^{(k)}) : k \in [t]} \obj (\outer^{(k)}, \inner^{(k)})$. For $\varepsilon \in (0,1)$, if we choose $\outeriters_\outer$ and $\outeriters_\inner$ s.t.\
%
%\begin{align}
    $\outeriters_\outer \geq N_{\outeriters_\outer}(\varepsilon) \in O(\varepsilon^{-2})$
    and 
    %\\
    $\outeriters_\outer \geq N_{\outeriters_\inner}(\varepsilon) \in O(\varepsilon^{-1})$,
%\end{align}
%
then there exists an iteration $\outeriters^{*} \leq \outeriters_\outer \outeriters_\inner = O(\varepsilon^{-3})$ s.t.\ $(\bestiter[x][\outeriters^{*}], \bestiter[y][\outeriters^{*}])$ is an $(\varepsilon, \varepsilon)$-Stackelberg equilibrium.
\end{theorem}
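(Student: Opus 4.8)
The plan is to reduce the analysis of \Cref{ngd} to that of \Cref{mogd} (\Cref{l-mogd}) by showing that the inner projected-gradient-ascent loop acts as an approximate max-oracle. Indeed, \Cref{ngd} is identical to \Cref{mogd} except that the call to the max-oracle is replaced by $\outeriters_\inner$ steps of projected gradient ascent on $\obj(\outer^{(\outeriter-1)}, \cdot)$ over the feasible set $\innerset \cap \{\inner : \constr(\outer^{(\outeriter-1)}, \inner) \geq \zeros\}$. Thus the crux is to quantify how accurately this inner loop solves the inner player's maximization, and then to invoke \Cref{l-mogd} with the resulting accuracy playing the role of $\delta$.

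First I would establish that the inner feasible set $\innerset \cap \{\inner : \constr(\outer^{(\outeriter-1)}, \inner) \geq \zeros\}$ is convex, nonempty, and compact: convexity follows from the concavity of each $\constr[\numconstr]$ in $\inner$ (\Cref{main-assum}, Part 3a) together with convexity of $\innerset$, nonemptiness from Slater's condition (\Cref{main-assum}, Part 1), and compactness from compactness of $\innerset$. On this set $\obj(\outer^{(\outeriter-1)}, \cdot)$ is concave and, since $\obj$ is $\lipschitz[\grad \obj]$-smooth, its gradient is Lipschitz. Standard convergence guarantees for projected gradient ascent on a smooth concave objective over a convex set \cite{boyd2004convex} then give that, with appropriate step sizes, after $\outeriters_\inner$ inner iterations $\max_{\inner} \obj(\outer^{(\outeriter-1)}, \inner) - \obj(\outer^{(\outeriter-1)}, \inner^{(\outeriter-1)}) = O(\outeriters_\inner^{-1})$. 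Consequently, choosing $\outeriters_\inner \geq N_{\outeriters_\inner}(\varepsilon) \in O(\varepsilon^{-1})$ guarantees that each returned $\inner^{(\outeriter-1)}$ is feasible and satisfies $\obj(\outer^{(\outeriter-1)}, \inner^{(\outeriter-1)}) \geq \val(\outer^{(\outeriter-1)}) - \varepsilon$; i.e., the inner loop is a $\delta$-max-oracle with $\delta = \varepsilon$.

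Next I would feed this oracle guarantee into \Cref{l-mogd}. With $\delta = \varepsilon$ fixed, the outer updates of \Cref{ngd} coincide exactly with those of \Cref{mogd}: the KKT multipliers $\langmult^{(\outeriter-1)}$ associated with $\inner^{(\outeriter-1)}$ furnish, via \Cref{envelope-sd}, the requisite subgradient $\subgrad(\outeriter-1)$ of $\val$, whose norm is bounded by $\lipschitz[\val]$. \Cref{l-mogd} then guarantees that after $\outeriters_\outer \geq N_{\outeriters_\outer}(\varepsilon) \in O(\varepsilon^{-2})$ outer iterations the best-iterate pair $(\bestiter[x][\outeriters_\outer], \bestiter[y][\outeriters_\outer])$ is an $(\varepsilon, \delta) = (\varepsilon, \varepsilon)$-Stackelberg equilibrium. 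Since each of the $O(\varepsilon^{-2})$ outer iterations runs an inner loop of $O(\varepsilon^{-1})$ iterations, the total iteration count is $\outeriters_\outer \outeriters_\inner = O(\varepsilon^{-3})$, and the sought iteration $\outeriters^{*}$ occurs within this budget.

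The main obstacle I anticipate is the inexactness of the inner maximizer: in \Cref{l-mogd} the vector $\subgrad(\outeriter-1)$ is evaluated at an exact maximizer, so that \Cref{envelope-sd} makes it a genuine element of $\subdiff[\outer]\val(\outer^{(\outeriter-1)})$, whereas here $\inner^{(\outeriter-1)}$ is only $\varepsilon$-optimal and $\subgrad(\outeriter-1)$ is merely an approximate subgradient. I would therefore have to verify that the MOGD argument tolerates this inexactness, concretely, that the only properties of the oracle used in \Cref{l-mogd} are (i) feasibility of $\inner^{(\outeriter-1)}$, (ii) the suboptimality bound $\obj(\outer^{(\outeriter-1)}, \inner^{(\outeriter-1)}) \geq \val(\outer^{(\outeriter-1)}) - \delta$, and (iii) the uniform bound $\left\| \subgrad(\outeriter-1) \right\| \leq \lipschitz[\val]$, all of which the inner loop supplies for $\delta = \varepsilon$. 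The delicate point is ensuring that the additive $\delta$-error enters the second defining inequality of the Stackelberg equilibrium additively and \emph{independently} of $\outeriters_\outer$, rather than accumulating across the $O(\varepsilon^{-2})$ outer steps, exactly as in the proof of \Cref{l-mogd}. The remaining cases in \Cref{ngd-thms} follow by the same reduction, substituting the strongly-concave inner rate $O(\log(\varepsilon^{-1}))$ and/or the strongly-convex outer rate $O(\varepsilon^{-1})$ (\Cref{ls-mogd}) as appropriate.
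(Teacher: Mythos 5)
Your proposal is correct and follows essentially the same route as the paper's own proof: the paper likewise treats the inner projected-gradient-ascent loop as a $\delta$-max-oracle (with $\delta = \varepsilon$, via the standard $O(\varepsilon^{-1})$ rate for smooth concave maximization), invokes \Cref{l-mogd} for the outer loop, and multiplies the two iteration counts to get $O(\varepsilon^{-3})$. In fact your write-up is more careful than the paper's two-sentence argument, since you explicitly flag the point both proofs must (and the paper silently does) lean on, namely that the vector fed to the outer update at an inexact inner maximizer is only an approximate subgradient of $\val$, and that the $\delta$-error enters the equilibrium guarantee additively rather than accumulating over outer iterations.
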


\begin{proof}[Proof of \Cref{l-mogd}]
Since $\obj$ is $\lipschitz[\grad \obj]$-smooth, it is well known that, for each outer iterate $\outer^{(\outeriter)}$, the inner gradient descent procedure returns an $\varepsilon$-maximum $\inner_{\epsilon}^{*}$ of $\obj (\outer^{(\outeriter)}, \inner)$ s.t.\ $\inner_{\epsilon}^{*} \in \innerset$ and $\constr(\outer^{(\outeriter)}, \inner_{\epsilon}^{*}) \geq \zeros$, in $O(\varepsilon^{-2})$ iterations \cite{boyd2004convex}. 
Combining the iteration complexity of the outer and inner loops using this result and \Cref{l-mogd}, we obtain an iteration complexity of $O(\varepsilon^{-2}) O(\varepsilon^{-1}) = O(\varepsilon^{-3})$.
\end{proof}
\section{An Economic Application: Details}\label{sec-app:fisher}

Our experimental goals were two-fold. 
First, we sought to understand the empirical convergence rate of our algorithms in different Fisher markets, in which the objective function in \Cref{fisher-program} satisfies different smoothness properties.
Second, we wanted to understand how the behavior of our two algorithms, max-oracle and nested gradient descent, differ in terms of the accuracy of the Stackelberg equilibria they find.

To answer these questions, we ran multiple experiments, each time recording the prices and allocations computed by
%both algorithms
\Cref{mogd}, with an \emph{exact} max-oracle, and by \Cref{ngd}, with nested gradient ascent, during each iteration $t$ of the main (outer) loop.
For each run of each algorithm on each market with each set of initial conditions, we then computed the objective function's value for the iterates, i.e., $\obj(\outer^{(t)}, \inner^{(t)})$, which we plot in \Cref{fig:experiments1}.

\paragraph{Hyperparameters}
We randomly initialized 500 different linear, Cobb-Douglas, Leontief Fisher markets, each with $5$ buyers and $8$ goods.
Buyer $\buyer$'s budget $\budget[\buyer]$ was drawn randomly from a uniform distribution ranging from $100$ to $1000$ (i.e., $U[100,1000]$), while each buyer $\buyer$'s valuation for good $\good$, $\valuation[i][j]$, was drawn randomly from $U[5,15]$.
We ran both algorithms for 500, 300, and 700 iterations%
\footnote{In Algorithm~\ref{mogd:fm}, $\outeriters \in \{500, 300, 700\}$, while in Algorithm~\ref{ngd:fm}, $\outeriters_\price \in \{500, 300, 700\}$.}
for linear, Cobb-Douglas, and Leontief Fisher markets, respectively.
We started both algorithms from two sets of initial conditions, one with high prices (drawn randomly $U[50,55]$), and a second, with low prices (drawn randomly from $U[5,15]$).
%, to understand the impact of the initial conditions on the algorithms' behaviors.
%
We opted for a learning rate of 5 for both algorithms, after manual hyper-parameter tuning, and picked a decay rate of $t^{-\nicefrac{1}{2}}$, based on our theory, so that $\learnrate[1] = 5, \learnrate[2] = 3.54, \learnrate[3] = 2.89, \learnrate[4] = 2.5, \learnrate[5] = 2.24, \ldots$.

\if 0
Finally, to understand how much precision was lost in the accuracy of the Stackelberg equilibrium outputed by \Cref{ngd} from not being able to compute a maximum of $\obj(\outer, \cdot)$ for given $\outer \in \outerset$, we have run a first order James' test to see if the equilibrium \sdeni{strategy}{actions} outputed by \Cref{mogd} and \Cref{ngd} were statistically distinguishable.
\fi

\paragraph{Programming Languages, Packages, and Licensing}
We ran our experiments in Python 3.7 \cite{van1995python}, using NumPy \cite{numpy}, Pandas \cite{pandas}, and CVXPY \cite{diamond2016cvxpy}.
\Cref{fig:experiments1} was graphed using Matplotlib \cite{matplotlib}.
To run the first order James test, we imported the data generated by our Python code into R \cite{R}.
Our R script manipulated the data using the Tidyverse package \cite{tidyverse}, and obtained the desired $p$-values using the STests package \cite{stestsR}.

R as a package is licensed under GPL-2 | GPL-3. Python software and documentation are licensed under the PSF License Agreement. Numpy is distributed under a liberal BSD license. Pandas is distributed under a new BSD license. Matplotlib only uses BSD compatible code, and its license is based on the PSF license. CVXPY is licensed under an APACHE license. Tidyverse is distributed under an MIT license.

\paragraph{Implementation Details}
In our execution of \Cref{mogd} for linear, Cobb-Douglas, and Leontief Fisher markets, we used an exact Max-Oracle, since there is a closed-form solution for the demand correspondence in these markets \cite{mas-colell}.

In our execution of \Cref{ngd}, in order to project each computed allocation onto the consumers' budget set, i.e., $\{\allocation \in \R^{\numbuyers \times \numgoods}_+ \mid \allocation\price \leq \budget\}$, we used the alternating projection algorithm \cite{boyd2004convex} for convex sets, and alternatively projected onto the sets $\R^{\numbuyers \times \numgoods}_+$ and $\{\allocation \in \R^{\numbuyers \times \numgoods} \mid \allocation\price \leq \budget\}$.

\paragraph{Computational Resources}
Our experiments were run on MacOS machine with 8GB RAM and an Apple M1 chip, and took about 2 hours to run. Only CPU resources were used.

\paragraph{Code Repository}
The data our experiments generated, as well as the code used to produce our visualizations and run the statistical tests, can be found in our \href{\rawcoderepo}{{\color{blue}code repository}}.

\subsection{Fisher Market Algorithms}

\begin{algorithm}
\caption{$\delta$-Approximate \emph{T\^atonnement} for Fisher Markets}
\label{mogd:fm}
\textbf{Inputs:} $\consumptions, \util, \budget, \learnrate, \outeriters, \price^{(0)}, \delta$ \\ 
\textbf{Output:} $(\allocation^*, \price^*)$
\begin{algorithmic}[1]
\For{$\outeriter = 1, \hdots, \outeriters$}
    \State For all $\buyer \in \buyers$, find $\allocation[\buyer]^{(t)}$ s.t.\ $\util[\buyer](\allocation[\buyer]^{(t)}) \geq \max_{\allocation[\buyer] : \allocation[\buyer] \cdot \price^{(\outeriter -1 )} \leq \budget[\buyer]} \util[\buyer](\allocation[\buyer] ) - \delta$ \& $\allocation[\buyer]^{(t)} \cdot \price^{(\outeriter -1 )} \leq \budget[\buyer]$
    \State Set $\price^{(\outeriter)} = \max \left\{   \price^{(\outeriter-1)} - \learnrate[t](1 - \sum_{\buyer \in \buyers} \allocation[\buyer]^{(t)}), 0 \right\}$
\EndFor
\State \Return $(\allocation^{(\outeriters)}, \price^{(\outeriters)})$
\end{algorithmic}
\end{algorithm}

\begin{algorithm}
\caption{$\delta$-Approximate Nested \emph{T\^atonnement} for Fisher Markets}
\label{ngd:fm}
\textbf{Inputs:} $\consumptions, \util, \budget, \learnrate[][\price], \learnrate[][\allocation] \outeriters_\price, \outeriters_\allocation, \price^{(0)}$ \\ 
\textbf{Output:} $(\allocation^*, \price^*)$
\begin{algorithmic}[1]
\For{$\outeriter = 1, \hdots, \outeriters_\price$}
    \For{$s = 1, \hdots, \outeriters_\allocation$}
        \State For all $\buyer \in \buyers$, $\allocation[\buyer]^{(t)} = \project[{\left\{\allocation[ ]: \allocation[ ] \cdot \price^{(\outeriter -1 )} \leq \budget[\buyer] \right\}}] 
        \left( \allocation[\buyer]^{(t)} + \learnrate[s][\allocation] \frac{\budget[\buyer]}{\util[\buyer](\allocation[\buyer]^{(t)})} \grad[{\allocation[\buyer]}] \util[\buyer](\allocation[\buyer]^{(t)}) \right)$
    \EndFor
    \State Set $\price^{(\outeriter)} = \max\left\{   \price^{(\outeriter-1)} - \learnrate[t][\price](1 - \sum_{\buyer \in \buyers} \allocation[\buyer]^{(t)}), 0 \right\}$
\EndFor
\State \Return $(\allocation^{(\outeriters)}, \price^{(\outeriters)})$ 
\end{algorithmic}
\end{algorithm}
\section{Additional Related Work}
\label{sec-app:related}

Much progress has been made recently in solving min-max games (with independent action sets), both in the convex-concave case and in non-convex-concave case.
For the former case, when $\obj$ is $\mu_\outer$-strongly-convex in $\outer$ and $\mu_\inner$-strongly-concave in $\inner$, \citeauthor{tseng1995variational} \cite{tseng1995variational}, \citeauthor{nesterov2006variational} \cite{nesterov2006variational}, and \citeauthor{gidel2020variational} \cite{gidel2020variational} proposed variational inequality methods, and \citeauthor{mokhtari2020convergence} \cite{mokhtari2020convergence}, gradient-descent-ascent (GDA)-based methods, all of which compute a solution in $\tilde{O}(\mu_\inner + \mu_\outer)$ iterations.
These upper bounds were recently complemented by the lower bound of $\tilde{\Omega}(\sqrt{\mu_\inner \mu_\outer})$, shown by \citeauthor{ibrahim2019lower}  \cite{ibrahim2019lower} and  \citeauthor{zhang2020lower}   \cite{zhang2020lower}.
Subsequently, \citeauthor{lin2020near}  \cite{lin2020near} and   \citeauthor{alkousa2020accelerated} \cite{alkousa2020accelerated} analyzed algorithms that converge in $\tilde{O}(\sqrt{\mu_\inner \mu_\outer})$ and $\tilde{O}(\min\left\{\mu_\outer \sqrt{\mu_\inner}, \mu_\inner \sqrt{\mu_\outer} \right\})$ iterations, respectively. 

For the special case where $\obj$ is $\mu_\outer$-strongly convex in $\outer$ and linear in $\inner$, \citeauthor{juditsky2011first}  \cite{juditsky2011first},  \citeauthor{hamedani2018primal}  \cite{hamedani2018primal}, and \citeauthor{zhao2019optimal}  \cite{zhao2019optimal} all present methods that converge to an $\varepsilon$-approximate solution in $O(\sqrt{\nicefrac{\mu_\outer}{\varepsilon}})$ iterations.
When the strong concavity or linearity assumptions of $\obj$ on $\inner$ are dropped, and
$\obj$ is assumed to be $\mu_\outer$-strongly-convex in $\outer$ but only concave in $\inner$, \citeauthor{thekumparampil2019efficient} \cite{thekumparampil2019efficient} provide an algorithm that converges to an $\varepsilon$-approximate solution in $\tilde{O}(\nicefrac{\mu_\outer}{\varepsilon})$ iterations, and \citeauthor{ouyang2018lower} \cite{ouyang2018lower} provide a lower bound of $\tilde{\Omega}\left(\sqrt{\nicefrac{\mu_\outer}{\varepsilon}}\right)$ iterations on this same computation.
\citeauthor{lin2020near} then went on to develop a faster algorithm, with iteration complexity of $\tilde{O}\left(\sqrt{\nicefrac{\mu_\outer}{\varepsilon}}\right)$, under the same conditions.

When $\obj$ is simply assumed to be convex-concave, \citeauthor{nemirovski2004prox} \cite{nemirovski2004prox}, \citeauthor{nesterov2007dual} \cite{nesterov2007dual}, and \citeauthor{tseng2008accelerated} \cite{tseng2008accelerated} describe algorithms that solve for an $\varepsilon$-approximate solution with $\tilde{O}\left(\varepsilon^{-1}\right)$ iteration complexity, and \citeauthor{ouyang2018lower} \cite{ouyang2018lower} prove a corresponding lower bound of $\Omega(\varepsilon^{-1})$.

When $\obj$ is assumed to be non-convex-$\mu_\inner$-strongly-concave, and the goal is to compute a first-order Nash, \citeauthor{sanjabi2018stoch} \cite{sanjabi2018stoch} provide an algorithm that converges to $\varepsilon$-an approximate solution in $O(\varepsilon^{-2})$ iterations.
\citeauthor{jin2020local} \cite{jin2020local}, \citeauthor{rafique2019nonconvex} \cite{rafique2019nonconvex}, \citeauthor{lin2020gradient} \cite{lin2020gradient}, and \citeauthor{lu2019block} \cite{lu2019block} provide algorithms that converge in $\tilde{O}\left(\mu_\inner^2 \varepsilon^{-2}\right)$ iterations, while \citeauthor{lin2020near} \cite{lin2020near} provide an even faster algorithm, with an iteration complexity of $\tilde{O}\left(\sqrt{\mu_\inner} \varepsilon^{-2}\right)$.

%\deni{We use the word local for Stackelberg and that definition is indeed a local definition, but first-order Nash is not really local, it is a bit weirder than that, i.e., it is a Nash eqm in which players cannot make improvement in utility by only making deviations based on first order information *about the objective function*.} \amy{we have local minimax in the last paragraph of the conclusion.} \deni{Local minimax is actually local Stackelberg!!! A minimax point is what Jordan calls Stackelberg equilibria and then Dai et al. introduce a local version of minimax, which is essentially local Stackelberg! So for them when you say a minimax point they just mean a solution to the min-max problem but not necessarily to the max-min problem. Note that their definition is different than Costis' where costs defines a minimax point as a saddle point}.
%\denit{A first-order Nash is a stationary point of the objective function for both players, while a local Stackelberg eqm is a point for which the value function's gradient is 0 for the min player and the gradient of the objective is 0 for the max player. The difference between the two are basically simultaneous game vs. sequential game.}

When $\obj$ is non-convex-non-concave and the goal to compute is an approximate first-order Nash equilibrium, \citeauthor{lu2019block} \cite{lu2019block} provide an algorithm with iteration complexity $\tilde{O}(\varepsilon^{-4})$, while \citeauthor{nouiehed2019solving} \cite{nouiehed2019solving} provide an algorithm with iteration complexity $\tilde{O}(\varepsilon^{-3.5})$. More recently, \citeauthor{ostrovskii2020efficient} \cite{ostrovskii2020efficient} and \citeauthor{lin2020near} \cite{lin2020near} proposed an algorithm with iteration complexity $\tilde{O}\left(\varepsilon^{-2.5}\right)$.

When $\obj$ is non-convex-non-concave and the desired solution concept is a ``local'' Stackelberg equilibrium, \citeauthor{jin2020local} \cite{jin2020local}, \citeauthor{rafique2019nonconvex} \cite{rafique2019nonconvex}, and \citeauthor{lin2020gradient} \cite{lin2020gradient} provide algorithms with a $\tilde{O}\left( \varepsilon^{-6} \right)$ complexity.
More recently, \citeauthor{thekumparampil2019efficient} \cite{thekumparampil2019efficient}, \citeauthor{zhao2020primal} \cite{zhao2020primal}, and \citeauthor{lin2020near} \cite{lin2020near} have proposed algorithms that converge to an $\varepsilon$-approximate solution in $\tilde{O}\left( \varepsilon^{-3}\right)$ iterations.

We summarize the literature pertaining to the convex-concave and the non-convex-concave settings in Tables~\ref{tab:fixed-convex-concave} and~\ref{tab:fixed-nonconvex-concave}, respectively.

\renewcommand*\arraystretch{1.5}
\begin{table}[H]
    \centering
    \caption{Iteration complexities for min-max games (with independent strategy sets) in convex-concave settings. Note that these results assume that the objective function is Lipschitz-smooth.} \label{tab:fixed-convex-concave}
    \begin{tabular}{|c|c|c|}\hline
    Setting & Reference & Iteration Complexity \\ \hline
    \multirow{8}{*}{$\mu_\outer$-Strongly-Convex-$\mu_\inner$-Strongly-Concave} & \cite{tseng1995variational} & \multirow{4}{*}{$\tilde{O}\left( \mu_\outer + \mu_\inner\right)$} \\\cline{2-2}
         & \cite{nesterov2006variational}  & \\ \cline{2-2}
         & \cite{gidel2020variational}     & \\ \cline{2-2}
         & \cite{mokhtari2020convergence}  &  \\ \cline{2-3}
         & \cite{alkousa2020accelerated}   & $\tilde{O}\left(\min \left\{\mu_\outer \sqrt{\mu_\inner}, \mu_\inner \sqrt{\mu_\outer} \right\} \right)$\\ \cline{2-3}
         & \cite{lin2020near}              & $\tilde{O}(\sqrt{\mu_\outer \mu_\inner})$ \\ \cline{2-3}
         & \cite{ibrahim2019lower} & $\tilde{\Omega}(\sqrt{\mu_\outer \mu_\inner})$\\ \cline{2-2}
         & \cite{zhang2020lower} & \\ \hline \hline
    \multirow{3}{*}{$\mu_\outer$-Strongly-Convex-Linear}    & \cite{juditsky2011first} & \multirow{3}{*}{$O\left( \sqrt{\nicefrac{\mu_\outer}{\varepsilon}}\right)$} \\\cline{2-2}
    & \cite{hamedani2018primal} & \\\cline{2-2}
    & \cite{zhao2019optimal}& \\\hline \hline
    \multirow{3}{*}{$\mu_\outer$-Strongly-Convex-Concave} & \cite{thekumparampil2019efficient} & $\tilde{O}\left( \nicefrac{\mu_\outer }{\sqrt{\varepsilon}} \right)$ \\ \cline{2-3}
    & \cite{lin2020near} & $\tilde{O}(\sqrt{\nicefrac{\mu_\outer}{\varepsilon}})$ \\ \cline{2-3}
    & \cite{ouyang2018lower} & $\tilde{\Omega}\left( \sqrt{\nicefrac{\mu_\outer}{\varepsilon}}\right)$ \\ \hline \hline
    \multirow{5}{*}{Convex-Concave} & \cite{nemirovski2004prox} & \multirow{2}{*}{$O\left( \varepsilon^{-1}\right)$} \\ \cline{2-2}
    & \cite{nesterov2007dual} & \\ \cline{2-2}
    & \cite{tseng2008accelerated} & \\ \cline{2-3}
    & \cite{lin2020near} &  $\tilde{O}\left(\varepsilon^{-1}\right)$\\ \cline{2-3}
    & \cite{ouyang2018lower} & $\Omega(\varepsilon^{-1})$ \\ \hline 
    \end{tabular}
    \renewcommand*\arraystretch{1}
\end{table}

\begin{table}[H]
    \centering
    \caption{Iteration complexities for min-max games (with independent strategy sets) in non-convex-concave settings. Note that although all these results assume that the objective function is Lipschitz-smooth, some authors make additional assumptions: e.g., \cite{nouiehed2019solving} obtain their result for objective functions that satisfy the Lojasiwicz condition.}
    \label{tab:fixed-nonconvex-concave}
    \renewcommand*\arraystretch{1.5}
    \begin{tabular}{|c|c|c|}\hline
    Setting & Reference & Iteration Complexity\\ \hline
        \multirow{5}{*}{\makecell{Nonconvex-$\mu_\inner$-Strongly-Concave,\\ First Order Nash Equilibrium \\ or Local Stackelberg Equilibrium}} & \cite{jin2020local} & \multirow{4}{*}{$ \tilde{O}(\mu_\inner^2 \varepsilon^{-2})$} \\
         & \cite{rafique2019nonconvex} & \\ \cline{2-2}
         & \cite{lin2020gradient}  & \\ \cline{2-2}
         & \cite{lu2019block} & \\ \cline{2-3}
         & \cite{lin2020near} & $\tilde{O}\left( \sqrt{\mu_\inner} \varepsilon^{-2} \right)$\\ \hline \hline
        \multirow{4}{*}{\makecell{Nonconvex-Concave,\\ First Order Nash Equilibrium}} & \cite{lu2019block}  & $\tilde{O}\left(\varepsilon^{-4}\right)$ \\ \cline{2-3}
        & \cite{nouiehed2019solving} & $\tilde{O}\left( \varepsilon^{-3.5}\right)$ \\ \cline{2-3}
        & \cite{ostrovskii2020efficient} & \multirow{2}{*}{$\tilde{O}\left( \varepsilon^{-2.5}\right)$} \\ \cline{2-2}
        & \cite{lin2020near} &  \\ \hline \hline
        \multirow{6}{*}{\makecell{Nonconvex-Concave,\\ Local Stackelberg Equilibrium}} & \cite{jin2020local} & \multirow{3}{*}{$\tilde{O}(\varepsilon^{-6})$}\\  \cline{2-2}
        & \cite{nouiehed2019solving} & \\ \cline{2-2}
        & \cite{lin2020near} & \\ \cline{2-3}
        & \cite{thekumparampil2019efficient} & \multirow{3}{*}{$\tilde{O}(\varepsilon^{-3})$}\\ \cline{2-2}
        & \cite{zhao2020primal} & \\
        & \cite{lin2020near} & \\ \hline 
    \end{tabular}
    \renewcommand*\arraystretch{1}
\end{table}

\end{document}